\pgfplotsset{compat=1.16}
\newcommand*\dd{\mathop{}\!\mathrm{d}}
	\newtheorem{theorem}{Theorem}[section]
	\newtheorem{lemma}[theorem]{Lemma} 
	\newtheorem{proposition}[theorem]{Proposition}
    \newtheorem{corollary}[theorem]{Corollary}
	\newtheorem{example}[theorem]{Example}
	\newtheorem*{example*}{Example}
\newtheorem{assumpL}{}
\newtheorem{assumpK}{}
\newtheorem{assumpT}{}
\newenvironment{supplement}{%
    \clearpage
    \thispagestyle{empty}
    \begin{center}
      \huge\bfseries Supplement 
    \end{center}
    \par\bigskip\par
    
    % Reset master counters
    \setcounter{section}{0}
    \setcounter{table}{0}
    \setcounter{figure}{0}
    \setcounter{equation}{0}
    
    % Redefine numbering format for the supplement. 
    % We only need to change the top-level counters. The theorem-like
    % environments will automatically follow the section numbering.

  }
  {}
    \DeclareMathOperator*{\argmin}{\arg\!\min}
    \newtheorem*{assumptions*}{\assumptionnumber}
\providecommand{\assumptionnumber}{}
\definecolor{cyan}{cmyk}{1, 0.4, 0, 0} 
\newcommand{\sumn}{\sum_{i=1}^n}
\newcommand{\var}{\operatorname{var}}
\definecolor{mypink}{RGB}{219, 48, 122}
\def\XXint#1#2#3{{\setbox0=\hbox{$#1{#2#3}{\int}$ }
\vcenter{\hbox{$#2#3$ }}\kern-.6\wd0}}
 \def\thanks#1{\protected@xdef\@thanks{\@thanks
        \protect\footnotetext{#1}}}
\newcommand{\blind}{1}
\begin{document}

\def\spacingset#1{\renewcommand{\baselinestretch}%
{#1}\small\normalsize} \spacingset{1}

%%%%%%%%%%%%%%%%%%%%%%%%%%%%%%%%%%%%%%%%%%%%%%%%%%%%%%%%%%%%%%%%%%%%%%%%%%%%%%

\if1\blind
{
  \title{\Large \bfseries A Test for Jumps in Metric-Space Conditional Means \\[0.5em]  \vspace{4mm}
 \normalsize With Applications to Compositional and Network Data}
  \author{\normalsize David Van Dijcke\thanks{
     \footnotesize I thank Zach Brown, Florian Gunsilius, Andreas Hagemann, Eva Janssens, Harry Kleyer, Mark Polyak, Konstantin Sonin, Kaspar Wuthrich, and participants at the Michigan Econometrics-IO Workshop, and Econometrics, IO, and Labor Lunches for helpful discussion and comments. I gratefully acknowledge support from the Rackham Pre-Doctoral Fellowship at the University of Michigan. This research was supported in part through computational resources and services provided by Advanced Research Computing at the University of Michigan, Ann Arbor. All errors are mine. An R package accompanying this paper is available at \url{https://github.com/Davidvandijcke/frechesTest}.} \\ \vspace{-2mm}
    Department of Economics, University of Michigan} 
    \date{\vspace{-2mm} \normalsize \today}
  \maketitle
} \fi

\if0\blind
{
  \bigskip
  \bigskip
  \bigskip
  \begin{center}
    {\LARGE\bf Title}
\end{center}
  \medskip
} \fi
\vspace{-2em}
\begin{abstract}

Standard methods for detecting discontinuities in conditional means are not applicable to outcomes that are complex, non-Euclidean objects like distributions, networks, or covariance matrices. This article develops a nonparametric test for jumps in conditional means when outcomes lie in a non-Euclidean metric space. Using local Fr\'echet regression, the method estimates a mean path on either side of a candidate cutoff. This extends existing $k$-sample tests to a non-parametric regression setting with metric-space valued outcomes. I establish the asymptotic distribution of the test and its consistency against contiguous alternatives. For this, I derive a central limit theorem for the local estimator of the conditional Fr\'echet variance and a consistent estimator of its asymptotic variance. Simulations confirm nominal size control and robust power in finite samples. Two empirical illustrations demonstrate the method's ability to reveal discontinuities missed by scalar-based tests. I find sharp changes in (i) work-from-home compositions at an income threshold for non-compete enforceability and (ii) national input-output networks following the loss of preferential U.S. trade access. These findings show the value of analyzing regression outcomes in their native metric spaces.
\end{abstract}

\noindent%
 {\it Keywords:}  Fr\'echet regression, regression discontinuity, changepoint, metric-space data, local polynomial regression, structural break
\vfill

\newpage
\spacingset{1.45} % DON'T change the spacing!

\section{Introduction}

Testing for a discontinuity in a conditional mean function is a central task in modern econometrics and statistics, forming the basis of the regression discontinuity (RD) design for causal inference \citep{thistlethwaite1960regression}, changepoint detection \citep{page1957problems}, and structural break tests \citep{chow1960tests}. Under the assumption that the path of conditional means evolves smoothly, discontinuities can reveal important insights into the data-generating process, such as the presence of regime changes, treatment effects, or structural breaks. While central to applied work, standard methods are predicated on scalar-valued outcomes. This is a significant limitation, as many critical questions in science involve outcomes that are inherently complex, non-Euclidean objects. For instance, how does a negative trade shock alter the \textit{network structure} of a national economy?  Or how does crossing a critical temperature threshold trigger an abrupt change in the \textit{distribution} of daily rainfall? Answering such questions requires a methodological framework that can detect jumps or breaks in otherwise smoothly evolving distributions, networks, or other metric-space valued data.

A framework for analyzing such data can be built on the Fr\'echet mean, which generalizes the concept of mathematical expectation to general metric spaces \citep{frechet1948elements}. In a regression context, interest lies specifically in the conditional Fr\'echet mean \citep{petersen2019frechet}, which describes the central tendency of a metric-valued outcome $Y$ as a function of a scalar covariate $X$. A direct test on the conditional Fr\'echet means themselves, however, is generally not feasible. Because these means are elements of the object space, which often lacks additive structure---for example, they are themselves networks or distributions---they cannot be directly compared through differencing. This prevents the use of standard test statistics that rely on simple algebraic comparisons. 

This article develops a novel framework that overcomes this challenge. Since conditional Fr\'echet means cannot be differenced, my approach instead detects discontinuities by comparing their associated conditional Fr\'echet variances, which are scalars. This extends the variance-based logic of \citet{dubey2019frechet} from a simple $k$-sample (or group/multi-sample) comparison  \citep{kruskal1952use} to the more complex, nonparametric regression setting where conditional means and variances evolve as arbitrary smooth functions. To operationalize this, I adapt the local Fr\'echet regression estimator of \citet{petersen2019frechet} to estimate conditional Fr\'echet variances on either side of a jump point, as well as by pooling across the jump. The resulting test statistic compares the pooled variance at the jump point to the one-sided variances, providing the first general method for detecting structural breaks in metric-space regression functions. To establish the asymptotic validity and consistency of this test, I derive a new central limit theorem for the local estimator of the conditional Fr\'echet variance, which generalizes the canonical results of \citet{fan1998efficient}. 

In support of the theoretical results, I report simulation exercises for three metric spaces of interest: univariate density functions, graph Laplacians of networks, and covariance matrices. These results indicate that the test has excellent size control in finite samples and quickly converges to the nominal level. Moreover, it exhibits robust power, reliably detecting discontinuities in all three settings even with modest sample sizes and small jump magnitudes. A comparison to a localized adaptation of the $k$-sample test of \citet{dubey2019frechet} to the regression setting illustrates the usefulness of the local regression approach. While the adaptation of the $k$-sample test does achieve nominal coverage under a piecewise-constant signal, it severely overrejects the null under a piecewise-smooth one, unlike the proposed test.  

To demonstrate the test's practical utility, I present two empirical applications. First, I find a significant structural shift in the composition of work-from-home (WFH) arrangements in Washington State at the income threshold where non-compete agreements become legally enforceable. This suggests that WFH was an important bargaining margin during the sample period, which covered the main COVID-19 years (2020-2023). Second, I detect an abrupt change in the structure of national input-output networks for countries that lose preferential trade access under the US Generalized System of Preferences after its reinstatement in 2015. This indicates that this relatively modest program, which has been subject to repeated legislative lapses and reauthorizations, nonetheless has a meaningful impact on the production networks of developing countries reliant on it. In both applications, existing tests for jumps in scalar-valued outcomes, derived from the complex objects, fail to detect statistically significant jumps. Thus, these applications illustrate the ability of the proposed method to uncover nuanced structural changes that are invisible to methods that ignore the metric space topology. 

This article contributes to several strands of the statistics and econometrics literature. First, it advances the field of Fr\'echet regression \citep{petersen2019frechet} by providing a novel tool for inference on local Fr\'echet regression. It extends existing group-comparison and changepoint tests for non-Euclidean objects \citep{dubey2019frechet, dubey2020frechet, chen2017new, szekely2013energy, wang2024nonparametric, muller2024anova, jiang2024two, zhang2025generalized, zhang2025change} to a fully nonparametric regression setting. Compared to the growing body of inferential work in this area \citep{chen2022uniform, bhattacharjee2023single, kurisu2024empirical, van2025regression}, the test proposed here generalizes to a larger class of metric spaces, works in the nonparametric regression setting, or both, and does so under milder regularity conditions, allowing for broad applicability.

Second, this article contributes to the large literatures on changepoint detection \citep{page1957problems}, regression discontinuity design (RDD) \citep{hahn2001identification}, and structural break estimation \citep{chow1960tests}. My contribution is to develop the first unified framework for testing discontinuities in metric-space conditional means, moving beyond functional data to general metric-space valued data such as distributions, networks, or spheres \citep{Berkes2009Detecting}. Recent research has developed testing approaches tailored to specific types of metric spaces like networks \citep{enikeeva2025change, madrid2023change, kei2024change, penaloza2024changepoint, bhattacharjee2020change} or covariance matrices \citep{dornemann2024detecting, ryan2023detecting, cho2025change, avanesov2018change, dette2022estimating, Aue2009Break}. These methods usually assume specific generative models, such as the stochastic block model for networks, and piecewise-constant signals. By contrast, the proposed test is model-agnostic and allows for piecewise-smooth processes, in addition to being applicable to many different metric spaces. 

Finally, though the focus is on testing, this article contributes to the emerging literature on RDD for non-Euclidean outcomes. For the case of distribution-valued outcomes, \citet{van2025regression} provided the first treatment of this problem, developing bias-corrected inference and establishing the complete inferential framework necessary for applied research. For that particular metric space, the linear structure of univariate quantile functions allowed to move beyond just testing for a jump to also providing point estimates and confidence bands for the jump's magnitude. Conversely, the test developed in the current article is more general and works for metric spaces that do not have such linear structure. Subsequently and independently, \citet{kurisu2025regression} proposed a conceptual extension of RDD to general geodesic metric spaces. While their framework encompasses a broad class of metric spaces, it focuses primarily on identification and estimation, leaving the question of inference—and particularly hypothesis testing—unaddressed. The present article contributes to this literature by developing the first formal test for discontinuities in conditional Fr\'echet means in general metric spaces, providing an inferential tool that helps operationalize metric space RDDs for practical applications. In doing so, I build upon the foundation established in \citet{van2025regression} while extending the scope to the general metric spaces considered in \citet{kurisu2025regression}.

The remainder of the article develops the methodology and main theoretical results (Sections \ref{sec:setup_definitions}--\ref{sec:main_results}), presents simulation evidence (Section \ref{sec:simulations}), illustrates the method with two empirical applications (Section \ref{sec:applications}), and concludes.

\section{Setup and Motivation} \label{sec:setup_definitions}

Let $(\Omega, d)$ be a bounded metric space, where $d:\Omega \times \Omega \to [0, \infty)$ is a distance function. Let $Y$ be a random object taking values in $\Omega$, and $X$ be a real-valued random variable representing a covariate or conditioning variable. We are interested in the behavior of $Y$ conditional on $X$. Denote their joint distribution $F$ and the conditional distribution of $Y$ given $X$ as $F_{Y \mid X}$.

\subsection{Conditional Fr\'echet Means and Variances}

The concept of a conditional Fr\'echet mean arises as a natural generalization of the Euclidean mean. Let $Z \in \mathbb{R}$, then the conditional expectation $E[Z \mid X=x]$ can be defined as the unique minimizer $f(x)$ of the mean squared error,
\[
E[Z \mid X=x] \coloneqq \argmin_{f(x) \in \mathbb{R}} E[ d_E\left(Z,f(x)  \right)^2 \mid X=x],
\]
where $d_E(z_1,z_2) \coloneqq \| z_1 - z_2\|$ the standard Euclidean metric. 

The conditional Fr\'echet mean of $Y \in \Omega$ given $X=x$, where $Y$ is a metric-space valued \textit{random} object, is defined analogously by replacing the Euclidean metric with the more general distance metric $d$ \citep{petersen2019frechet},
\begin{equation} \label{eq:cond_frechet_mean_def}
m_{\oplus}(x) \coloneqq \argmin_{\omega \in \Omega} M_{\oplus}(\omega, x), \quad \text{where } M_{\oplus}(\omega, x) \coloneqq E\left[d^2(Y, \omega) \mid X=x\right],
\end{equation}
and note that the expectation is taken over $F_{Y \mid x}$. In words, the conditional Fr\'echet mean $m_{\oplus}(x)$ is an element in $\Omega$ that minimizes the expected squared distance to $Y$ among observations with $X=x$. In that sense, it generalizes the standard Euclidean conditional expectation, which it includes as a special case. Beyond this natural interpretation as a generalized expectation, an important benefit of working with the Fr\'echet mean is that it lies in the metric space itself and hence captures its topology (e.g., it is a network Laplacian), in contrast to other scalar-valued statistics that one may derive from the metric space-valued objects (e.g., the average centrality of the networks). Of course, what aspect of its topology it captures is determined by the choice of distance metric $d$. Nonetheless, the test proposed in this article works for many metrics and hence does not hinge on a specific choice. 

The corresponding conditional Fr\'echet variance is,
\begin{equation} \label{eq:cond_frechet_var_def}
    V_{\oplus}(x) \coloneqq E \left[ d^2(Y, m_{\oplus}(x)) \mid X=x \right] = M_{\oplus}(m_{\oplus}(x), x).
\end{equation}
This measures the expected squared dispersion of $Y$ around its conditional Fr\'echet mean $m_{\oplus}(x)$, analogous to the standard Euclidean variance. 

We are interested in a potential discontinuity in the conditional Fr\'echet mean at a specific point $X=c \in \mathbb{R}$. To formalize this, define the conditional Fr\'echet means from the left and right of $c$,
\begin{equation} \label{eq:frechet_cond_mean_pm}
m_{\pm, \oplus}\coloneqq \underset{\omega \in \Omega}{\operatorname{argmin}} \left\{ \lim_{x \to c^{\pm}} M_{\oplus}(\omega, x) \right\},
\end{equation}
These represent the central tendencies of $Y$ as $X$ approaches $c$ from below ($m_{-, \oplus}$) or above ($m_{+, \oplus}$). The corresponding ``one-sided'' conditional Fr\'echet variances are,
\begin{equation} \label{eq:frechet_variance_population_pm}
    V_{\pm, \oplus} \coloneqq \lim_{x \to c^{\pm}} E \left[ d^2(Y, m_{\oplus}(x)) \mid X=x \right].
\end{equation}
Assuming these limits exist, one can also relate them to underlying ``counterfactual'' Fr\'echet means at $X=c$ for a connection to causal inference, specifically the regression discontinuity design \citep{van2025regression, kurisu2025regression}. Since the focus of this article is on testing, I do not discuss this further here.

The null hypothesis of no discontinuity in the conditional Fr\'echet mean at $c$ is,
\begin{equation*}
H_0^{\text{mean}}: m_{+,\oplus} = m_{-,\oplus},
\end{equation*}
where $Y_1=Y_2$ for $Y_1, Y_2 \in \Omega$ is defined as $d(Y_1,Y_2)=0$.
The null hypothesis of no discontinuity in the conditional Fr\'echet variance is,
\begin{equation*}
H_0^{\text{var}}: V_{+,\oplus} = V_{-,\oplus}.
\end{equation*}
Similar to the classical analysis of variance test, the proposed test will assess $H_0^{\text{mean}}$ and $H_0^{\text{var}}$ jointly, that is,
\begin{equation*} 
H_0 :  H_0^{\text{mean}}  \cap H_0^{\text{var}},
\end{equation*}
while the alternative, $H_1$, is that at least one of these conditions does not hold.

\subsection{Motivating Settings} \label{sec:motivating_setting}

To motivate the problem further, I now discuss several classes of problems and specific applications where detecting jumps in the conditional mean path of a metric-space valued outcome can be highly valuable. Below, I further motivate the test with two empirical illustrations on real-world data. 

\begin{example}[Regression Discontinuity Design (RDD)]
    The RDD is a powerful quasi-experimental design for causal inference \citep{hahn2001identification, thistlethwaite1960regression}, hinging on the estimation of a discontinuity in $E[Z|X=x]$ for $Z \in \mathbb{R}$ at a known threshold $c$ of a running variable $X$. The proposed framework allows the outcome to be a complex, non-Euclidean object.
    \begin{itemize}
         \item \emph{Income Distribution Response to Democratic Governors.} Consider a ``close-election'' RDD which exploits the fact that when a Democratic governor's vote share ($X$) exceeds 50\% of the two-party vote, they become elected ($T$). The outcome $Y_i$ for state $i$ is the distribution of family income in that state. A suitable metric space for this setting is the space of cumulative distribution functions equipped with the 2-Wasserstein distance. This is the application in \citet{van2025regression}, which found negative effects of Democratic governorships on the top of the income distribution but not elsewhere. As mentioned, for this particular metric space, one can also construct tests using the uniform confidence bands developed in that article. 
    \end{itemize}
    The test proposed here allows for formally assessing if $m_{+,\oplus} \neq m_{-,\oplus}$ in such RDD settings. See also \citet{kurisu2025regression} for further examples of metric-space valued RD designs.
\end{example}

\begin{example}[Changepoint Detection in Metric-Space Valued Processes]
    A central task in many disciplines is to identify changepoints where the underlying data-generating process for observations $Y_x$ (indexed by a continuous variable $X$, such as time or a spatial coordinate) alters its central tendency \citep{aue2013break}. A statistically significant jump in $m_{\oplus}(x)$ at $X=c$ indicates that such a  changepoint exists.
    \begin{itemize}
        \item \emph{Neuroscience: Brain Development Trajectories:} Let $X$ be the chronological age of subjects. $Y_x$ could represent the structural brain network of an individual of age $x$, perhaps derived from diffusion tensor imaging (DTI) and viewed as a graph object. Using a graph metric (e.g., based on the Frobenius norm or spectral graph distances), a jump in the Fr\'echet mean brain network structure at a critical developmental age $c$ could highlight a significant maturational shift in the topology of typical brain connectivity \citep{fair2009functional, dubey2019frechet}.
    \end{itemize}

\end{example}

\begin{example}[Threshold Effects in Complex Systems]
    More broadly, the framework applies to detecting threshold effects where the central tendency of a metric-space outcome $Y$ changes abruptly when a continuous underlying factor $X$ crosses a critical value $c$. This generalizes the search for non-linearities and critical transitions.
    \begin{itemize}
        \item \emph{Ecology: Community Composition along Environmental Gradients:} Let $X$ be a continuous environmental gradient (e.g., soil pH, salinity). $Y_x$ is the ecological community composition (e.g., a vector of species abundances, or a set of species presence/absence) at a site with gradient value $x$. Using a suitable distance metric such as the Aitchison or geodesic distance, a jump in the Fr\'echet mean community composition at a specific gradient level $c$ would indicate an ecotone or an abrupt shift in the typical community structure \citep{killick2012optimal}.
    \end{itemize}
\end{example}

In all these settings, the objects $Y_i$ possess a complex structure. The core statistical question is whether the conditional Fr\'echet mean path $m_{\oplus}(x)$, ``$E[Y|X=x]$'',  exhibits a significant jump at $X=c$. The subsequent sections develop the formal tools to test this hypothesis.

\subsection{One-Sided Local Fr\'echet Regression} \label{sec:twosidedmonster}

To estimate the conditional means and variances from the left and right, $m_{\pm, \oplus}, V_{\pm, \oplus}$, I rely on a one-sided version of local Fr\'echet regression \citep{petersen2019frechet}. Assume one observes i.i.d. data $\{(Y_i, X_i)\}_{i=1}^n \sim F$. Then, the one-sided estimators are defined as,
\begin{equation} \label{eq:l_hat_pm}
\hat{l}_{\pm,\oplus} \coloneqq \underset{\omega \in \Omega}\argmin \: \left\{ \hat{L}_{\pm,n}(\omega) \coloneqq  n^{-1} \sum_{i=1}^n s_{\pm, i n}(c, h_m) d^2\left(Y_i, \omega\right) \right\},
\end{equation}
where $h_m$ is a bandwidth and $s_{\pm,in}(c,h)$ are standard one-sided local linear regression weights that use data only to the left or right of the cutoff, respectively,
\begin{align*}
s_{+,in}(c,h) &= \mathrm{1}(X_i \geq c) \frac{K_h(X_i-c)}{\hat{\sigma}_{+,0}^2} \left[ \hat{\mu}_{+,2} - \hat{\mu}_{+,1}(X_i-c) \right], \\
\intertext{where $K_h(u) = K(u/h)/h$ for a kernel function $K$, and the components are sample moments calculated using only data to the right of the cutoff,}
\hat{\mu}_{+,j} &= n^{-1} \sum_{k=1}^n \mathrm{1}(X_k \geq c) K_h(X_k-c)(X_k-c)^j \quad \text{for } j=0,1,2, \\
\hat{\sigma}_{+,0}^2 &= \hat{\mu}_{+,0}\hat{\mu}_{+,2} - \hat{\mu}_{+,1}^2.
\end{align*}
The weights for the left-sided estimator, $s_{-, i n}(c, h)$, are defined analogously using observations where $X_i < c$ and $X_k < c$. 

The corresponding population (or pseudo-true) minimizers are denoted $\tilde{l}_{\pm, \oplus}$ with objective function $\tilde{L}_{\pm, \oplus}(\omega)$. The consistency of these estimators is a straightforward extension of the arguments in \citet{petersen2019frechet} and \citet{kurisu2025regression}.

The sample estimators of the conditional Fr\'echet variances from above ($V_{+,\oplus}$) and below ($V_{-,\oplus}$) are then simply,
\begin{equation} \label{eq:v_hat_split_main}
\hat{V}_{\pm, \oplus} = \frac{1}{n} \sum_{i=1}^n s_{\pm,in}(c,h) d^2(Y_i, \hat{l}_{\pm, \oplus}),
\end{equation}
where I use a different bandwidth $h$ for the variance than for the mean estimation ($h_m$) to allow for different degrees of smoothing. The asymptotic variance of $d^2(Y, m_{\pm,\oplus})$ conditional on $X=c$ is denoted $\sigma_{\pm, V}^2 \coloneqq \operatorname{var}\left(d^2(Y, m_{\pm, \oplus}) \mid X=c \right)$, and its sample estimator $\hat{\sigma}_{\pm, V}^2$ is constructed as,
\[
\hat{\sigma}_{\pm, V}^2=\frac{1}{n} \sumn s_{\pm,in}(c,h)  d^4(Y_i, \hat{l}_{\pm, \oplus}) - \left(\frac{1}{n} \sumn s_{\pm,in}(c,h)  d^2(Y_i, \hat{l}_{\pm, \oplus})\right)^2,
\]
by the variance shortcut formula. 
The asymptotic normality of $\hat{V}_{\pm, \oplus}$ and consistency of $\hat{\sigma}_{\pm, V}^2$ are established below. Bandwidth selection is done by a cross-validation procedure that aims to minimize the out-of-sample prediction on both sides of the cutoff simultaneously (see Appendix \ref{app:bandwidth} for more details).

\subsection{Pooled Estimators} \label{sec:pooled_estimators}
To construct the ANOVA-style test, I also need pooled estimators that do not distinguish between observations above and below $c$. The pooled local Fr\'echet mean estimator is,
\[
\hat{l}_{p,\oplus} \coloneqq \argmin_{\omega \in \Omega} \frac1n \sumn s_{p,in}(c,h) d^2(Y_i, \omega),
\]
and the pooled conditional Fr\'echet variance estimator is,
\[
\hat{V}_{p,\oplus} \coloneqq \frac1n \sumn s_{p,in}(c,h) d^2(Y_i, \hat{l}_{p,\oplus}),
\]
where $s_{p,in}(c,h) = \frac12 s_{+,in}(c,h) + \frac12 s_{-,in}(c,h)$. The use of uniform weights $\frac12$ assumes that the same bandwidth $h$ is used on both sides, which can easily be relaxed. The corresponding pooled population moments are,
\begin{align*}
m_{p,\oplus} &  \coloneqq \argmin_{\omega \in \Omega} \left\{ M_{p,\oplus}(\omega) \coloneqq \frac12 M_{+,\oplus}(\omega) + \frac12 M_{-,\oplus}(\omega) \right\} \\ 
V_{p,\oplus} & \coloneqq \frac12 M_{+,\oplus}(m_{p,\oplus}) + \frac12 M_{-,\oplus}(m_{p,\oplus}),
\end{align*}
i.e., the pooled population estimators pool across the one-sided objective functions. 
The consistency of these pooled estimators is proved in Proposition \ref{prop:pooled_consistency} in the Appendix.

\subsection{A Jump Test for Conditional Metric-Space Means}

To test for a discontinuity, I propose an ANOVA-style test statistic based on these one-sided and pooled local Fr\'echet regression estimators. Define auxiliary statistics,
\begin{align}
F_n &=  \hat{V}_{p, \oplus} - \left( \frac12 \hat{V}_{+, \oplus} + \frac12 \hat{V}_{-, \oplus}  \right), \label{eq:F_n} \\
U_n &= \frac{\left(\hat{V}_{+, \oplus}-\hat{V}_{-, \oplus}\right)^2}{ (S_+ \hat{\sigma}_{+, V}^2 / \hat{f}_X(c)) + (S_- \hat{\sigma}_{-, V}^2 / \hat{f}_X(c)) }. \label{eq:U_n}
\end{align}
Here, $U_n$ tests for differences in variances ($H_0^{\text{var}}$). Under the null of equal variances (i.e. $U_n \to 0$), any difference between the pooled and one-sided variances reflected in $F_n$ must be driven by differences in means. Hence, $F_n$ tests for differences in means ($H_0^{\text{mean}}$).

The combined test statistic is,
\begin{equation} \label{eq:test_combined_main}
T_n=nh_n U_n +\frac{nh_n F_n^2}{ (\hat{\sigma}_{+,V}^2 S_+/\hat{f}_X(c)) + (\hat{\sigma}_{-,V}^2 S_-/\hat{f}_X(c)) }.
\end{equation}
Here, the $F_n$ term is squared and scaled by its variance estimate so that it has the same variability as the $U_n$ term. 

\section{Main Results} \label{sec:main_results}

\subsection{Assumptions} \label{sec:assumptions}
I require the following assumptions for the asymptotic results. Throughout, I assume that the metric space $\Omega$ is bounded.

\begin{assumpL}[Sampling] \label{asspt:sampling}
$\{ (Y_i, X_i) \}_{i=1}^n$ are i.i.d. copies of a random element $(Y,X)$ defined on a probability space $(\Xi, \mathcal{F}, P)$. $Y \in \Omega$, $X \in \mathbb{R}$. 
\end{assumpL}
\begin{assumpL}[Densities and Continuity]   \label{asspt:densities}
The following hold for some $\varepsilon>0$. (a) The density $f_X$ is twice continuously differentiable on
      $(c-\varepsilon,c+\varepsilon)$, bounded away from $0$ and $\infty$,
      and $f_X(c)>0$. (b) For every $y\in\Omega$ the function
      $g_y(x)=\partial_x F_{X\mid Y}(x\mid y)$ is twice continuously
      differentiable on $(c-\varepsilon,c+\varepsilon)$, and  
      $
         \sup_{y\in\Omega}\;
         \max_{k=0,1,2}\;
         \sup_{x\in(c-\varepsilon,c+\varepsilon)}
         |g_y^{\,(k)}(x)|\;<\;\infty.
      $
      (c) For every open set \(U\subset\Omega\) with
      $F_Y(\partial U)=0$ the map
      $
         x \;\longmapsto\;
         \int_U dF_{Y\mid X=x}(y)
      $
      is twice continuously differentiable on the open
      intervals $(c-\varepsilon,c)$ and $(c,c+\varepsilon)$.  
\end{assumpL}
For each side separately, this assumption is identical to the one in \citet{petersen2019frechet}, and generalizes the standard distributional assumptions imposed in local polynomial regression.

\begin{assumpL}[Uniqueness of Fr\'echet Means] \label{asspt:uniqueness}
The Fr\'echet means $m_{\pm,\oplus}$ and $m_{p,\oplus}$ exist and are unique. For all $n$, their estimators $\tilde{l}_{\pm,\oplus}$, $\hat{l}_{\pm,\oplus}$, $\tilde{l}_{p,\oplus}$, $\hat{l}_{p,\oplus}$ exist and are unique (the hatted estimators almost surely). Additionally, for any $\varepsilon>0$: $
\inf _{d\left(\omega, m_{\pm,\oplus}\right)>\varepsilon}\left\{M_{\pm,\oplus}(\omega)-M_{\pm,\oplus}\left(m_{\pm,\oplus}\right)\right\}>0,$ 
$\liminf_n \inf _{d\left(\omega, \tilde{l}_{\pm,\oplus}\right)>\varepsilon}\left\{\tilde{L}_{\pm,n}(\omega)-\tilde{L}_{\pm,n}\left(\tilde{l}_{\pm,\oplus}\right)\right\}>0$,  
$ \liminf _n \inf _{d\left(\omega, \hat{l}_{\pm,\oplus}\right)>\varepsilon}\left\{\hat{L}_{\pm,n}(\omega)-\hat{L}_{\pm,n}\left(\hat{l}_{\pm,\oplus}\right)\right\}>0  \quad \text{(a.s.)}$.
Analogous conditions hold for the pooled estimators $\tilde{l}_{p,\oplus}$ and $\hat{l}_{p,\oplus}$ with respect to $M_{p,\oplus}(\omega)$, denoted as Assumption \ref{asspt:uniqueness}-p.
\end{assumpL}

\begin{assumpL}[Curvature Conditions] \label{asspt:curvature} 
\begin{enumerate}[(i)] 
\item \label{asspt:curvature_population}
Provided $d\left(\omega, m_{\pm,\oplus}\right)<\eta_1$, there exist $\eta_1>0, C_1>0$ and $\beta_1>1$ such that,
$$
M_{\pm,\oplus}(\omega)-M_{\pm,\oplus}\left(m_{\pm,\oplus}\right) \geq C_1 d\left(\omega, m_{\pm,\oplus}\right)^{\beta_1}.
$$
\item \label{asspt:curvature_estimator}
Provided $d\left(\omega, \tilde{l}_{\pm,\oplus}\right)<\eta_2$, there exists $\eta_2>0, C_2>0$ and $\beta_2>1$ such that:
$$
\underset{n}{\liminf }\left[\tilde{L}_{\pm,n}(\omega)-\tilde{L}_{\pm,n}\left(\tilde{l}_{\pm,\oplus}\right)\right] \geq C_2 d\left(\omega, \tilde{l}_{\pm,\oplus}\right)^{\beta_2}.
$$
\end{enumerate}
Analogous conditions hold for $M_{p,\oplus}(\omega)$ and $\tilde{L}_{p,n}(\omega)$, denote these as Assumption \ref{asspt:curvature}-p.
\end{assumpL} 

As mentioned, the uniqueness and curvature assumptions are satisfied for all specific metric spaces considered in this article \citep{petersen2019frechet, zhou2022network}. More generally, \citet{KimuraBondell2025Frechet} characterize a class of metric spaces that have unique conditional Fr\'echet means. 

The following are generalizations of standard local polynomial regression assumptions,
\begin{assumpK}[Kernel] \label{asspt:kernel}
The kernel $K: \mathbb{R} \to [0,\infty)$ is a continuous probability density function, symmetric around zero, with compact support.
\end{assumpK}
\begin{assumpK}[Bandwidth] \label{asspt:bandwidth}
(a) The bandwidth sequence $h=h(n)$, satisfies $h \to 0, \, h_m \to 0$, $nh \to \infty, \, n h_m \to \infty$. 
(b) $nh h_m^{4/(\beta_1-1)} \to 0$.
(c) $h_m(n) = \varrho_m n^{-\theta}$ for some $\varrho_m > 0$, $\theta > 0$, and $h(n) = \varrho n^{-\gamma}$ for some $\varrho > 0$, $\gamma > \theta$, $\gamma > \frac15$.
(d) $(nh)^{1/2} (nh_m)^{-1/(2(\beta_2-1))} \to 0$, where $\beta_1, \beta_2$ control the curvature of the Fr\'echet mean estimators in Assumption \ref{asspt:curvature}. 
\end{assumpK}
First, the variance bandwidth $h$ must undersmooth the variance estimator $(\gamma > 1/5$) in (c) to eliminate its own asymptotic bias. Second, the variance bandwidth $h$ is assumed to converge faster than the mean bandwidth $h_m$ ($\gamma > \theta$). This baseline relationship ensures that for sufficiently regular problems ($1 < \beta_2 \le 2$), the stochastic error from the first stage is automatically controlled. The relative rate condition in (b) which handles the bias component of the first-stage error and essentially requires that the mean bandwidth $h_m$ cannot converge too slowly. This ensures the mean estimator $\hat{l}_{\oplus}$ does not itself need to be undersmoothed---an adaptive property that generalizes the findings of \citet{fan1998efficient}. Finally, the condition in (d) handles the stochastic component for less regular problems ($\beta_2 > 2$), demanding a more aggressive separation between the bandwidths as the Fr\'echet mean's objective function becomes less smooth. Note that $\beta_2 \leq 2$ for all metric spaces considered in this article, see \citet{petersen2019frechet, zhou2022network}. These explicit conditions are necessary because, unlike in the Euclidean case, there are no algebraic cancellations to automatically suppress the first-stage error in a general metric space.

Finally, let $N(\epsilon, S, d)$ be the $\epsilon$-covering number of a set $S \subseteq \Omega$. The entropy integral for a $\delta$-ball $B_\delta(\omega_0)$ around $\omega_0 \in \Omega$ is $J(\delta, \omega_0) \coloneqq \int_0^1 \sqrt{1+\log N\left(\delta \epsilon, B_\delta(\omega_0), d\right)} \mathrm{d} \epsilon $. 

\begin{assumpT}[Local Entropy Condition] \label{assptT:J}
For any $\omega_0 \in \{m_{-,\oplus}, m_{+,\oplus}, m_{p,\oplus}\}$, $\delta J(\delta, \omega_0) \rightarrow 0$ as $\delta \rightarrow 0$.
\end{assumpT}

\begin{assumpT}[Global Entropy Condition] \label{asspT:entropy}
The entropy integral of $\Omega$ is finite: $\int_0^1\sqrt{1+\log N(\epsilon, \Omega, d)} \mathrm{d} \epsilon < \infty$.
This is a stronger condition, used for uniform consistency results (e.g., Theorem \ref{thm:consistency}).
\end{assumpT}
 Note that for most results in this article, I only need the weaker local entropy condition \ref{assptT:J}, which is also weaker than its equivalent in \citet{dubey2019frechet} since the latter imposes it for all $\omega \in \Omega$. The stronger global entropy condition \ref{asspT:entropy} is only required for the power results.

\subsection{Central Limit Theorem for Conditional Fr\'echet Variance}
First, I establish the asymptotic distribution of the estimated conditional Fr\'echet variances $\hat{V}_{\pm, \oplus}$.
 Denote convergence in distribution with $\xrightarrow[]{d}$. 

\begin{theorem}[CLT for Conditional Fr\'echet Variance] \label{thm:CLT}
Let Assumptions \ref{asspt:sampling}-\ref{assptT:J} hold. Let $\hat{f}_X(c)$ be a $\sqrt{nh}$-consistent estimator for $f_X(c)$.
Then,
\[
\sqrt{nh} \left( \hat{V}_{\pm, \oplus} - V_{\pm, \oplus}  \right) / \left( \sqrt{\frac{S_\pm}{\hat{f}_X(c)}} \hat{\sigma}_{\pm, V} \right)    \xrightarrow[]{d} N\left(0, 1\right),
\]
where \[ 
S_\pm \coloneqq \frac{\int_{0}^\infty (K_{\pm,12} - u K_{\pm,11})^2 K^2(u) \dd u}{ (K_{\pm,12} K_{\pm,10} - K^2_{\pm,11})^2 }. \]
\end{theorem}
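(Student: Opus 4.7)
I focus on the right-sided case; the left is symmetric, and $\hat f_X(c)$ and $\hat\sigma^2_{\pm,V}$ can be handled by Slutsky at the end. The natural decomposition is
\[
\hat V_{+,\oplus}-V_{+,\oplus} \;=\; \bigl(\hat V_{+,\oplus}-\tilde V_{+,\oplus}\bigr)\;+\;\bigl(\tilde V_{+,\oplus}-V_{+,\oplus}\bigr),
\]
where the infeasible estimator
\[
\tilde V_{+,\oplus} \;\coloneqq\; \frac{1}{n}\sum_{i=1}^{n} s_{+,in}(c,h)\, d^{2}(Y_i,m_{+,\oplus})
\]
plugs in the population right-limit mean. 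The strategy is to push all Fr\'echet-specific content into bounding the first term and then invoke a classical one-sided local linear CLT for the second, since $\tilde V_{+,\oplus}$ is a local linear regression of the \emph{scalar} response $Z_i\coloneqq d^2(Y_i,m_{+,\oplus})$ on $X_i$ evaluated at $c$.

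\textbf{Step 1: infeasible CLT.} Conditional on $X=c$, $E[Z_i\mid X=c]=V_{+,\oplus}$ and $\operatorname{var}(Z_i\mid X=c)=\sigma_{+,V}^2$. Boundedness of $\Omega$ makes $Z_i$ bounded, so all moment conditions for a local linear CLT are trivially met. Assumption L2 gives the required smoothness of $f_X$ and of $x\mapsto E[Z_i\mid X=x]$ on $(c,c+\varepsilon)$, so the one-sided local linear bias is $O(h^2)$. Under K2(c) with $\gamma>1/5$ we have $nh^5\to0$, so $\sqrt{nh}\cdot O(h^2)=o(1)$ and the bias vanishes. A standard Lindeberg-type argument on the one-sided effective kernel (see e.g.\ the arguments underlying \citet{fan1998efficient}) then yields
\[
\sqrt{nh}\bigl(\tilde V_{+,\oplus}-V_{+,\oplus}\bigr)\;\xrightarrow{d}\; N\!\Bigl(0,\;\sigma_{+,V}^2\, S_{+}/f_X(c)\Bigr),
\]
with $S_{+}$ the displayed one-sided kernel constant that arises from $\int_0^\infty(K_{+,12}-uK_{+,11})^2K^2(u)\,du$ normalized by $(K_{+,12}K_{+,10}-K_{+,11}^2)^2$.

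\textbf{Step 2: the plug-in error is negligible.} Because $\Omega$ is bounded with diameter $D$, for every $y\in\Omega$,
\[
\bigl|d^{2}(y,\hat l_{+,\oplus})-d^{2}(y,m_{+,\oplus})\bigr|
\;\le\;\bigl(d(y,\hat l_{+,\oplus})+d(y,m_{+,\oplus})\bigr)\,d(\hat l_{+,\oplus},m_{+,\oplus})
\;\le\;2D\,d(\hat l_{+,\oplus},m_{+,\oplus}).
\]
Since $n^{-1}\sum_i s_{+,in}(c,h)=1+o_p(1)$ by standard local-linear moment calculations,
\[
\bigl|\hat V_{+,\oplus}-\tilde V_{+,\oplus}\bigr|\;\le\; 2D\,d(\hat l_{+,\oplus},m_{+,\oplus})\cdot\bigl(1+o_p(1)\bigr).
\]
Local Fr\'echet regression theory under Assumptions L1--L4, K1, and T1 (extending \citealp{petersen2019frechet} to the one-sided setting as in \citet{kurisu2025regression}) delivers the rate
\[
d(\hat l_{+,\oplus},m_{+,\oplus})\;=\;O_p\!\bigl(h_m^{2/(\beta_1-1)}\bigr)\;+\;O_p\!\bigl((nh_m)^{-1/(2(\beta_2-1))}\bigr),
\]
where the first term is the bias governed by the population curvature $\beta_1$ and the second is the stochastic component governed by the empirical curvature $\beta_2$. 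Multiplying by $\sqrt{nh}$, the bandwidth conditions K2(b) ($nh\,h_m^{4/(\beta_1-1)}\to 0$) and K2(d) ($(nh)^{1/2}(nh_m)^{-1/(2(\beta_2-1))}\to 0$) are precisely what is needed to force $\sqrt{nh}\,|\hat V_{+,\oplus}-\tilde V_{+,\oplus}|\to_p 0$.

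\textbf{Step 3: auxiliary consistency and Slutsky.} Consistency of $\hat\sigma_{+,V}^2$ for $\sigma_{+,V}^2$ follows from the same plug-in argument applied to the second-moment sum $n^{-1}\sum_i s_{+,in}(c,h)d^4(Y_i,\hat l_{+,\oplus})$, using the bounded-diameter Lipschitz bound with exponent $4$ and a law of large numbers for one-sided local linear sums under L2 and K1. Combining Steps 1 and 2 with $\hat f_X(c)=f_X(c)+o_p(1)$ and Slutsky gives the stated standardized convergence. \textbf{Main obstacle.} The delicate step is Step 2: converting the metric-space rate on $\hat l_{+,\oplus}$ into a $\sqrt{nh}$-negligible contribution without losing a factor in the rates. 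This is exactly where K2(b) and K2(d) enter; they encode the separation between variance and mean bandwidths and the interaction with the Fr\'echet curvature exponents $\beta_1,\beta_2$, and it is in verifying these that the proof must be most careful---this is the non-Euclidean analogue of the Fan--Gijbels algebraic cancellations that do not occur automatically here.
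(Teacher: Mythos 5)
Your proof follows essentially the same route as the paper's: decompose into an infeasible estimator $\tilde V_{+,\oplus}$ (built on the true mean $m_{+,\oplus}$) plus a plug-in error, show the plug-in error is negligible at rate $\sqrt{nh}$ via the boundedness of $\Omega$ and the rate on $d(\hat l_{+,\oplus},m_{+,\oplus})$, apply a classical one-sided local-linear CLT to the infeasible estimator with undersmoothing killing the $O(h^2)$ bias, and finish with Slutsky. This matches the paper's three-term decomposition $A_{+,1},A_{+,2},A_{+,3}$ (your Step 1 bundles the paper's $A_{+,2}$ and $A_{+,3}$ and defers to a standard boundary local-linear CLT rather than reproving the Lyapunov condition, which is fine since $Z_i = d^2(Y_i,m_{+,\oplus})$ is a bounded scalar).

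One small slip in Step~2: you bound
\[
\bigl|\hat V_{+,\oplus}-\tilde V_{+,\oplus}\bigr|
=\Bigl|\tfrac1n\sum_i s_{+,in}(c,h)\bigl[d^2(Y_i,\hat l_{+,\oplus})-d^2(Y_i,m_{+,\oplus})\bigr]\Bigr|
\]
using $n^{-1}\sum_i s_{+,in}(c,h)=1+o_p(1)$. The signed weights do sum to $1$ exactly by the local-linear reproducing property, but after taking absolute values inside the sum you actually need $n^{-1}\sum_i |s_{+,in}(c,h)| = O_p(1)$, since one-sided local-linear weights are not sign-definite (some are negative near the boundary). That bound is what the paper establishes (its Lemma on $\frac1n\sum_i |s_{\pm,in}|$), and it is all you need for negligibility, so the conclusion stands, but the displayed factor $(1+o_p(1))$ should be replaced by $O_p(1)$ with the correct justification.
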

The proof, detailed in Appendix \ref{app:anova_proofs}, follows by decomposing the estimation error into a Bahadur-style representation. The leading stochastic term is shown to be asymptotically normal via a Lyapunov CLT, while the plug-in and bias terms are asymptotically negligible under our bandwidth assumptions. The key intuition behind why one can establish a classical CLT is that all the geometry of the complex objects is absorbed into the squared distance
$d^{2}(Y_i,m_{\pm,\oplus})$. Further, because the metric space
\((\Omega,d)\) is assumed to be bounded, those distances possess uniformly bounded
moments of every order, delivering the necessary control over the asymptotic distribution.

\subsection{Asymptotic Distribution of the Test}

The following results determine the asymptotics of the test statistics in Eqs. \eqref{eq:F_n} and \eqref{eq:U_n} under the respective null hypotheses,
\begin{proposition} \label{prop:F_n}
 Under $H_0^{\text{mean}}: m_{+,\oplus}=m_{-,\oplus}$ and Assumptions \ref{asspt:sampling}-\ref{assptT:J} (including \ref{asspt:uniqueness}-p, \ref{asspt:curvature}-p for pooled estimators),
$$
 \sqrt{nh_n} F_n = o_p(1).
$$
\end{proposition}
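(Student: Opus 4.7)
The plan is to exploit the fact that under $H_0^{\text{mean}}$ the pooled population Fr\'echet mean $m_{p,\oplus}$ coincides with the common value $m_{+,\oplus} = m_{-,\oplus} =: m_\oplus$. Since $M_{p,\oplus}(\omega) = \tfrac12 M_{+,\oplus}(\omega) + \tfrac12 M_{-,\oplus}(\omega)$, any simultaneous minimizer of $M_{+,\oplus}$ and $M_{-,\oplus}$ also minimizes $M_{p,\oplus}$, and uniqueness (Assumption~\ref{asspt:uniqueness}-p) identifies it with $m_\oplus$. Consequently the three estimators $\hat{l}_{+,\oplus}, \hat{l}_{-,\oplus}, \hat{l}_{p,\oplus}$ all converge to the same limit. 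A standard rate-of-convergence argument for local Fr\'echet regression (a one-sided boundary analog of Theorem~4 in Petersen and M\"uller, 2019, which extends to the pooled estimator by the argument of Proposition~\ref{prop:pooled_consistency}) yields
\[
d(\hat{l}_{\pm,\oplus}, m_\oplus) + d(\hat{l}_{p,\oplus}, m_\oplus) \;=\; O_p(a_n),\qquad a_n \;:=\; h_m^{2/(\beta_1-1)} + (nh_m)^{-1/(2(\beta_2-1))},
\]
so by the triangle inequality $d(\hat{l}_{p,\oplus}, \hat{l}_{\pm,\oplus}) = O_p(a_n)$.

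Next, using $s_{p,in}(c,h) = \tfrac12\bigl(s_{+,in}(c,h) + s_{-,in}(c,h)\bigr)$ in the definitions of $\hat V_{p,\oplus}$ and $\hat V_{\pm,\oplus}$ from Section~\ref{sec:pooled_estimators} and equation~\eqref{eq:v_hat_split_main}, I would decompose
\[
F_n \;=\; \frac{1}{2}\sum_{\sigma\in\{+,-\}} \frac{1}{n}\sum_{i=1}^n s_{\sigma,in}(c,h)\bigl[\,d^2(Y_i,\hat{l}_{p,\oplus}) - d^2(Y_i,\hat{l}_{\sigma,\oplus})\,\bigr].
\]
Boundedness of $\Omega$ (with diameter $D$) together with the reverse triangle inequality gives $|d^2(Y_i,\omega_1) - d^2(Y_i,\omega_2)| \le 2D\, d(\omega_1,\omega_2)$. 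Combined with the standard local-linear fact that $n^{-1}\sum_i |s_{\pm,in}(c,h)| = O_p(1)$ (here the weights have magnitude $O_p(K_h(X_i-c))$ after normalizing the moment matrix), one obtains
\[
|F_n| \;\le\; D\bigl[\,d(\hat{l}_{p,\oplus}, \hat{l}_{+,\oplus}) + d(\hat{l}_{p,\oplus}, \hat{l}_{-,\oplus})\,\bigr]\cdot O_p(1) \;=\; O_p(a_n).
\]
It then remains to check that $\sqrt{nh_n}\,a_n = o(1)$: Assumption~\ref{asspt:bandwidth}(b) gives $\sqrt{nh}\,h_m^{2/(\beta_1-1)}\to 0$, handling the bias component of $a_n$, while Assumption~\ref{asspt:bandwidth}(d) gives $\sqrt{nh}\,(nh_m)^{-1/(2(\beta_2-1))}\to 0$, handling the stochastic component. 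Hence $\sqrt{nh_n}\,F_n = o_p(1)$.

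The main obstacle is conceptual rather than technical: the Lipschitz bound above is only first-order in $d(\hat{l}_{p,\oplus}, \hat{l}_{\pm,\oplus})$, whereas in a Euclidean setting one would exploit the first-order condition at a minimizer to obtain a \emph{second-order} (quadratic) bound on objective-value differences, yielding a much sharper rate. That shortcut is unavailable here because $\hat{l}_{+,\oplus}$ minimizes the mean-bandwidth objective built with $h_m$, whereas $F_n$ evaluates the variance-bandwidth objective built with $h$; the bandwidth mismatch means $\hat{l}_{+,\oplus}$ is not a stationary point of the relevant functional, so the quadratic reduction does not apply. The bandwidth conditions in Assumption~\ref{asspt:bandwidth}(b) and (d) are calibrated precisely so that the weaker linear rate inherited from the Lipschitz bound is already fast enough to kill $\sqrt{nh_n}\,F_n$.
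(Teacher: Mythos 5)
Your proof is correct and follows essentially the same route as the paper: under $H_0^{\text{mean}}$ the three Fr\'echet means coincide, the exact algebraic identity $F_n = \tfrac12\sum_\sigma n^{-1}\sum_i s_{\sigma,in}[d^2(Y_i,\hat{l}_{p,\oplus})-d^2(Y_i,\hat{l}_{\sigma,\oplus})]$ is combined with the Lipschitz bound $|d^2(Y,\omega_1)-d^2(Y,\omega_2)|\le 2\operatorname{diam}(\Omega)\,d(\omega_1,\omega_2)$, the fact that $n^{-1}\sum_i|s_{\sigma,in}|=O_p(1)$, the first-stage rates $d(\hat{l}_{\sigma,\oplus},m_\oplus)=O_p(a_n)$ and $d(\hat{l}_{p,\oplus},m_\oplus)=O_p(a_n)$, and the bandwidth conditions \ref{asspt:bandwidth}(b) and (d). The only superficial difference is that the paper inserts the common population mean $m_\oplus$ as a bridge and invokes Lemma~\ref{lemma:clt_first_term} three times, whereas you compare the estimated means directly through the triangle inequality --- the two bookkeeping choices are equivalent and rest on the identical ingredients.
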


\begin{proposition} \label{prop:U_n}
 Under $H_0^{\text{var}}: V_{+,\oplus}=V_{-,\oplus}$, and Assumptions \ref{asspt:sampling}-\ref{assptT:J},
$$
nh_n U_n \xrightarrow[]{d} \chi_{1}^2,
$$
where $\chi_{1}^2$ is the chi-squared distribution with 1 degree of freedom.
\end{proposition}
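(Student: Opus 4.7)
The plan is to leverage the one-sided central limit theorem established in Theorem \ref{thm:CLT} together with the independence of the left- and right-side estimators, and then package the squared standardized difference as a continuous map of an asymptotically normal vector.

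First, I would observe the crucial structural fact that $\hat{V}_{+,\oplus}$ is a function only of observations with $X_i \geq c$: the indicator $\mathbf{1}(X_i \geq c)$ appearing in $s_{+,in}(c,h)$ annihilates all observations below the cutoff, both in the weights and in the moments $\hat{\mu}_{+,j}$, and the minimizer $\hat{l}_{+,\oplus}$ defining $\hat{V}_{+,\oplus}$ inherits this property. An analogous statement holds for $\hat{V}_{-,\oplus}$. By the i.i.d. Assumption \ref{asspt:sampling}, the two subsamples $\{(Y_i,X_i): X_i\ge c\}$ and $\{(Y_i,X_i): X_i<c\}$ are mutually independent, so $\hat{V}_{+,\oplus} \perp\!\!\!\perp \hat{V}_{-,\oplus}$, and the same holds for $\hat{\sigma}_{+,V}^2$ and $\hat{\sigma}_{-,V}^2$.

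Second, applying Theorem \ref{thm:CLT} separately on each side and combining with this independence gives joint asymptotic normality,
\[
\sqrt{nh_n}\begin{pmatrix}\hat{V}_{+,\oplus}-V_{+,\oplus}\\ \hat{V}_{-,\oplus}-V_{-,\oplus}\end{pmatrix} \xrightarrow{d} N\!\left(0,\; \mathrm{diag}\!\left(\tfrac{S_+\sigma_{+,V}^2}{f_X(c)},\; \tfrac{S_-\sigma_{-,V}^2}{f_X(c)}\right)\right).
\]
Under $H_0^{\text{var}}$ we have $V_{+,\oplus}=V_{-,\oplus}$, so subtracting rows yields, by the continuous mapping theorem,
\[
\sqrt{nh_n}\bigl(\hat{V}_{+,\oplus}-\hat{V}_{-,\oplus}\bigr) \xrightarrow{d} N\!\left(0,\; \tfrac{S_+\sigma_{+,V}^2+S_-\sigma_{-,V}^2}{f_X(c)}\right).
\]

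Third, I would dispose of the denominator. Consistency of $\hat{\sigma}_{\pm,V}^2 \xrightarrow{p} \sigma_{\pm,V}^2$ is a by-product of the CLT proof (it is the sample analogue of the leading variance term of a Bahadur-style expansion, and follows from the same second-moment arguments combined with consistency of $\hat{l}_{\pm,\oplus}$ and the boundedness of $\Omega$); the $\sqrt{nh}$-consistency of $\hat{f}_X(c)$ is assumed. Combining these with Slutsky's theorem, the denominator of $U_n$ converges in probability to the variance of the limit above. Hence $nh_n U_n$ is the square of a standardized asymptotically $N(0,1)$ variable, and a final application of the continuous mapping theorem delivers $nh_n U_n \xrightarrow{d}\chi_1^2$.

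The only genuine obstacle is verifying the consistency of $\hat\sigma_{\pm,V}^2$ under Assumptions \ref{asspt:sampling}--\ref{assptT:J}; everything else is a clean assembly of independence, joint normality, and Slutsky. That consistency hinges on uniform control of $d^4(Y_i,\hat l_{\pm,\oplus})$ around $d^4(Y_i,m_{\pm,\oplus})$, which follows from the rate on $\hat l_{\pm,\oplus}$ implicit in Assumption \ref{asspt:curvature} and the boundedness of $\Omega$ (guaranteeing uniformly bounded fourth moments), mirroring the corresponding step in the proof of Theorem \ref{thm:CLT}.
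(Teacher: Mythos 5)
Your proof is correct and follows the same route as the paper's: apply the one-sided CLT (Theorem \ref{thm:CLT}), use consistency of $\hat\sigma_{\pm,V}^2$ (the paper's Proposition \ref{prop:sigma_v_convergence}) and of $\hat f_X(c)$ with Slutsky, then square and invoke continuous mapping. You actually spell out a step the paper leaves implicit — that the asymptotic variance of $\sqrt{nh}(\hat V_{+,\oplus}-\hat V_{-,\oplus})$ is the \emph{sum} of the one-sided variances because the two estimators depend on disjoint subsamples (more precisely, they are conditionally independent given $\mathbf X$, which matches the conditional Lyapunov CLT used in Lemma \ref{lemma:lyapunov}) — so your version is, if anything, slightly more explicit than the paper's.
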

Proofs are in Appendix \ref{app:anova_proofs}. The result for $U_n$ follows straightforwardly from the central limit theorem derived in the previous section. The result for $F_n$ is intuitive: under the null, the variance estimators converge to the same quantity, and their difference converges faster to $0$ than each separate term does.

 For the combined test statistic in \eqref{eq:test_combined_main}, the following result obtains.
\begin{corollary} \label{cor:T_n_main}
Under $H_0: m_{+,\oplus}=m_{-,\oplus}$ and $V_{+,\oplus}=V_{-,\oplus}$, and Assumptions \ref{asspt:sampling}-\ref{assptT:J} (including -p versions), 
$$
T_n \to_d \chi_{1}^2.
$$
\end{corollary}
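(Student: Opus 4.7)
The plan is to derive the limit of $T_n$ by treating its two summands separately and then recombining them via Slutsky's theorem. First, since $H_0$ implies $H_0^{\text{var}}$, Proposition \ref{prop:U_n} applies directly to the leading term and yields $nh_n U_n \xrightarrow{d} \chi_1^2$. So the only real work is to show that the second summand is asymptotically negligible.

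Write the second summand as
\[
R_n \coloneqq \frac{nh_n F_n^2}{(\hat{\sigma}_{+,V}^2 S_+/\hat{f}_X(c)) + (\hat{\sigma}_{-,V}^2 S_-/\hat{f}_X(c))}.
\]
I would argue $R_n = o_p(1)$ as follows. For the numerator, $H_0$ also implies $H_0^{\text{mean}}$, so Proposition \ref{prop:F_n} gives $\sqrt{nh_n} F_n = o_p(1)$; squaring (continuous mapping) yields $nh_n F_n^2 = o_p(1)$. For the denominator, $\hat{f}_X(c)$ is $\sqrt{nh}$-consistent for $f_X(c) > 0$ by Assumption \ref{asspt:densities}(a); the consistency of $\hat{\sigma}_{\pm,V}^2$ for $\sigma_{\pm,V}^2$ is a byproduct of the Bahadur-style expansion that underlies Theorem \ref{thm:CLT} (it is precisely the self-normalizer of that CLT); and $S_\pm$ are fixed positive constants determined by the kernel via Assumption \ref{asspt:kernel}. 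Thus the denominator converges in probability to $(S_+ \sigma_{+,V}^2 + S_- \sigma_{-,V}^2)/f_X(c) > 0$, and the continuous mapping theorem delivers $R_n \to_p 0$.

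Combining these two ingredients through Slutsky's theorem,
\[
T_n = nh_n U_n + R_n \xrightarrow{d} \chi_1^2 + 0 = \chi_1^2,
\]
which is the desired conclusion.

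The main obstacle is essentially trivial once Propositions \ref{prop:F_n} and \ref{prop:U_n} are in hand; the only point that genuinely demands care is the strict positivity of the limiting denominator $(S_+ \sigma_{+,V}^2 + S_- \sigma_{-,V}^2)/f_X(c)$, since without it one cannot claim that $R_n$ degenerates to zero nor that the chi-squared limit is non-degenerate. This in turn boils down to $Y$ exhibiting nontrivial conditional dispersion on each side of $c$, a mild requirement already implicit in the self-normalization used to derive Theorem \ref{thm:CLT}.
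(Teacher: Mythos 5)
Your argument matches the paper's own (one-sentence) proof: under the joint null, $\sqrt{nh_n}F_n = o_p(1)$ kills the second summand since the denominator converges to a positive constant, and Proposition \ref{prop:U_n} plus Slutsky handle the first. Your remark about the implicit positivity of $\sigma_{\pm,V}^2$ is a fair observation, as it is indeed tacitly assumed in the self-normalized form of Theorem \ref{thm:CLT}.
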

This follows because, under the joint null, the contribution of the $F_n$ term vanishes and hence the asymptotic distribution of the $U_n$ term dominates.  Based on this corollary, the rejection region for a test of size $\alpha$ is $R_{n, \alpha}=\{T_n>\chi_{1, \alpha}^2\}$. Since this is an asymptotic result, one may worry that the $F_n$ term can lead to size distortions in finite sample. While the simulations in Section \ref{sec:simulations} do reveal slight overcoverage in very small samples, the test's acceptance probability rapidly converges to the nominal level.

\subsection{Consistency of the Test} \label{sec:consistency}
To study the power of the test, define the population counterparts,
\[
F_{pop}=V_{p,\oplus} - \left( \frac12 V_{+,\oplus} +  \frac12 V_{-,\oplus} \right), \quad U_{pop}=\frac{\left(V_{+, \oplus}-V_{-, \oplus}\right)^2}{\left(S_{+} \sigma_{+, V}^2/f_X(c)\right) + \left(S_{-} \sigma_{-, V}^2/f_X(c)\right)}.
\]

\begin{proposition} \label{prop:F_n_F_consistency}
Under Assumptions \ref{asspt:sampling}-\ref{assptT:J} (including -p versions) and \ref{asspT:entropy}, as $n \to \infty$,
$|F_n - F_{pop}| = o_p(1)$. $F_{pop} \ge 0$, and $F_{pop}=0$ if and only if $m_{p,\oplus} = m_{+,\oplus} = m_{-,\oplus}$.
Also, $|U_n - U_{pop}| = o_p(1)$, and $U_{pop} \ge 0$, with $U_{pop}=0$ if and only if $V_{+,\oplus}=V_{-,\oplus}$.
\end{proposition}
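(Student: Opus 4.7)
The plan is to establish the two consistency statements via the continuous mapping theorem after verifying convergence of each building block, and to derive the non-negativity statements directly from the optimization properties of the Fr\'echet means.

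\textbf{Step 1 (Consistency $F_n \to_p F_{pop}$).} Theorem \ref{thm:CLT} yields $\hat{V}_{\pm,\oplus} \to_p V_{\pm,\oplus}$ at rate $\sqrt{nh}$. An analogous argument applied to the pooled weights $s_{p,in}(c,h)$, combined with Proposition \ref{prop:pooled_consistency}, delivers $\hat{V}_{p,\oplus} \to_p V_{p,\oplus}$. Because $F_n$ is a continuous function of these three quantities, the continuous mapping theorem gives $F_n \to_p F_{pop}$.

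\textbf{Step 2 (Non-negativity of $F_{pop}$).} Using $V_{\pm,\oplus}=M_{\pm,\oplus}(m_{\pm,\oplus})$ and the definition of $V_{p,\oplus}$, I would rewrite
\[
F_{pop} = \tfrac12\bigl[M_{+,\oplus}(m_{p,\oplus}) - M_{+,\oplus}(m_{+,\oplus})\bigr] + \tfrac12\bigl[M_{-,\oplus}(m_{p,\oplus}) - M_{-,\oplus}(m_{-,\oplus})\bigr].
\]
Each bracket is non-negative by the defining property of $m_{\pm,\oplus}$ as the minimizer of $M_{\pm,\oplus}$, so $F_{pop}\geq 0$. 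If $F_{pop}=0$, both brackets must vanish, and uniqueness of the Fr\'echet means (Assumption \ref{asspt:uniqueness}) forces $m_{p,\oplus}=m_{+,\oplus}=m_{-,\oplus}$. The converse is immediate from the same identity.

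\textbf{Step 3 (Consistency $U_n \to_p U_{pop}$).} The numerator $(\hat{V}_{+,\oplus}-\hat{V}_{-,\oplus})^2$ converges by Step 1 and continuous mapping. For the denominator I need $\hat{\sigma}_{\pm,V}^2 \to_p \sigma_{\pm,V}^2$ and the assumed consistency of $\hat{f}_X(c)$. Writing
\[
\hat{\sigma}_{\pm,V}^2 = \tfrac1n\sumn s_{\pm,in}(c,h)\,d^4(Y_i,\hat{l}_{\pm,\oplus}) - \hat{V}_{\pm,\oplus}^{\,2},
\]
the second term converges to $V_{\pm,\oplus}^2$ by Theorem \ref{thm:CLT}. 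For the first, I would use a two-step argument: (i) a uniform law of large numbers for $\omega \mapsto \tfrac1n\sumn s_{\pm,in}(c,h) d^4(Y_i,\omega)$ over $\Omega$, using the global entropy Assumption \ref{asspT:entropy} together with boundedness of $d$ (so $d^4$ is uniformly bounded and Lipschitz in $\omega$ with envelope depending only on $\operatorname{diam}\Omega$); and (ii) consistency $\hat{l}_{\pm,\oplus} \to_p m_{\pm,\oplus}$ from the one-sided extension of \citet{petersen2019frechet}, which together with (i) permits substitution of the random plug-in. Slutsky's theorem then delivers $U_n \to_p U_{pop}$.

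\textbf{Step 4 (Non-negativity of $U_{pop}$).} Immediate: the numerator is a square, the denominator is strictly positive since $f_X(c)>0$ by Assumption \ref{asspt:densities} and $\sigma_{\pm,V}^2>0$ under non-degeneracy at $c$, and $U_{pop}=0$ exactly when the numerator vanishes, i.e.\ $V_{+,\oplus}=V_{-,\oplus}$.

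The main obstacle is the uniform convergence in Step 3. Unlike Theorem \ref{thm:CLT}, which only requires entropy control in a neighborhood of $m_{\pm,\oplus}$ (so the local condition \ref{assptT:J} suffices), consistency of $\hat{\sigma}_{\pm,V}^2$ requires uniform control of the kernel-weighted fourth-moment process over all of $\Omega$ before the random plug-in $\omega=\hat{l}_{\pm,\oplus}$ can be substituted. This is precisely why the hypothesis is strengthened from the local entropy \ref{assptT:J} to the global entropy \ref{asspT:entropy} for the power results.
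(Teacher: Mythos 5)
Your Steps 1, 2, and 4 follow essentially the same route as the paper: decompose $F_{pop}$ as a half-sum of two one-sided objective-function gaps, invoke the minimizing property of $m_{\pm,\oplus}$ for non-negativity, and use uniqueness (Assumption \ref{asspt:uniqueness}) to pin down equality. The paper handles the pooled variance consistency explicitly via the bound $|\hat{V}_{p,\oplus} - \tfrac1n\sum s_{p,in}d^2(m_{p,\oplus},Y_i)| \le 2\operatorname{diam}(\Omega)\,d(\hat{l}_{p,\oplus},m_{p,\oplus})\,\tfrac1n\sum|s_{p,in}|$ together with Lemma~\ref{lemma:s_in_to_E}, rather than waving at an ``analogous argument,'' but this is a presentational difference.

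Your Step 3 takes a genuinely different (and heavier) route. You invoke a uniform law of large numbers for the fourth-moment process over all of $\Omega$, requiring the global entropy condition \ref{asspT:entropy}, and then substitute the random plug-in by consistency of $\hat{l}_{\pm,\oplus}$. The paper instead already has Proposition~\ref{prop:sigma_v_convergence}, which establishes $\hat{\sigma}_{\pm,V}-\sigma_{\pm,V}=O_p((nh)^{-1/2})$ under only the \emph{local} entropy condition \ref{assptT:J}: the plug-in error is killed by the crude bound $|d^4(\hat{l}_{\pm,\oplus},Y_i)-d^4(m_{\pm,\oplus},Y_i)| \le 2\operatorname{diam}^2(\Omega)|d^2(\hat{l}_{\pm,\oplus},Y_i)-d^2(m_{\pm,\oplus},Y_i)|$, reducing the problem to Lemma~\ref{lemma:clt_first_term}, which needs no entropy control beyond a neighborhood of $m_{\pm,\oplus}$. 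Your approach still delivers the conclusion (the hypotheses of the proposition do list \ref{asspT:entropy}), but it is not the minimal argument.

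More importantly, your closing rationale misidentifies why \ref{asspT:entropy} appears. You claim that consistency of $\hat{\sigma}_{\pm,V}^2$ ``requires uniform control of the kernel-weighted fourth-moment process over all of $\Omega$,'' but this is not so --- Proposition~\ref{prop:sigma_v_convergence} demonstrates that the local condition suffices, precisely because of the plug-in bound above. The global entropy condition is genuinely needed only for the uniform-in-$P$ consistency statements in Lemma~\ref{lemma:consistency_sup_p}, which underlies the power result of Theorem~\ref{thm:consistency}; there the bound must hold uniformly over the class $\mathcal{P}$ of data-generating measures, which no longer permits localization around a single fixed target $m_{\pm,\oplus}$. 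In Proposition~\ref{prop:F_n_F_consistency} itself, \ref{asspT:entropy} is listed conservatively but is not used anywhere in the paper's proof.
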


This proposition provides the crucial link between the test statistics and the hypotheses, showing that $F_n$ and $U_n$ are consistent for population measures of mean and variance differences, respectively. These population measures are zero if and only if no discontinuity exists. This equivalence is immediate for $U_n$, while for $F_n$, it follows from the uniqueness of the Fr\'echet means.

Finally, consider sequences of alternatives $H_A^{(n)}$ where $F_{pop} \ge a_n$ or $U_{pop} \ge b_n$ for sequences $a_n, b_n \ge 0$ where either $a_n >0$ or $b_n>0$. The power function is $\beta_{H_A^{(n)}}=\inf_{(F_{pop}, U_{pop}) \in H_A^{(n)}} P(T_n > \chi_{1,\alpha}^2)$.

\begin{theorem}[Consistency of the Test] \label{thm:consistency}
Under Assumptions \ref{asspt:sampling}-\ref{asspT:entropy} (including -p versions), for sequences of alternatives $H_A^{(n)}$ and any $\alpha >0$,
\begin{enumerate}[(a)]
\item If $F_{pop} \ge a_n$ with $(nh_n)^{1/2} a_n \rightarrow \infty$, then $\beta_{H_A^{(n)}} \rightarrow 1$.
\item If $U_{pop} \ge b_n$ with $nh_n b_n \rightarrow \infty$, then $\beta_{H_A^{(n)}} \rightarrow 1$.
\end{enumerate}
\end{theorem}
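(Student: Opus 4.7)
The plan is to exploit the fact that the test statistic $T_n = nh_n U_n + nh_n F_n^2 / D_n$, where $D_n \coloneqq \hat{\sigma}_{+,V}^2 S_+/\hat{f}_X(c) + \hat{\sigma}_{-,V}^2 S_-/\hat{f}_X(c) > 0$, is a sum of two non-negative quantities. Thus, if either summand diverges to infinity in probability, the whole statistic exceeds the fixed critical value $\chi^2_{1,\alpha}$ with probability tending to one, giving $\beta_{H_A^{(n)}} \to 1$. I therefore handle the two cases separately, with the common ingredient being a $(nh_n)^{-1/2}$ rate of convergence for each component variance estimator.

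For part (a), the key step is to strengthen Proposition \ref{prop:F_n_F_consistency} from the $o_p(1)$ consistency statement to the rate $F_n - F_{pop} = O_p((nh_n)^{-1/2})$. The Bahadur-style representation underlying Theorem \ref{thm:CLT} delivers this rate for $\hat{V}_{\pm,\oplus}$, and an analogous argument, invoking the pooled Assumptions \ref{asspt:uniqueness}-p and \ref{asspt:curvature}-p together with the global entropy condition \ref{asspT:entropy}, yields the same rate for $\hat{V}_{p,\oplus}$. Writing $F_n = F_{pop} + (nh_n)^{-1/2} Z_n$ with $Z_n = O_p(1)$, under the alternative I have $F_n \ge a_n + (nh_n)^{-1/2} Z_n = a_n(1 + o_p(1))$, since $(nh_n)^{1/2} a_n \to \infty$ forces the relative error $Z_n/((nh_n)^{1/2} a_n)$ to be $o_p(1)$. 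Hence $F_n \ge a_n/2$ with probability tending to one. Combining this with $D_n \to D > 0$ (by Slutsky, using consistency of $\hat{\sigma}_{\pm,V}^2$ and $\hat{f}_X(c)$) gives $nh_n F_n^2 / D_n \ge nh_n a_n^2 / (4D + o_p(1)) \to \infty$, so $T_n$ diverges.

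For part (b), I would work directly with $\hat{\Delta}_V \coloneqq \hat{V}_{+,\oplus} - \hat{V}_{-,\oplus}$ and $\Delta_V \coloneqq V_{+,\oplus} - V_{-,\oplus}$. Theorem \ref{thm:CLT} gives $\hat{\Delta}_V = \Delta_V + O_p((nh_n)^{-1/2})$. Under the alternative, $U_{pop} = \Delta_V^2/D \ge b_n$ with $nh_n b_n \to \infty$, so $|\Delta_V| \ge \sqrt{D b_n}$ and the relative stochastic error satisfies $(nh_n)^{-1/2}/|\Delta_V| \le 1/\sqrt{D \cdot nh_n b_n} \to 0$. Therefore $\hat{\Delta}_V^2 / \Delta_V^2 \to 1$ in probability, which combined with $D_n \to D$ yields $U_n / U_{pop} \to 1$. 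In particular, $nh_n U_n \ge nh_n b_n / 2 \to \infty$ with probability tending to one, so $T_n$ diverges.

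The main obstacle is establishing the Bahadur representation and the $(nh_n)^{-1/2}$ rate for the pooled variance estimator $\hat{V}_{p,\oplus}$ under the alternative, where the population minimizer $m_{p,\oplus}$ of the mixed objective $\tfrac{1}{2} M_{+,\oplus} + \tfrac{1}{2} M_{-,\oplus}$ differs from both one-sided means $m_{\pm,\oplus}$. This is precisely where the stronger global entropy condition \ref{asspT:entropy} is required: it provides the uniform control over objective functions across the full object space $\Omega$, rather than only in local neighborhoods of each individual Fr\'echet mean, needed to control $\hat{l}_{p,\oplus}$ and then propagate into the rate for $\hat{V}_{p,\oplus}$ through the linearization of Theorem \ref{thm:CLT}. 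Once this pooled rate is in hand, the rest of the argument is routine manipulation of the two non-negative pieces of $T_n$.
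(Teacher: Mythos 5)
Your approach is structurally different from the paper's and, while the leading idea (each summand of $T_n$ is non-negative, so divergence of either one forces rejection) is sound, there is a genuine gap: you never address uniformity over the alternative class, which is exactly what the definition of $\beta_{H_A^{(n)}}$ requires.

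Recall that $\beta_{H_A^{(n)}} = \inf_{(F_{pop},U_{pop})\in H_A^{(n)}} P(T_n > \chi^2_{1,\alpha})$. The infimum ranges over data-generating processes, so a pointwise statement of the form ``for each fixed $P$ in $H_A^{(n)}$, $P(F_n \ge a_n/2) \to 1$'' is not enough; the convergence could deteriorate arbitrarily as the underlying $P$ varies within the alternative class at any given $n$, leaving the infimum bounded away from $1$. Your proposal invokes the Bahadur representation from Theorem \ref{thm:CLT} for a rate $O_p((nh_n)^{-1/2})$, but Theorem \ref{thm:CLT} and Lemma \ref{lemma:clt_first_term} are pointwise in $P$. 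The paper handles the issue head-on via Lemma \ref{lemma:consistency_sup_p}, which establishes bounds of the form $\sup_{P\in\mathcal{P}} P(|\hat V_{\pm} - V_{\pm}| > \epsilon) \to 0$ (and analogously for $\hat\sigma^2_{\pm,V}$ and $\hat V_{p}$), by taking supremums over $\omega\in\Omega$ in the empirical-process bounds so that they no longer depend on the location of $m_{\pm,\oplus}$ or $m_{p,\oplus}$. That is where the global entropy condition \ref{asspT:entropy} actually enters: not merely to control the pooled estimator (as your proposal suggests), but to make \emph{all} the bounds in the rate and consistency arguments uniform in $P$, for the one-sided estimators as well. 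You have correctly intuited that the global entropy condition is needed somewhere for power, but misattributed its role.

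There is also a structural difference worth noting. The paper does not argue that either term of $T_n$ diverges separately; instead it decomposes $U_n = \tilde U_n + \Delta_n$, shows that $nh_n\tilde U_n \to \chi^2_1$ (repeating the argument of Proposition \ref{prop:U_n} in a way that is, implicitly, uniform in $P$), and uses the uniform consistency of $\Delta_n + F_n^2/\hat\sigma^2$ for $U_{pop} + F_{pop}^2/\sigma^2$ to push the shift $-\tfrac{nh_n}{2}(U_{pop} + F_{pop}^2/\sigma^2)$ to $-\infty$ inside the tail probability of a tight random variable. Both routes can in principle be made to work, but yours requires uniform-in-$P$ rates where the paper requires uniform-in-$P$ $\epsilon$-consistency plus a uniform CLT for $\tilde U_n$. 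Either way, you must prove the uniformity, and the argument you have written does not.
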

 This shows the test will pick up any jump whose size does not shrink faster than the sampling noise inside the bandwidth window, i.e., the test is consistent against contiguous alternatives.

\section{Simulations}
\label{sec:simulations}

To evaluate the test's finite-sample performance, I carry out Monte Carlo simulations in three metric spaces: probability densities, covariance matrices, and graph Laplacians of networks. I consider both piecewise-smooth and piecewise-constant data generating processes (DGPs) to assess the usefulness of the nonparametric regression approach (see Appendix~\ref{app:simulation} for full details). Though there are no general metric-space tests that directly compete with the proposed one, I adapt the $k$-sample test of \citet{dubey2019frechet} to compare the two means in a shrinking window around the cutoff to illustrate the value of the nonparametric regression approach. Since this is essentially equivalent to a local constant regression approach, we should expect it to do well on the piecewise-constant signal but overreject on the piecewise-smooth one.

\begin{table}[ht!] 
  \centering      
  \footnotesize { % Start a group to keep the change local
\renewcommand{\arraystretch}{0.8} % Adjust this value (e.g., 0.75) for more/less compression
\begin{tabular}{lccr}
  \toprule
  Metric Space & N & Size & Power \\ 
  \midrule
  \multirow{3}{*}{Covariance} & 200 & 0.031 & 0.997 \\ 
   & 500 & 0.046 & 0.997 \\ 
   & 1000 & 0.048 & 1.000 \\ 
  \cmidrule(lr){2-4}
  \multirow{3}{*}{Density} & 200 & 0.038 & 0.999 \\ 
   & 500 & 0.043 & 1.000 \\ 
   & 1000 & 0.043 & 1.000 \\ 
  \cmidrule(lr){2-4}
  \multirow{3}{*}{Network} & 200 & 0.021 & 1.000 \\ 
   & 500 & 0.062 & 1.000 \\ 
   & 1000 & 0.055 & 1.000 \\ 
  \bottomrule
\end{tabular}
} % End the group \caption{Size and Power by Sample Size and Metric Space}
  \label{tab:sim_results}
  \floatfoot{\scriptsize \textit{Note:}
Empirical rejection rates of the proposed Fr\'echet jump test from 1000 Monte Carlo simulations at a nominal $\alpha=0.05$ level. `Size` is the rejection rate under the null hypothesis of no jump ($H_0$). `Power` is the rejection rate under a fixed alternative ($H_1$) with a jump introduced at $c=0.5$. The Data Generating Process is piecewise-smooth in all cases. The fixed jump magnitudes used to calculate `Power` are: $\delta_D=1.5$ for Densities, $\beta_C=1.5$ for Covariance matrices, and $\delta_N=0.25$ for Networks.}
\end{table}

The results confirm the asymptotic theory. Table~\ref{tab:sim_results} shows that the proposed test exhibits excellent size control, with empirical rejection rates under the null quickly converging to the nominal 5\% level as the sample size increases. The test also demonstrates robust power across all settings. Figure~\ref{fig:all_power_curves_combined_stacked} highlights the key advantage of the regression-based approach. While both tests perform well under a simple piecewise-constant signal (right panels) -- with the $k$-sample test expectedly converging faster due to its parametric rate --, the localized $k$-sample test severely over-rejects the null under the piecewise-smooth DGPs (left panels). My proposed test, by contrast, maintains correct size while reliably detecting true jumps, confirming its utility for a broader and more realistic class of problems.

 \begin{figure}[ht!]
    \centering % Center the figure block
    % Replace 'power_curves_stacked_N200.png' with the actual filename generated by your R script
\includegraphics[width=0.99\textwidth]{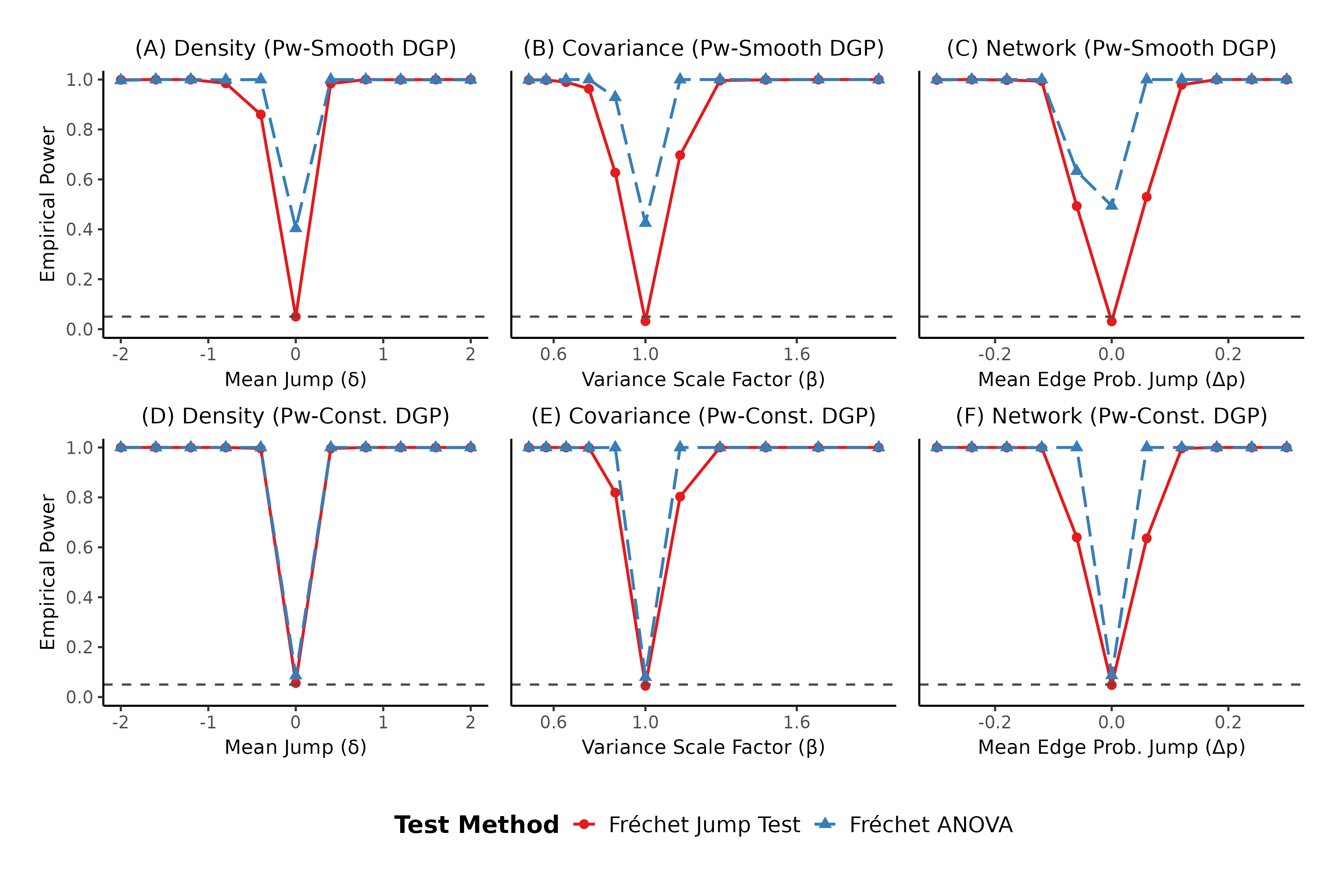}
\vspace{-6mm}
    \caption{Power Curves}
    \floatfoot{\scriptsize \textit{Note:} Figure shows empirical power curves comparing the proposed Fr\'echet jump test and a localized adaptation of the two-sample test from \citet{dubey2019frechet}. x-axis represents the jump magnitude: additive mean jump for densities ($\delta_D$), multiplicative scale factor for covariance matrices ($\beta_C$), and additive edge probability jump for networks ($\delta_N$). Columns correspond to the metric space; rows to the DGP (Piecewise-Smooth vs. Piecewise-Constant). Results based on 1000 simulations with $N=200$. Dotted horizontal line indicates the nominal test level of $\alpha=0.05$.}
\label{fig:all_power_curves_combined_stacked}
\end{figure}

 \section{Empirical Illustrations} \label{sec:applications}

 \subsection{Non-Compete Agreements and the Composition of Work from Home}

\paragraph{Question.} Here, I apply the test to investigate how the enforceability of non-compete agreements (NCAs) affects the composition of work-from-home (WFH) arrangements. In 2020, Washington state enacted a law (RCW 49.62) that makes NCAs enforceable only for employees whose annual earnings exceed a specific, inflation-adjusted threshold of approximately \$100,000. This policy creates a sharp discontinuity where non-compete agreements are unenforceable below the income threshold but become enforceable when an employee's salary crosses it. Unlike a traditional RD design, however, I cannot rule out manipulation around the threshold. Indeed, the object of interest is precisely how firms and workers strategically adjust employment terms when the legal landscape of post-employment mobility changes. 

My central hypothesis is that WFH arrangements become a key bargaining margin, particularly during the sample period which covered the COVID-19 pandemic years when WFH became ubiquitous. When an employee's salary crosses the threshold, rendering an NCA enforceable, their ability to switch to a local competitor is curtailed. One might expect a worker to respond by seeking employment with an out-of-state firm to escape the policy's reach. However, my results below do not reveal any significant jump in the probability of working for an out-of-state employer at the threshold. This suggests that adjustments are more likely to occur within the current employment relationship. In particular, faced with a binding NCA, a high-earning employee may demand non-monetary compensation for their reduced future mobility. From the firm's perspective, granting greater WFH flexibility could be a valuable, low-cost concession to attract these employees. Therefore, I test for an abrupt shift in the average WFH composition at the earnings threshold for newly hired employees.

Previous literature has found that only 10\% of employees bargain over non-compete agreements \citep{starr2021noncompete}, though this was estimated prior to the spread of non-compete laws such as Washington's, and likely increases for higher-income employees like the ones in my sample. \citet{lipsitz2022low} found that non-competes negatively affect workers by reducing their wages and job-to-job mobility, creating cause for bargaining over the agreement. Further, the emerging literature on the effects of remote work have documented WFH's role as a job amenity that is substitutable for higher income and leads to less attrition \citep{mas2017valuing, barrero2022shift, bloom2022hybrid, cullen2025home}.

\paragraph{Data and Empirical Approach.} I use data from the US Census Bureau's Survey of Income and Program Participation (SIPP) for the years 2020 through 2023 to construct a panel of individual job spells for employees in Washington state. The running variable is the employee's annual salary, centered around the relevant year-specific legal threshold. The outcome of interest is the composition of an employee's WFH schedule, defined as a 6-dimensional vector representing the average proportion of time during the job spell that the employee worked 0 to 5 days from home. Since the components of this compositional vector are non-negative and sum to one, I formally test for a discontinuity by applying a square-root transformation so the vector lies on the positive orthant of a unit 5-sphere ($\mathbb{S}^5 \cap \mathbb{R}_+^6$) and adopting the geodesic metric, $d_g\left(y_1, y_2\right)=\arccos \left(y_1^{\top} y_2\right)$ \citep[Supp. S.4.2]{bhattacharjee2023single}.

\begin{figure}[ht!]
\includegraphics[width=0.8\textwidth]{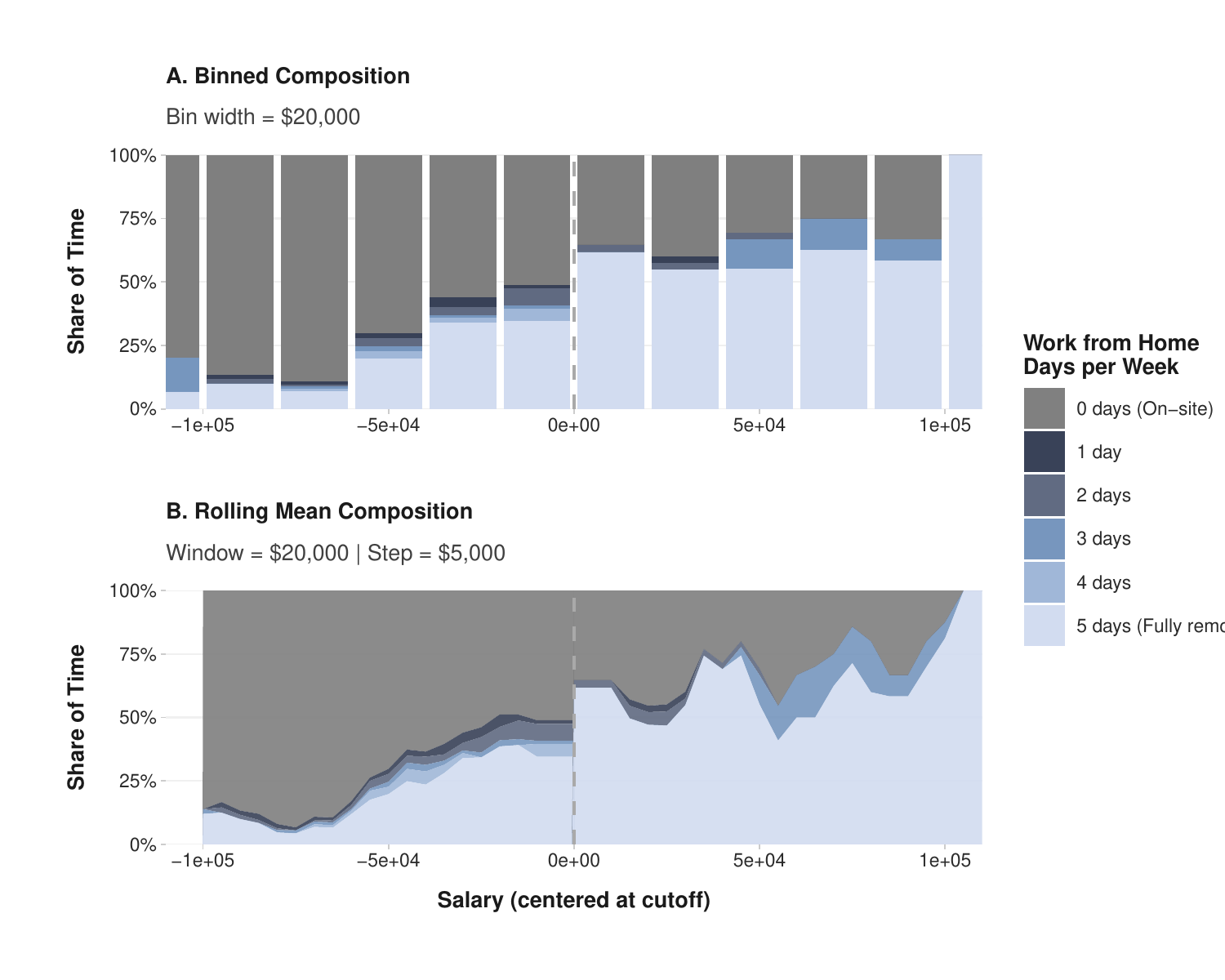}
\caption{Work from Home Composition by Income: Binned and Smoothed}
\label{fig:wfh_descriptive}
\floatfoot{\scriptsize \textit{Note}: figures show 100\% stacked bar chart and 100\% stacked area chart indicating percentage of time spent working from home 0--5 days a week throughout a given employment spell (y axis), by annual salary of the employee (x axis). Annual salary is centered around the Washington non-compete enforceability cutoff in each respective year (2020-2023), indicated by dashed vertical line. Darker colors indicate fewer days worked from home. }
\end{figure}

\paragraph{Results.}

Figure \ref{fig:wfh_descriptive} depicts the evolution of employees' WFH composition during their job spell (y axis) by their annual salary (x axis), using both a binned stacked bar chart and one smoothed in a rolling window. The annual salary variable was centered around the enforceability cutoff of approximately \$100K annual income. In both panels of the figure, one can see a stark jump in the share of days worked fully remotely (light blue color) once the non-compete agreements become enforceable, while the share of time worked fully in person or partly at home diminishes. This suggests that employees, when faced with the possibility of a non-compete, used work from home as a bargaining margin during the pandemic. 

\begin{figure}[ht!]
\includegraphics[width=0.8\textwidth]{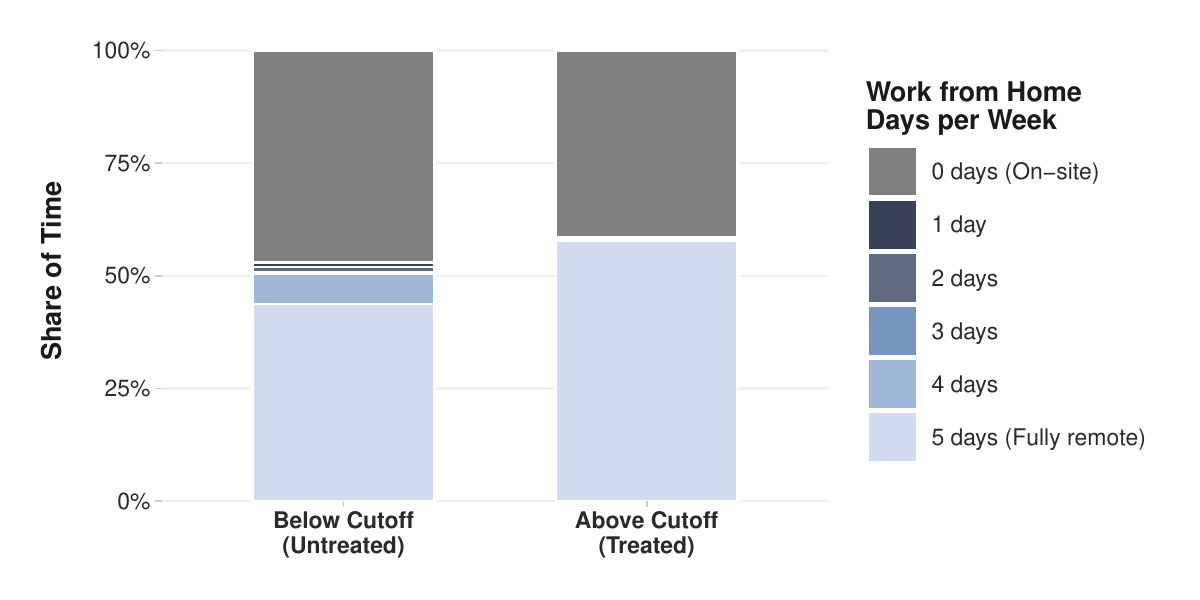}
\caption{Non-Compete Enforceability and Work-From Home Job Composition: Fr\'echet Mean Estimates}
\label{fig:frechet_mean}
\floatfoot{\textit{Note}: figure shows estimates of Fr\'echet mean of employees' WFH composition to the left and right of the non-compete enforceability cutoff in Washington state (2020-2023). Darker colors indicate fewer days worked from home.}
\end{figure}

These descriptive results are echoed in the  local Fr\'echet mean estimates to the left and right of the cutoff in Figure \ref{fig:frechet_mean}. The mean WFH composition below the cutoff has approximately 15\% less time spent working fully remotely than the composition above the cutoff. Partly remote work vanishes completely, while fully in-person work diminishes by approximately 5\%. The test for this difference between the two composition vectors has a p-value of 0.059, significant at the 10\% level. By contrast, Table \ref{tab:rdd} reports the coefficients of a classical local polynomial regression on each of the components of the vector separately (share of time worked 0--5 days at home, as well as whether any or all time was spent working from home). None of these coefficients are significant at the 10\% level, illustrating the additional power the test can provide by exploiting the  information in the compositional structure of the data. As mentioned, I do not find evidence that employees above the cutoff were more likely to work or move out of state. Furthermore, although one may expect employees to respond strategically to the cut-off point, the RD estimates do not suggest any covariate imbalance in age, gender, or education of the employees. These findings indicate that the enforceability of non-competes induced a structural shift in WFH arrangements, which served as a key bargaining margin for affected employees during the pandemic.

\subsection{Preferential Tariff Loss and Countries' Input-Output Networks}

\paragraph{Question.} 
Next, I investigate how national economies restructure in response to losing preferential trade status with the US. To that end, I exploit a unique policy event following the reauthorization of the US Generalized System of Preferences (GSP) on June 29, 2015. The GSP program, which grants duty-free access for thousands of products, had been inactive for nearly two years prior to this date. The 2015 reboot created a clear, forward-looking timeline for firms and governments: a country's GSP eligibility for the year 2017 would be determined by its Gross National Income (GNI) per capita from the year 2015.

This setup creates a regression discontinuity design for studying the effect of preferential tariffs on a country's production network. The running variable is a country's 2015 GNI per capita. The treatment is the loss of GSP benefits on January 1, 2017, for countries whose 2015 GNI crossed the high-income threshold. The question is whether this negative trade shock induced a structural change in a country's national input-output (I-O) network as export demand dropped and firms adapted their supply chains.

A potential concern is that the World Bank's high-income threshold might trigger other policies concurrently with the loss of GSP benefits. However, the most significant of these—the loss of Official Development Assistance (ODA) and graduation from the World Bank's IBRD lending—operate on longer timelines. ODA eligibility ends only after a country remains high-income for three consecutive years \citep{oecd_oda_2023}, while IBRD graduation is a gradual, case-by-case process that unfolds over several years \citep{wb_ibrd_grad_2011}. In contrast, US GSP status is revoked without exception on a sharp two-year trigger \citep{crs_gsp_2017}. This ensures that at the point of GSP expiration, other major policy changes linked to the income threshold have not yet occurred, isolating the GSP effect. 

The international trade literature has documented extensive direct and indirect links between foreign exports, tariffs, and domestic production networks \citep{baqaee2024networks, Dhyne2021Trade, AmitiKonings2007Trade}. Specifically for the GSP, \citet{ozden2005perversity} found that GSP graduation correlates with affected countries implementing more liberal trade policies, while \citet{hakobyan2020gsp} found that GSP expiration led to sizeable declines in exports to the US of around 3\%. However, to my knowledge, there exist no empirical methods to study the impact of trade shocks directly on the full structure of production networks. This underscores the usefulness of the proposed test for empirically validating predictions from trade network models and evaluating the impact of trade shocks.

\paragraph{Data and Empirical Approach.}
The running variable, GNI per capita, is sourced from the World Bank Development Indicators (WDI), with the cutoff set at the official 2015 high-income threshold. The primary outcomes are domestic input-output (I-O) networks for 2016 (capturing anticipatory effects) and 2017 (capturing implementation effects), derived from the EORA26 global database \citep{lenzen2012mapping}. The analysis is constrained to these years because the GSP program lapsed again at the end of
2017, and the subsequent period is further confounded by the COVID-19 pandemic. For each of the 140 countries in the final dataset, I extract the domestic I-O table from the EORA26 database \citep{lenzen2012mapping} and aggregate its 26 sectors into 8 broader categories (see Table \ref{tab:eora_agg_map_short}). The outcome objects for the test are the Leontief Inverse (or total requirements) matrices derived from these $8\times 8$ I-O matrices. A Leontief Inverse captures the total (direct and indirect) output from each sector required to satisfy one unit of final demand, thus representing the economy's full production multipliers. 

To perform the test, I work on the space of graph Laplacians of these production networks, equipped with the Frobenius norm, which satisfies the necessary metric-space assumptions \citep{zhou2022network}. For the main analysis, I symmetrize the Laplacians, though results are highly similar when using in-degree or out-degree Laplacians. Critically, because the population Fr\'echet mean of graph Laplacians under the Frobenius norm is the entrywise average of their matrix elements \citep{zhou2022network}, the test directly captures changes in the country-average production multiplier network.

\paragraph{Results.}

The test indicates a statistically significant jump in the conditional Fr\'echet mean of the Leontief Inverse matrices at the GSP graduation threshold ($p=0.028$). This suggests that countries losing GSP access experience a structural change in their domestic production networks. Moreover, the test also shows a significant change in the average Leontief Inverse matrices one year before GSP graduation ($p=0.016$), further suggesting that countries already began to adjust their production networks in anticipation of the policy change. Crucially, a placebo test for 2015—a year when the GSP program was inactive—reveals no statistically significant effect ($p=0.290$), reinforcing the conclusion that the detected changes are indeed attributable to the GSP graduation event and not to other confounding policies.

Looking at the jump estimates, Figure~\ref{fig:frechet_diff_lagged_rv_2015_io_lag_2} plots the elementwise difference between the treated and control Fr\'echet means of the $8\times8$ Leontief Inverse matrices in 2017 (mapped back from the graph Laplacians). Warmer colors indicate stronger cross-sector multipliers in the treated group while cooler colors indicate weaker ones. The most striking changes appear in the resource-export chain. The effect magnitudes represent the change in total input required per dollar of final demand. The Mining-Manufacturing flows fall by 0.47 and the Mining-Trade ones by 0.36 on average, while Manufacturing's spill-overs into Construction and Trade decline by 0.26 and 0.21 respectively. These contractions indicate that, once preferential tariffs disappear, raw materials are no longer routed through domestic factories and wholesale networks for export processing.

\begin{figure}[ht!]
\includegraphics[width=0.7\textwidth]{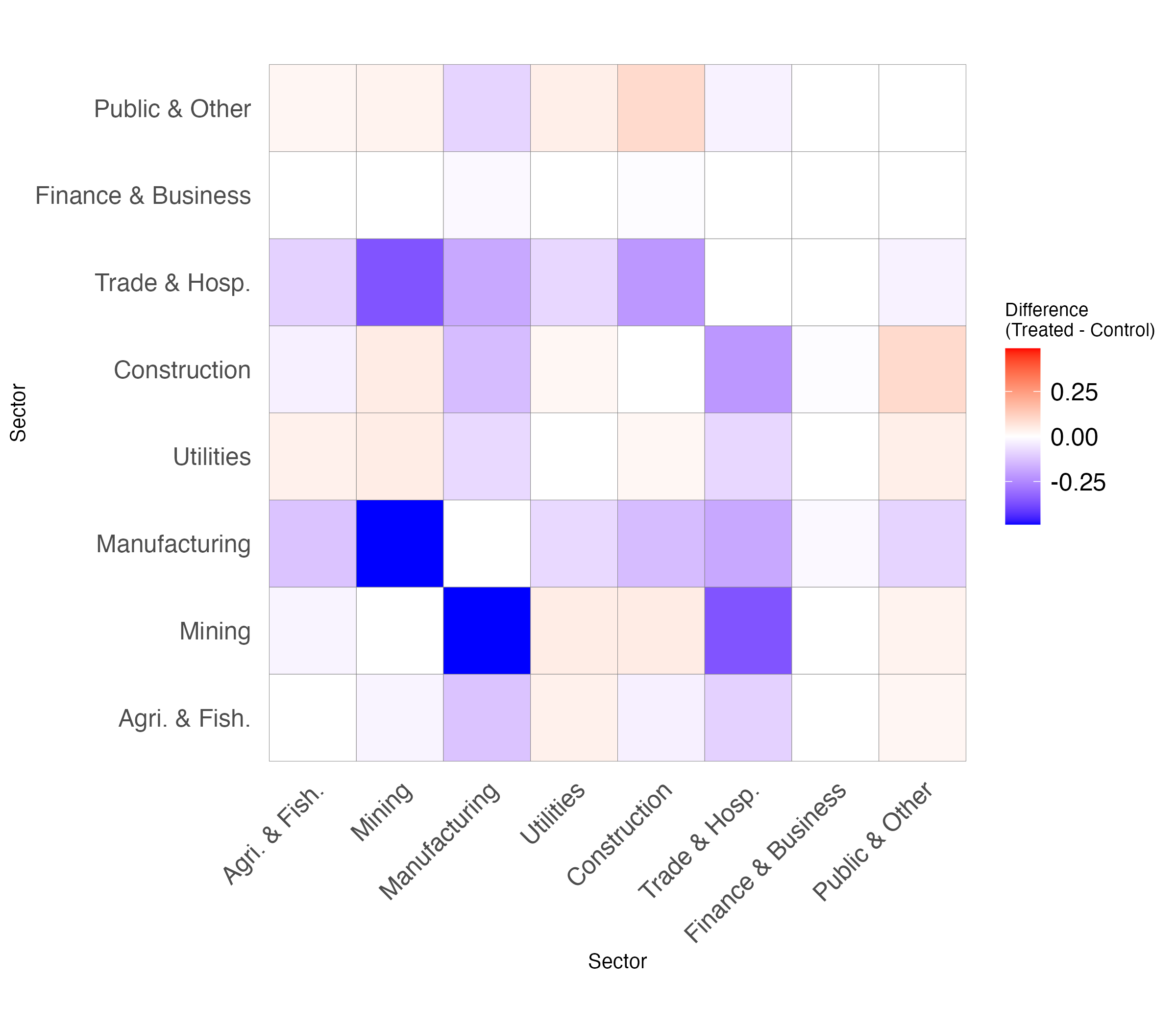}
\caption{Jump in Input-Output Networks After Preferential Tariff Loss}
\label{fig:frechet_diff_lagged_rv_2015_io_lag_2}
\floatfoot{\textit{Note:} figure shows estimates of difference in Fr\'echet means of countries' Leontief Inverse matrices to the left and right of the US GSP eligibility cutoff.}
\end{figure}

A different pattern emerges on the domestic side of the economy. Construction's linkage with Public and Other Services rises by 0.11, and Mining's ties with both Construction (0.06) and Utilities (0.04) also increases slightly.  The Utilities industry itself deepens its connections with Construction and the public sector (about 0.03 each). Together, these increases suggest that extraction industries redirect output towards home-market infrastructure and energy projects, while construction firms rely more heavily on government demand, likely to offset the export shortfall.

Table~\ref{tab:linkage_ranks} lists the five largest absolute decreases and increases.  All major contractions involve Mining or Manufacturing on at least one end, whereas four of the five strengthened links involve Construction or Utilities. The overall heatmap in Figure \ref{fig:frechet_diff_lagged_rv_2015_io_lag_2} is light blue along the main diagonal, implying a mild reduction in own-sector multipliers as well. Taken together, these patterns point to a sparser and less tightly coupled production network in economies that lose GSP access—an adjustment that would be invisible in most scalar outcomes such as total trade or sectoral output alone. Indeed, in Table \ref{tab:io_scalar_rd}, I report the results of a standard RD analysis on scalar network outcomes such as centrality, complexity, and multiplier effects. None of these show significant changes at the GSP graduation threshold, highlighting the value of the proposed Fr\'echet jump test for capturing network-wide restructuring that would otherwise be missed.

\section{Conclusion} \label{sec:conclusion}
This article has introduced a novel test for detecting discontinuities in conditional Fr\'echet means within general metric spaces. By leveraging local Fr\'echet regression techniques, it develops test statistics that compare the conditional Fr\'echet variances to the left and right of a potential jump location. For that, it establishes a central limit theorem for the conditional Fr\'echet variance, derived the asymptotic null distribution of the proposed test statistic, and demonstrated its consistency against relevant alternatives. Two empirical applications illustrate the utility of the test by providing novel evidence for the effect of non-competes on work from home job compositions and of preferential tariffs on developing countries' production networks.

The proposed framework is general, making it applicable to a wide range of data types where outcomes are elements of a metric space, such as distributions, networks, or spheres, that evolve nonparametrically in a scalar covariate. The test provides a formal tool for investigating abrupt changes in the central tendency of such complex data. This can be valuable in many fields, including economics, biology, and image analysis, where identifying structural breaks or treatment effects in non-Euclidean data is of increasing importance. Future research could explore finite sample performance, optimal bandwidth selection, and extensions to multivariate conditioning variables or other forms of Fr\'echet regression.

\bibliographystyle{apalike}
\bibliography{references} 

\clearpage
\appendix
\begin{supplement}

% =============================================================================
% TABLE OF CONTENTS FOR THE SUPPLEMENT
% =============================================================================
\par\noindent
\hrule
\par\medskip\noindent
This supplement contains additional details, proofs, and results. The sections are:
\begin{itemize}
    \item Appendix~\ref{app:frechet_details}: \hyperref[app:frechet_details]{Details on One-Sided Local Fr\'echet Regression Estimators}
    \item Appendix~\ref{app:bandwidth}: \hyperref[app:bandwidth]{Bandwidth Selection Procedure}
    \item Appendix~\ref{app:simulation}: \hyperref[app:simulation]{Simulation Details}
    \item Appendix~\ref{app:proofs}: \hyperref[app:proofs]{Proofs of Theoretical Results}
    \item Appendix~\ref{app:add_results}: \hyperref[app:add_results]{Additional Results}
\end{itemize}
\hrule
\par\bigskip
% =============================================================================

\section{Local Fr\'echet Regression}
\label{app:frechet_details}

This appendix provides a more detailed overview of the one-sided local linear Fr\'echet regression estimators discussed in Section~\ref{sec:twosidedmonster}, adapting the estimators in \citep{petersen2019frechet} to estimation at a jump point.

The goal is to estimate the conditional Fr\'echet mean at the boundary $c$ from the left and right sides. The one-sided local Fr\'echet regression estimators are defined as the minimizers of a locally weighted sum of squared distances,
\begin{equation}
\hat{l}_{\pm,\oplus} \coloneqq \underset{\omega \in \Omega}\argmin \: \left\{ \hat{L}_{\pm,n}(\omega) \coloneqq  n^{-1} \sum_{i=1}^n s_{\pm,in}(c,h) d^2\left(Y_i, \omega\right) \right\},
\end{equation}
where $h$ is a bandwidth and $s_{\pm,in}(c,h)$ are one-sided local linear regression weights.

Specifically, the weights for the right-sided estimator ($\hat{l}_{+, \oplus}$) are given by,
\begin{align*}
s_{+,in}(c,h) &= \mathrm{1}(X_i \geq c) \frac{K_h(X_i-c)}{\hat{\sigma}_{+,0}^2} \left[ \hat{\mu}_{+,2} - \hat{\mu}_{+,1}(X_i-c) \right], \\
\intertext{where $K_h(u) = K(u/h)/h$ for a kernel function $K$, and the components are sample moments calculated using only data to the right of the cutoff,}
\hat{\mu}_{+,j} &= n^{-1} \sum_{k=1}^n \mathrm{1}(X_k \geq c) K_h(X_k-c)(X_k-c)^j \quad \text{for } j=0,1,2, \\
\hat{\sigma}_{+,0}^2 &= \hat{\mu}_{+,0}\hat{\mu}_{+,2} - \hat{\mu}_{+,1}^2.
\end{align*}
The weights for the left-sided estimator, $s_{-, i n}(c, h)$, are defined analogously using observations where $X_i < c$ and $X_k < c$.

The corresponding population (or pseudo-true) counterparts of these estimators are defined as,
\begin{equation} \label{eq:l_tilde_pm_app}
\tilde{l}_{\pm, \oplus}\coloneqq \underset{\omega \in \Omega}{\operatorname{argmin}} \: \left\{ \tilde{L}_{\pm,n}(\omega) \coloneqq  E\left[s_\pm(X, c, h) d^2(Y, \omega)\right] \right\},
\end{equation}
where $s_\pm(X,c,h)$ are the population versions of the weights. For the right side, these are,
\begin{align*}
s_+(X,c,h) &= \mathrm{1}(X \geq c) \frac{K_h(X-c)}{\sigma_{+,0}^2} [\mu_{+,2} - \mu_{+,1}(X-c)], \\
\intertext{with population moments,}
\mu_{+,j} &= E[\mathrm{1}(X \geq c)K_h(X-c)(X-c)^j] \quad \text{for } j=0,1,2, \\
\sigma_{+,0}^2 &= \mu_{+,0}\mu_{+,2} - \mu_{+,1}^2.
\end{align*}
Again, the left-sided population weights $s_-(X,c,h)$ and moments $\mu_{-,j}$ are defined analogously over the domain $X < c$.

\section{Bandwidth Selection Procedure}
\label{app:bandwidth}

To select the bandwidth for the Fr\'echet mean estimation, I employ a K-fold cross-validation (CV) procedure that aims to minimize the out-of-sample prediction error at the cutoff. This approach adapts the general principles of cross-validation for local polynomial regression \citep{fan1996local} to the one-sided setting, similar to a canonical approach in regression discontinuity design \citep{imbens2008regression}. For the specific case of Wasserstein-Fr\'echet regression, one can derive an explicit expression for the optimal bandwidth in terms of the regression's mean-squared error by leveraging the connection to quantile estimation \citet{van2025regression}. In the more general metric space setting, this is not possible due to the lack of a Taylor series approximation. As an alternative to the bandwidth selection procedure proposed here, one could also use the heuristic procedure in \citet{kurisu2025regression}, which targets smoothness of the estimated one-sided regression functions near the cutoff.

The goal of the cross-validation is to find the bandwidth $h$ that minimizes the sum of squared prediction errors. Let the full dataset be $\mathcal{D} = \{(Y_i, X_i)\}_{i=1}^n$. Partition the indices $\{1, \dots, n\}$ into $K$ disjoint folds, denoted $\mathcal{I}_1, \dots, \mathcal{I}_K$. For each fold $k$, define a training set $\mathcal{D}^{(-k)} = \mathcal{D} \setminus \mathcal{D}^{(k)}$ and a validation set $\mathcal{D}^{(k)}$.

For a given candidate bandwidth $h$, estimate the one-sided Fr\'echet means at the cutoff, $\hat{l}_{-, \oplus}^{(-k)}(c; h)$ and $\hat{l}_{+, \oplus}^{(-k)}(c; h)$, using only the training data $\mathcal{D}^{(-k)}$. The cross-validation error for this bandwidth is then calculated as the sum of squared distances between the observations in the validation fold $\mathcal{D}^{(k)}$ and the corresponding Fr\'echet mean estimated from the training data:
\begin{equation}
\text{CV}(h) = \sum_{k=1}^K \left( \sum_{i \in \mathcal{I}_k, X_i < c} d^2\big(Y_i, \hat{l}_{-, \oplus}^{(-k)}(c;h)\big) + \sum_{i \in \mathcal{I}_k, X_i \ge c} d^2\big(Y_i, \hat{l}_{+, \oplus}^{(-k)}(c;h)\big) \right).
\end{equation}
The optimal bandwidth is the one that minimizes this objective function: $\hat{h}_{\text{CV}} = \argmin_{h \in \mathcal{H}} \text{CV}(h)$, where $\mathcal{H}$ is a grid of candidate bandwidths.

 The candidate bandwidth grid $\mathcal{H}$ is constructed based on data-driven heuristics, using the range and minimum spacing of the covariate values on each side of the cutoff. To preserve the structure of the one-sided estimation problem, the K-fold partitioning is stratified by the cutoff: the observations with $X_i < c$ and $X_i \geq c$ are each partitioned into $K$ folds.

 For the empirical applications, I use $K=5$ folds and a grid of $n_{\text{bw}}=10$ candidate bandwidths.

\section{Simulation Details} \label{app:simulation}

For the three metric spaces considered, I specify the following data-generating processes:
\begin{itemize}
    \item \textbf{Density Space (2-Wasserstein metric):} The random objects $Y_i$ are generated as quantile functions corresponding to Gaussian distributions, $N(\mu(x), \sigma_0^2)$, with $\sigma_0^2=1$. The Fr\'echet mean of such distributions under the 2-Wasserstein distance is the Gaussian with the average mean. Therefore, a jump in $\mu(x)$ directly induces a jump in the Fr\'echet mean. We represent quantile functions on an equidistant grid of 50 points in $(0,1)$.
    \begin{itemize}
        \item \textit{Piecewise-Smooth:} The mean follows $\mu(x) = 0.8(x-0.5)$. Under $H_1$, the mean for observations with $X_i \ge c$ is shifted to $\mu(x) = 0.8(x-0.5) + \delta_D$.
        \item \textit{Piecewise-Constant:} The baseline mean is $\mu(x) = 0.5$. Under $H_1$, the mean for $X_i \ge c$ becomes $\mu(x) = 0.5 + \delta_D$.
        \item The jump magnitude $\delta_D$ is varied symmetrically from $-2.0$ to $2.0$.
    \end{itemize}

    \item \textbf{Covariance Matrix Space (Frobenius metric):} Objects are $3 \times 3$ sample covariance matrices, each estimated from 300 draws from a multivariate normal distribution $N(0, \Sigma(x))$. The Fr\'echet mean of these sample covariance matrices under the Frobenius norm converges to the population covariance matrix $\Sigma(x)$.
    \begin{itemize}
        \item \textit{Piecewise-Smooth:} The baseline covariance matrix $\Sigma_0(x)$ has diagonal elements $1.5 + 0.6(x-0.5)$ and primary off-diagonal element $0.2 + 0.3(x-0.5)$.
        \item \textit{Piecewise-Constant:} The baseline covariance matrix $\Sigma_0(x)$ has diagonal elements of $1.5$ and a primary off-diagonal of $0.2$.
        \item Under $H_1$, for $X_i \ge c$, the covariance matrix is scaled by a factor $\beta_C$. Specifically, $\Sigma(x) = D_\beta \Sigma_0(x) D_\beta$, where $D_\beta$ is a diagonal matrix with $\sqrt{\beta_C}$ on its diagonal. This corresponds to scaling all standard deviations by $\sqrt{\beta_C}$. The factor $\beta_C$ varies from $0.5$ to $2.0$.
    \end{itemize}

    \item \textbf{Network Space (Frobenius metric on Laplacians):} Objects are graph Laplacians of $10 \times 10$ undirected graphs. For each pair of nodes $(u,v)$, an edge is generated independently from a Bernoulli distribution with probability $p(x)$. The Fr\'echet mean of the resulting Laplacians under the Frobenius norm is the Laplacian of the graph whose adjacency matrix has entries equal to the mean edge probabilities $p(x)$. Thus, a jump in $p(x)$ creates a jump in the Fr\'echet mean.
    \begin{itemize}
        \item \textit{Piecewise-Smooth:} The edge probability is $p(x) = 0.4 + 0.2(x-0.5)$. Under $H_1$, for $X_i \ge c$, the probability becomes $p(x) = 0.4 + 0.2(x-0.5) + \delta_N$.
        \item \textit{Piecewise-Constant:} The baseline probability is $p(x) = 0.4$. Under $H_1$, for $X_i \ge c$, it becomes $p(x) = 0.4 + \delta_N$.
        \item The edge probability $p(x)$ is truncated to lie in $[0.05, 0.95]$. The jump magnitude $\delta_N$ varies from $-0.3$ to $0.3$.
    \end{itemize}
\end{itemize}

\section{Proofs} \label{app:proofs}

In all proofs that follow, I focus on the argument for the estimator ``from the right'' to avoid repetition or excessive notation, $m_{+,\oplus}$. The arguments ``from the left'' follow analogously. Further, I suppress the $c$ argument of $s_{\pm,in}(c,h)$ to ease notation, since it is fixed at $c$ everywhere.

\subsection{One-Sided Local Fr\'echet Convergence Results}
\label{app:frechet}
Here, I work out the proofs for the convergence of the ``one-sided'' version of local Fr\'echet regression in detail, adapting the proofs in \citet{petersen2019frechet}. The reason for working them out completely is that I will reuse parts in the proofs for the jump test in Appendix \ref{app:anova_proofs} below. 
For notational ease, I drop the $c$ argument in the objective functions $\hat{L}_{+,n}(\omega, c), \tilde{L}_{+,n}(\omega, c)$.

For completeness, I include the following variation of a well-known result \citep{fan1996local}. Define \[\tau_{+,j}(y) = E\left[ \mathrm{1} (X\geq x) K_h(X-x)(X-x)^j \mid Y = y \right]\] and remember the definition of $\mu_{+,j} = E\left[ \mathrm{1} (X\geq x) K_h(X-x)(X-x)^j \right]$, the corresponding estimator $\hat{\mu}_{+,j} = \frac1n \sum_{i=1}^n \mathrm{1}(X_i \geq x) K_h(X_i - x)(X_i - x)^j$ for $j=0,1,2$, and $K_{\pm, k j}=\int_{\mathcal{R}} 1(u \underset{<}{\geq} 0) K^k(u) u^j \mathrm{~d} u$.
 \begin{lemma} \label{lemma:fan_gijbels}
Suppose Assumptions \ref{asspt:kernel}, \ref{asspt:sampling}, and \ref{asspt:densities} hold. Then at $x=c$, we have,
\[
\mu_{+,j}=h^j\left[f_X(c) K_{+,1 j}+h f_X^{\prime}(c) K_{+,1(j+1)}+O\left(h^2\right)\right]
\]
and $\hat{\mu}_{+,j} = \mu_{+,j} + O_p\left(\left(h^{2 j-1} n^{-1}\right)^{1 / 2}\right) \text { for } j\in\mathbb{N}_0$. Additionally, for $x \ge c$,
\[
\tau_{+,j}(y) = h^j\left[g_{y}(c) K_{+,1 j}+h g_{y}^{\prime}(c) K_{+,1(j+1)}+O\left(h^2\right)\right],
\]
with the $O(h^2)$ term holding uniformly over $y \in \Omega$.
\end{lemma}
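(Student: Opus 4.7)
The plan is to prove each of the three statements by the standard change-of-variables-plus-Taylor-expansion argument used for two-sided local polynomial kernel moments, with the only wrinkle being that the one-sided indicator $\mathbf{1}(X\geq c)$ truncates all integrations to $[0,\infty)$ in the rescaled variable, producing the one-sided kernel constants $K_{+,kj}$ rather than the full-line ones.

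For the first assertion, I would write
$\mu_{+,j}=\int_c^{\infty}K_h(x-c)(x-c)^j f_X(x)\,dx$, substitute $u=(x-c)/h$, which pulls out a factor $h^j$ and restricts the range to $u\geq 0$. Then I would Taylor expand $f_X(c+uh)=f_X(c)+uh f_X'(c)+O(h^2)$ using Assumption \ref{asspt:densities}(a), which guarantees $f_X\in C^2$ near $c$ and hence a uniform $O(h^2)$ remainder over the compact support of $K$ (Assumption \ref{asspt:kernel}). Collecting terms yields the claimed expansion with constants $K_{+,1j}$ and $K_{+,1(j+1)}$.

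For the second assertion, $\hat{\mu}_{+,j}$ is an i.i.d. average of mean-$\mu_{+,j}$ terms, so it suffices to bound the variance. The same change of variables and Taylor expansion give $E[\mathbf{1}(X\geq c)K_h^2(X-c)(X-c)^{2j}]=O(h^{2j-1})$, whence $\Var(\hat{\mu}_{+,j})=O(h^{2j-1}/n)$ and Chebyshev yields the stated $O_p((h^{2j-1}/n)^{1/2})$ rate.

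For the third assertion, the argument is identical to the first after replacing $f_X$ by $g_y(\cdot)=\partial_x F_{X\mid Y}(\cdot\mid y)$, which is $C^2$ near $c$ by Assumption \ref{asspt:densities}(b). The only nontrivial point—and the one step worth flagging—is that the $O(h^2)$ remainder must be uniform in $y\in\Omega$. This is where Assumption \ref{asspt:densities}(b) does real work: it supplies the uniform bound $\sup_{y}\sup_{x\in(c-\varepsilon,c+\varepsilon)}|g_y''(x)|<\infty$, so the Lagrange remainder in the Taylor expansion of $g_y(c+uh)$ is controlled by a constant independent of $y$, giving the uniform $O(h^2)$ conclusion. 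Everything else is bookkeeping on kernel moments.
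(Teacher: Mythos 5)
Your proposal is correct and follows essentially the same route as the paper's own proof: change of variables plus second-order Taylor expansion for the population moments, and a variance bound via the same change of variables combined with Chebyshev for the sample moment rate. Your explicit flagging of why the $O(h^2)$ remainder in $\tau_{+,j}$ is uniform over $y\in\Omega$—via the uniform bound on $g_y''$ in Assumption L2(b)—is a helpful clarification that the paper leaves implicit, but it is not a different argument.
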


\begin{proof}
The equalities for $\mu_{+,j}$ and $\tau_{+,j}(y)$ follow from Assumption \ref{asspt:densities}, which guarantees that $f_X(x)$ and $g_y(x)$ are twice continuously differentiable for $x$ in a right-sided neighborhood of $c$. We can therefore apply a second-order Taylor expansion for $f_X$ and $g_y$ around $c$. Combining this with two changes of variables, $u=z-c$ and then $v=u/h$, yields the stated expressions for the population moments.

For the sample moments, $E[\hat{\mu}_{+,j}] = \mu_{+,j}$ holds by definition. For the variance,
\[
\operatorname{Var}(\hat{\mu}_{+,j}) = \frac{1}{n} \operatorname{Var}\left(\mathrm{1}(X_i \geq c) K_h(X_i-c)(X_i-c)^j\right) \le \frac{1}{n} E\left[\mathrm{1}(X_i \geq c) K_h^2(X_i-c)(X_i-c)^{2j}\right].
\]
By a change of variables $u=(X_i-c)/h$, the expectation term is:
\[
h^{2j-1} \int_0^\infty K(u)^2 u^{2j} f_X(c+hu) du = O(h^{2j-1}).
\]
Therefore, $\operatorname{Var}(\hat{\mu}_{+,j}) = O((nh^{1-2j})^{-1})$, which implies $\hat{\mu}_{+,j} - \mu_{+,j} = O_p((nh^{1-2j})^{-1/2})$.
\end{proof}
\begin{lemma}\label{lemma:convergence_tilde}
Suppose \ref{asspt:sampling}--\ref{assptT:J}
hold.  Then
\[
  d\!\bigl(\tilde l_{+,\oplus},m_{+,\oplus}\bigr)
     =O\!\bigl(h^{2/(\beta_1-1)}\bigr),
\]
and the same rate holds for $\tilde l_{-,\oplus}$.
\end{lemma}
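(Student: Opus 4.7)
The plan is to compare the pseudo-true objective $\tilde L_{+,n}(\omega)$ to its limiting population counterpart $M_{+,\oplus}(\omega)$, control the discrepancy as a Lipschitz-in-$\omega$ perturbation of order $h^2$, and then invoke the curvature condition of Assumption~\ref{asspt:curvature}(i) to pin down the rate. The left-sided case is symmetric, so I focus on the right throughout.

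First, I would write $\tilde L_{+,n}(\omega)=\int\phi_n(x)M_\oplus(\omega,x)\,dx$ with $\phi_n(x)=s_+(x,c,h)f_X(x)$, and via Lemma~\ref{lemma:fan_gijbels} verify the defining local-linear identities $\int\phi_n\,dx=1$, $\int\phi_n(x-c)\,dx=0$, and $\int\phi_n(x-c)^2\,dx=O(h^2)$. The disintegration $dF_{Y\mid X=x}(y)=[g_y(x)/f_X(x)]\,dF_Y(y)$ afforded by Assumption~\ref{asspt:densities} gives $M_\oplus(\omega,x)=\int d^2(y,\omega)\,[g_y(x)/f_X(x)]\,dF_Y(y)$; since $\Omega$ is bounded and the derivatives of $g_y/f_X$ are uniformly bounded in $y$ near $c^+$, $\partial_x^2 M_\oplus(\omega,x)$ is uniformly bounded in $\omega$. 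A one-sided second-order Taylor expansion of $M_\oplus(\omega,\cdot)$ at $c^+$ then yields
\begin{equation*}
\tilde L_{+,n}(\omega)\;=\;M_{+,\oplus}(\omega)+c_2(\omega)\,h^2+R_n(\omega),
\end{equation*}
with $c_2(\omega)\propto\partial_x^2 M_\oplus(\omega,c^+)$ and $\sup_\omega|R_n(\omega)|=o(h^2)$.

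The decisive step is to verify, using $|d^2(y,\omega_1)-d^2(y,\omega_2)|\le 2\,\mathrm{diam}(\Omega)\,d(\omega_1,\omega_2)$, that $\omega\mapsto c_2(\omega)$ and $\omega\mapsto R_n(\omega)$ are both Lipschitz with respect to $d$, with Lipschitz constants of order $O(1)$ and $o(h^2)$ respectively; the latter follows by passing the $y$-pointwise Taylor-remainder bound $|r_y(x)|=o((x-c)^2)$ through the integral against $\phi_n$, which is supported on a window of width $O(h)$. Uniform convergence $\sup_\omega|\tilde L_{+,n}-M_{+,\oplus}|=O(h^2)$ combined with the well-separated minimum in Assumption~\ref{asspt:uniqueness} then delivers consistency $d(\tilde l_{+,\oplus},m_{+,\oplus})\to 0$, so for $n$ large the estimate lies in the curvature neighbourhood $\{d<\eta_1\}$. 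The optimality inequality $\tilde L_{+,n}(\tilde l_{+,\oplus})\le\tilde L_{+,n}(m_{+,\oplus})$ combined with the Lipschitz structure now gives
\begin{equation*}
M_{+,\oplus}(\tilde l_{+,\oplus})-M_{+,\oplus}(m_{+,\oplus})\;\le\;|c_2(m_{+,\oplus})-c_2(\tilde l_{+,\oplus})|\,h^2+|R_n(m_{+,\oplus})-R_n(\tilde l_{+,\oplus})|\;=\;O(d\,h^2),
\end{equation*}
where $d:=d(\tilde l_{+,\oplus},m_{+,\oplus})$. Combining with $C_1 d^{\beta_1}\le M_{+,\oplus}(\tilde l_{+,\oplus})-M_{+,\oplus}(m_{+,\oplus})$ delivers $d^{\beta_1-1}=O(h^2)$, equivalently $d=O(h^{2/(\beta_1-1)})$.

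The principal obstacle is establishing the Lipschitz-in-$\omega$ control on the Taylor remainder at the $o(h^2)$ scale; without this refinement, the crude sup-norm inequality $M_{+,\oplus}(\tilde l_{+,\oplus})-M_{+,\oplus}(m_{+,\oplus})\le 2\sup_\omega|\tilde L_{+,n}-M_{+,\oplus}|=O(h^2)$ followed by inversion of the curvature would give only the strictly weaker rate $d=O(h^{2/\beta_1})$, so the gain from Lipschitzness is precisely what elevates the bias rate from $h^{2/\beta_1}$ to $h^{2/(\beta_1-1)}$.
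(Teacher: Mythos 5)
Your proposal is correct and rests on the same two pillars as the paper's proof: (i) the discrepancy $V(\omega)\coloneqq\tilde L_{+,n}(\omega)-M_{+,\oplus}(\omega)$ is Lipschitz in $\omega$ with constant of order $h^2$ (a consequence of the uniformly bounded bias $\int s_+(z,c,h)\,dF_{X\mid Y}(z\mid y)-g_y(c)/f_X(c)=O(h^2)$ and the reverse-triangle bound $|d^2(y,\omega_1)-d^2(y,\omega_2)|\le 2\,\mathrm{diam}(\Omega)\,d(\omega_1,\omega_2)$), and (ii) the curvature lower bound $M_{+,\oplus}(\omega)-M_{+,\oplus}(m_{+,\oplus})\ge C_1 d(\omega,m_{+,\oplus})^{\beta_1}$. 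Where you diverge from the paper is in how you close the argument: the paper invokes the empirical-process peeling machinery of van der Vaart and Wellner (Theorem 3.2.5), decomposing the event $\{r_h\,d(\tilde l_{+,\oplus},m_{+,\oplus})^{\beta_1/2}>2^M\}$ over dyadic shells $S_{j,n}$ and summing indicator bounds, whereas you extract the rate in one line from the optimality inequality $\tilde L_{+,n}(\tilde l_{+,\oplus})\le\tilde L_{+,n}(m_{+,\oplus})$ combined with the Lipschitz control to obtain $C_1 d^{\beta_1}\le M_{+,\oplus}(\tilde l_{+,\oplus})-M_{+,\oplus}(m_{+,\oplus})\le b\,d\,h^2$. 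Your direct route is legitimate here precisely because $\tilde L_{+,n}$ is a deterministic expectation for each $n$, so there is no stochastic equicontinuity to control and peeling is genuinely overkill; the paper's presentation is more conservative and matches the structure of the stochastic counterpart (Lemma~\ref{lemma:aux_l_hat-l_tilde}), where peeling is actually needed. Your closing remark correctly identifies that without the Lipschitz refinement one would only recover the weaker rate $O(h^{2/\beta_1})$.

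One small imprecision: you assert $\sup_\omega|R_n(\omega)|=o(h^2)$ and a Lipschitz constant for $R_n$ of order $o(h^2)$, attributing this to the pointwise Taylor remainder $|r_y(x)|=o((x-c)^2)$ being uniform in $y$. Assumption~\ref{asspt:densities}(b) only guarantees uniform \emph{boundedness} of $g_y^{(k)}$ for $k\le 2$, not uniform equicontinuity of $g_y''$, so the honest uniform bound is $O(h^2)$ rather than $o(h^2)$. This is harmless for your conclusion—you only need the Lipschitz constant of $\tilde L_{+,n}-M_{+,\oplus}$ to be $O(h^2)$, and separating the exact second-order term $c_2(\omega)h^2$ from the remainder $R_n(\omega)$ is unnecessary for the argument. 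The paper avoids this issue entirely by absorbing both into a single uniform $O(h^2)$ term.
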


\begin{proof}
For notational convenience write $Y$ for the outcome random element.
We first show that
\begin{equation} \label{eq:donttagme}
  \frac{dF_{Y\mid X=x}(y)}{dF_Y(y)}=\frac{g_y(x)}{f_X(x)},
  \qquad x\in(c-\varepsilon,c+\varepsilon).
\end{equation}
Let \(U\subset\Omega\) be open and define
\[
  a(x)=\int_U \frac{g_y(x)}{f_X(x)}\,dF_Y(y),
  \qquad
  b(x)=\int_U dF_{Y\mid X}(y\mid x).
\]
Assumption \ref{asspt:densities}(c) implies that \(a(\cdot)\) and
\(b(\cdot)\) are continuous on \((c-\varepsilon,c+\varepsilon)\).
Hence, for any \(z\) in that interval,
\[
\begin{aligned}
\int_{c-\varepsilon}^{z} a(x)f_X(x)\,dx
 &=\int_U\!\int_{c-\varepsilon}^{z} g_y(x)\,dx\,dF_Y(y)              \\
 &=\int_U\!\int_{c-\varepsilon}^{z} dF_{X\mid Y}(x\mid y)\,dF_Y(y)   \\
 &=\int_{(c-\varepsilon,z)\times U} dF_{X,Y}(x,y)                    \\
 &=\int_{c-\varepsilon}^{z} b(x)f_X(x)\,dx ,
\end{aligned}
\]
which yields \eqref{eq:donttagme}.

Using Lemma \ref{lemma:fan_gijbels},
\(\mu_{+,j}=h^j\{f_X(c)K_{+,1j}+O(h)\}\) and
\(\sigma_{+,0}^2
   =h^{2}f_X(c)^2\!\bigl(K_{+,10}K_{+,12}-K_{+,11}^2\bigr)\{1+O(h)\}\). Thus, for every \(y\in\Omega\),
$$
\begin{aligned}
& \int s_{+}(z, c, h) \mathrm{d} F_{X \mid Y}(z, y) \\
& \quad=\int 1(z \geq c) \frac{K_h(z-c)}{\sigma_{+, 0}^2}\left(\mu_{+, 2}-\mu_{+, 1}(z-c)\right) \mathrm{d} F_{X \mid Y}(z, y) \\
& \quad=\frac{\mu_{+, 2} \tau_{+, 0}(y)-\mu_{+, 1} \tau_{+, 1}(y)}{\sigma_{+, 0}^2} \\
& \quad=\frac{h^2 g_{y}(c)\left(f_X(c)\left(K_{+, 12} K_{+, 10}-K_{+, 11}^2\right)+h f^{\prime}(c)\left(K_{+, 10} K_{+, 13}-K_{+, 11} K_{+, 12}\right)\right)+O\left(h^4\right)}{h^2 f_X(c)\left(f_X(c)\left(K_{+, 12} K_{+, 10}-K_{+, 11}^2\right)+h f^{\prime}(c)\left(K_{+, 10} K_{+, 13}-K_{+, 11} K_{+, 12}\right)\right)+O\left(h^4\right)} \\
& \quad=\frac{g_{y}(c)}{f_X(c)}+O\left(h^2\right)
\end{aligned}
$$
Consequently,
\[
\begin{aligned}
\tilde L_{+,n}(\omega,c)
  &=\iint s_{+}(z,c,h)\,d^{2}(y,\omega)\,dF_{X,Y}(z,y) \\[2pt]
  &=\int d^{2}(y,\omega)\!
        \Bigl[\int s_{+}(z,c,h)\,dF_{X\mid Y}(z\mid y)\Bigr]
        dF_Y(y) \\[2pt]
  & = 
     \int d^{2}(y,\omega)
          \Bigl\{\tfrac{g_y(c)}{f_X(c)}+O(h^{2})\Bigr\}
          dF_Y(y) \\[2pt]
  &=\lim_{x \to c^+}\!
     \int d^{2}(y,\omega)\tfrac{g_y(x)}{f_X(x)}\,dF_Y(y)   %  ⟨—  justification
     \;+\;O(h^{2})                                         \\[-2pt]
  &=\lim_{x \to c^+}E[d^{2}(Y,\omega)\mid X=x]\;+\;O(h^{2}) \\[2pt]
  &=M_{+,\oplus}(\omega)+O(h^{2}).
\end{aligned}
\]
where the third equality follows from the approximation derived above and the fourth equality follows from continuity of $x\mapsto g_y(x)/f_X(x)$ and bounded $d^{2}(y,\omega)$ which allows the use of the dominated convergence theorem. By \ref{asspt:uniqueness}, this implies that $d\left(m_{+, \oplus}, \tilde{l}_{+, \oplus}\right)=o(1) \text { as } h=h_n \rightarrow 0$. The rest of the proof follows by a similar argument as in the proof of Theorem 3 in \citet{petersen2019frechet}. I include it below for completeness.

From the penultimate line in the derivation above, we have,
$V(\omega) \coloneqq \tilde{L}_{+,n}(\omega, c) - M_{+,\oplus}(\omega, c) = \int O(h^2) d^2(y,\omega) \dd F_{Y}(y)$. As a result,
\begin{align}
\sup_{d(\omega, m_{+,\oplus}) < \delta } \left| V(\omega) - V(m_{+,\oplus}) \right| & =\sup_{d(\omega, m_{+,\oplus}) < \delta } \left| \int O(h^2) \left( d^2(y, \omega) - d^2(y, m_{+,\oplus}) \right) \dd F_{Y}(y) \right| \nonumber \\ 
& \leq a h^2 2 \operatorname{diam}(\Omega) \delta \coloneqq b \delta h^2 \label{eq_app:E_sup}
\end{align}
for some $a, b > 0$, where the last line follows from the triangle inequality. Then, the remainder of the proof follows as in \citet[Theorem 3.2.5]{vaart1996weak}. Let $r_h=h^{-\frac{\beta_1}{\beta_1-1}}$ so that the rate inequality in that theorem holds. Also, define $S_{j, n}=\left\{\omega: 2^{j-1}<r_h d\left(\omega, m_{+,\oplus}(x)\right)^{\beta_1 / 2} \leq 2^j \right\} $. Then, choose $\eta$ to satisfy \ref{assptT:J} and small enough such that \ref{asspt:curvature} \ref{asspt:curvature_population} holds for all $\delta < \eta$. Set $\tilde{\eta} = \eta^{\frac{\beta_1}{2}}$. Then for any integer $M$, 
\begin{align*}
& \mathrm{1} \left(r_h  d\left(\tilde{l}_{+,\oplus}, m_{+,\oplus}\right)^{\beta_1 / 2}>2^M\right) \leq \mathrm{1}\left(2 d\left(\tilde{l}_{+,\oplus}, m_{+,\oplus}\right) \geq \eta\right)  + \\
&  \sum_{\substack{j \geq M \\ 2^j \leq r_h \tilde{\eta}}} \mathrm{1}\left( \sup _{\omega \in S_{j, n}}\left|V(\omega)-V\left(m_{+,\oplus}\right)\right| \geq C \frac{2^{2(j-1)}}{r_h^2} \right),
\end{align*}
where the first term on the right-hand side goes to zero as $n \to \infty$ so that $h_n \to 0$. Further, for each $j$ in the sum in the second term, we have that $d(\omega, m_{+,\oplus}) \leq \left( \frac{2^j}{r_h} \right)^{\frac{2}{\beta_1}} \leq \eta$. As a result, using the inequality $\mathrm{1}\{x \geq a\} \leq \frac{x}{a}$ combined with Eq. \ref{eq_app:E_sup}, we can bound the sum by,
\[
4 b C^{-1} \sum_{\substack{j \geq M \\ 2^j \leq r_h \tilde{\eta}}} \frac{2^{2 j\left(1-\beta_1\right) / \beta}}{r_h^{2\left(1-\beta_1\right) / \beta_1} h^{-2}} \leq 4 b C^{-1} \sum_{j \geq M} \left(\frac{1}{4^{\left(\beta_1-1\right) / \beta_1}}\right)^j
\]
which converges as $\beta_1 > 1$ when $M=M_n \to \infty$. This proves that $d\left(\tilde{l}_{+,\oplus}, m_{+,\oplus}\right)=O\left(r_h^{-2 / \beta_1}\right)=O\left(h^{\frac{2}{\beta_1-1}}\right)$, and by an analogous argument, $d\left(\tilde{l}_{-,\oplus}, m_{-,\oplus}\right)$.  
\end{proof}

\begin{lemma} 
\label{lemma:l_hat-l_tilde}
Suppose \ref{asspt:kernel} and \ref{asspt:uniqueness} hold, $\Omega$ is bounded, $h \to 0$, and $nh \to \infty$. Then $d\left(\hat{l}_{\pm, \oplus}, \tilde{l}_{\pm, \oplus}\right) = o_p(1)$.
\begin{proof}
The result follows an identical argument as in the proof of Lemma 2 in \cite{petersen2019frechet}.
\end{proof}
\end{lemma}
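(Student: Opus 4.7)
The plan is to mirror the proof of Lemma 2 in \citet{petersen2019frechet}, which is an argmin-theorem argument for an M-estimator with a weighted squared-distance objective. The two ingredients needed are (i) an appropriate form of stochastic convergence of $\hat{L}_{+,n}$ to $\tilde{L}_{+,n}$ and (ii) well-separation of $\tilde{l}_{+,\oplus}$ as the minimizer of $\tilde{L}_{+,n}$, which is furnished by Assumption \ref{asspt:uniqueness}. Since the one-sided weights $s_{+,in}(c,h)$ differ from their two-sided analogues in \citet{petersen2019frechet} only through the indicator $\mathrm{1}(X_i \ge c)$ and the corresponding one-sided moments, and since Lemma \ref{lemma:fan_gijbels} already delivers the one-sided versions of the moment estimates on which that proof relies, the argument carries over essentially verbatim.

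First, I would establish pointwise convergence of the criterion. Fix $\omega \in \Omega$ and decompose
\[
\hat{L}_{+,n}(\omega) - \tilde{L}_{+,n}(\omega) = \bigl[\hat{L}_{+,n}(\omega) - L_n^{\star}(\omega)\bigr] + \bigl[L_n^{\star}(\omega) - \tilde{L}_{+,n}(\omega)\bigr],
\]
where $L_n^{\star}(\omega) \coloneqq n^{-1}\sum_{i=1}^n s_{+}(X_i,c,h)\,d^2(Y_i,\omega)$ uses the \emph{population} weights with sample data. The first bracket is controlled by plugging the rates from Lemma \ref{lemma:fan_gijbels} for $\hat{\mu}_{+,j} - \mu_{+,j}$ and $\hat{\sigma}_{+,0}^2 - \sigma_{+,0}^2$ into the definition of $s_{+,in}(c,h)$; boundedness of $d^2$ on $\Omega$ then yields $o_p(1)$. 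The second bracket is a centered sum of i.i.d. terms whose mean is zero by construction of $\tilde{L}_{+,n}$ and whose variance is $O((nh)^{-1})$, since each summand contains $K_h(X_i-c)=O(h^{-1})$ on a set of $X$-probability $O(h)$ and $d^2$ is bounded. Chebyshev combined with $nh \to \infty$ delivers $o_p(1)$.

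Second, I would invoke the argmin theorem. By Assumption \ref{asspt:uniqueness} applied to $\tilde{L}_{+,n}$, for any $\varepsilon>0$ there exists $\delta>0$ such that, eventually,
\[
\inf_{d(\omega,\tilde{l}_{+,\oplus})>\varepsilon}\bigl\{\tilde{L}_{+,n}(\omega) - \tilde{L}_{+,n}(\tilde{l}_{+,\oplus})\bigr\} \;\ge\; \delta.
\]
On the event $\{d(\hat{l}_{+,\oplus},\tilde{l}_{+,\oplus})>\varepsilon\}$ one therefore has $\tilde{L}_{+,n}(\hat{l}_{+,\oplus}) - \tilde{L}_{+,n}(\tilde{l}_{+,\oplus}) \ge \delta$, while the defining minimality of $\hat{l}_{+,\oplus}$ gives $\hat{L}_{+,n}(\hat{l}_{+,\oplus}) - \hat{L}_{+,n}(\tilde{l}_{+,\oplus}) \le 0$. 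Subtracting forces $|\hat{L}_{+,n}-\tilde{L}_{+,n}|$ to exceed $\delta/2$ at either $\hat{l}_{+,\oplus}$ or $\tilde{l}_{+,\oplus}$, contradicting the convergence established above and hence driving the probability of the event to zero.

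The main obstacle is the $\hat{l}_{+,\oplus}$ case of the last step: because $\hat{l}_{+,\oplus}$ is random, pointwise convergence alone is insufficient, yet the lemma assumes neither a local nor a global entropy condition on $\Omega$. The Petersen--M\"uller trick closes this gap by exploiting boundedness of $\Omega$, which supplies the non-random envelope $\sup_{\omega\in\Omega}d^2(y,\omega) \le \mathrm{diam}(\Omega)^2$ on every summand, so that a Markov-type bound on $|\hat{L}_{+,n}(\omega) - \tilde{L}_{+,n}(\omega)|$ at an arbitrary $\omega$ substitutes for a genuine uniform law of large numbers. The analogous argument for $\hat{l}_{-,\oplus}$ is identical with $\mathrm{1}(X_i \ge c)$ replaced by $\mathrm{1}(X_i < c)$ throughout.
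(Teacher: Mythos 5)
Your outline correctly mirrors the Petersen--M\"uller strategy: decompose $\hat{L}_{+,n}-\tilde{L}_{+,n}$ into a weight-error piece and a centered-sum piece, use Lemma~\ref{lemma:fan_gijbels} for the former and variance bounds for the latter, then apply a well-separation/argmin argument via Assumption~\ref{asspt:uniqueness}. You also correctly identify the genuine sticking point: the argmin inequality requires control of $\tilde L_{+,n}(\hat l_{+,\oplus})-\hat L_{+,n}(\hat l_{+,\oplus})$ at the \emph{random} point $\hat l_{+,\oplus}$, so pointwise convergence alone is not enough.

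However, your proposed resolution of that sticking point does not work. You claim that boundedness of $\Omega$, via the envelope $\sup_{\omega}d^{2}(y,\omega)\le\operatorname{diam}(\Omega)^{2}$, lets a Markov/Chebyshev bound at an arbitrary $\omega$ ``substitute for a genuine uniform law of large numbers.'' An envelope bound gives that $\sup_{\omega}\lvert\cdot\rvert$ is $O_p(1)$, not that it is $o_p(1)$; a pointwise Chebyshev bound at each fixed $\omega$ says nothing about the value at the data-dependent $\hat l_{+,\oplus}$. The two pieces of the decomposition are not symmetric here. For the weight-error piece $\tfrac1n\sum_i[s_{+,in}(h)-s_{+,i}(h)]d^2(Y_i,\omega)$, uniform control really is elementary—because $s_{+,in}-s_{+,i}=W_{+,0n}\mathrm{1}(X_i\ge c)K_h(X_i-c)+W_{+,1n}\mathrm{1}(X_i\ge c)K_h(X_i-c)(X_i-c)$ with the $o_p(1)$ factors $W_{+,0n},W_{+,1n}$ \emph{not} depending on $i$ or $\omega$, one can pull them out of the sum and then the envelope gives a bound that holds simultaneously over all $\omega$. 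But for the centered-sum piece $\tfrac1n\sum_i\{s_{+,i}(h)d^2(Y_i,\omega)-E[\,\cdot\,]\}$, no such factorization exists, and uniform convergence genuinely requires a maximal inequality over the class $\{d^{2}(\cdot,\omega):\omega\in\Omega\}$. This is exactly what the paper does in its detailed proof of Lemma~\ref{lemma:aux_l_hat-l_tilde}: the Lipschitz bound $\lvert d^{2}(y,\omega_1)-d^{2}(y,\omega_2)\rvert\le 2\operatorname{diam}(\Omega)\,d(\omega_1,\omega_2)$ gives an envelope for the localized class, and Theorems~2.7.11 and 2.14.2 of \citet{vaart1996weak} are invoked together with the entropy condition~\ref{assptT:J}. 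The same mechanism (a bracketing/covering ULLN, requiring total boundedness or an entropy bound—not mere boundedness) is what underlies Lemma~2 of \citet{petersen2019frechet}. So either one imports such a condition (it is essentially free in the paper's target spaces and is assumed elsewhere via Assumption~\ref{assptT:J}), or one uses compactness of $\Omega$ with a finite cover and the Lipschitz property. As written, your step from pointwise control to control at $\hat l_{+,\oplus}$ is a gap, and the specific justification offered is incorrect.
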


\begin{lemma} \label{lemma:aux_l_hat-l_tilde}
	Under assumptions \ref{asspt:kernel} and \ref{asspt:uniqueness}--\ref{asspt:curvature}, $h\to 0$ and $nh \to \infty$ 
 $$
  d\left(\hat{l}_{\pm, \oplus}, \tilde{l}_{\pm, \oplus}\right) = O_p\left[(n h)^{-1 /\left(2\left(\beta_{2}-1\right)\right)}\right].
$$
\begin{proof}

Again, the proof follows from a similar argument as the proof of Theorem 4 in \citet{petersen2019frechet}, with some small modifications. For the sake of completeness, I include it in detail. 
First, note that since $X_i$ are i.i.d. samples from $X$, we can replace $s_+(X, c, h)$ inside $\tilde{L}_{+,n}$ with 
$$s_{+,i}(h)= \mathrm{1}(X_i \geq c) K_h\left(X_i-c\right) \frac{\mu_{+,2}-\mu_{+,1}\left(X_i-c\right)}{\sigma_{+,0}^2}.$$ 
Define $T_{+,n}(\omega)=\hat{L}_{+,n}(\omega)-\tilde{L}_{+,n}(\omega),$ where I remind the reader that $\hat{L}_{+,n}(\omega) = \frac1n \sum_{i=1} s_{+,in}(h) d^2(Y_i, \omega)$ for a general metric space $(\Omega, d)$. Let
$$
D_i(\omega)=d^2\left(Y_i, \omega\right)-d^2\left(Y_i, \tilde{l}_{+,\oplus}\right)
$$
then we have
\begin{equation}
\label{eq:L_hat-L_tilde}
\begin{aligned}
\left|T_{+,n}(\omega)-T_{+,n}\left(\tilde{l}_{+,\oplus}\right)\right| \leq \left\lvert\, \frac{1}{n} \right. & \left. \sum_{i=1}^n\left[s_{+,i n}(h)-s_{+,i}( h)\right] D_i(\omega) \right| \\
& +\left|\frac{1}{n} \sum_{i=1}^n\left(s_{+,i}(h) D_i(\omega)-E\left[s_{+,i}(h) D_i(\omega)\right]\right)\right|.
\end{aligned}
\end{equation}
Define
$$
W_{+,0 n}=\frac{\hat{\mu}_{+,2}}{\hat{\sigma}_{+,0}^2}-\frac{\mu_{+,2}}{\sigma_{+,0}^2}, \quad W_{+,1 n}=\frac{\hat{\mu}_{+,1}}{\hat{\sigma}_{+,0}^2}-\frac{\mu_{+,1}}{\sigma_{+,0}^2}
$$
and note that $s_{+, in}(h) - s_{+,i}(h) = W_{+,0n} \mathrm{1}(X_i \geq c) K_h(X_i-c) +W_{+,1n} \mathrm{1}(X_i \geq c) K_h(X_i-c) (X_i-c) $. 
From Lemma \ref{lemma:fan_gijbels}, it follows that $W_{+,0 n}=O_p\left((n h)^{-1 / 2}\right)$ and $W_{+,1 n}=O_p\left(\left(n h^3\right)^{-1 / 2}\right)$. Moreover,  \begin{equation} \label{eq:E_k}
\begin{aligned}
E\left[\mathrm{1}(X_i \geq c) K_h\left(X_i-x\right)\left(X_i-x\right)^j\right] & =O\left(h^j\right) \\
E\left[\mathrm{1}(X_i \geq c)  K_h^2\left(X_i-x\right)\left(X_i-x\right)^{2 j} \right] & =O\left(h^{2 j-1}\right)
\end{aligned}
\end{equation}
so $s_{+,in}(h) - s_{+,i}(h) = O_p((nh)^{-1/2})$. 
Combined with the fact that $\left|D_i(\omega)\right| \leq 2 \operatorname{diam}(\Omega) d\left(\omega, \tilde{l}_{+,\oplus}\right)$, the first term on the right-hand side of \eqref{eq:L_hat-L_tilde} is $O_p\left((nh)^{-1/2} d\left(\omega, \tilde{l}_{+,\oplus}\right)\right)$, where the $O_p$ term is independent of $\omega$ and $\tilde{l}_{+,\oplus}$. As a result, we can define
$$
B_{+,R} \coloneqq \left\{\sup _{d\left(\omega, \tilde{l}_{+,\oplus}\right)<\delta}\left|\frac{1}{n} \sum_{i=1}^n\left[s_{+,i n}(h)-s_{+,i}( h)\right] D_i(\omega)\right| \leq R \delta(n h)^{-1 / 2}\right\}
$$
for $R>0$, so that $P\left(B_{+,R}^c\right) \rightarrow 0$.

Next, to control the second term on the right-hand side of \ref{eq:L_hat-L_tilde}, define the functions $g_{+,\omega}: \mathcal{R} \times \Omega \rightarrow \mathcal{R}$ by
$$
g_{+,\omega}(z, y)=\frac{\mathrm{1}(z \geq c)}{\sigma_{+,0}^2} K_h(z-c)\left[\mu_{+,2}-\mu_{+,1}(z-c)\right] d^2(y, \omega)
$$

and the function class,

$$
\mathcal{M}_{+, n \delta}=\left\{g_\omega-g_{\tilde{l}_{+,\oplus}}: d\left(\omega, \tilde{l}_{+,\oplus}\right)<\delta\right\}
$$

An envelope function for $\mathcal{M}_{+, n \delta}$ is

$$
G_{+, n \delta}(z)=\frac{2 \operatorname{diam}(\Omega) \delta}{\sigma_{+,0}^2} \mathrm{1}(z \geq c) K_h(z-c)\left|\mu_{+,2}-\mu_{+,1}(z-c)\right|
$$
and $E\left(G_{+,n \delta}^2(X)\right)=O\left(\delta^2 h^{-1}\right)$. Using this fact together with Theorems 2.7.11 and 2.14.2 of \citet{vaart1996weak} and \ref{assptT:J}, for small $\delta$,

$$
E\left(\sup _{d\left(\omega, \tilde{l}_{+,\oplus}(x)\right)<\delta}\left|\frac{1}{n} \sumn s_{+,i}(h) D_i(\omega)-E\left[s_{+,i}(h) D_i(\omega)\right]\right|\right)=O\left(\delta(n h)^{-1 / 2}\right)
$$

Combining this with \ref{eq:L_hat-L_tilde} and the definition of $B_R$,
\begin{equation} \label{eq:rate_delta_bound}
E\left(\mathrm{1}_{B_{+,R}} \sup _{d\left(\omega, \tilde{l}_{+,\oplus}\right)<\delta}\left|T_{+,n}(\omega)-T_{+,n}\left(\tilde{l}_{+,\oplus}\right)\right|\right) \leq \frac{a \delta}{(n h)^{1 / 2}}
\end{equation}
where $\mathrm{1}_{B_{+,R}}$ is the indicator function for the set $B_{+,R}$ and $a$ is a constant depending on $R$ and the entropy integral in \ref{assptT:J}.

To finish, set $t_n=(n h)^{\frac{\beta_2}{4\left(\beta_2-1\right)}}$ and define
$$
S_{+,jn}=\left\{\omega: 2^{j-1}<t_n d\left(\omega, \tilde{l}_{+,\oplus}\right)^{\beta_2 / 2} \leq 2^j\right\}
$$
The rest of the argument follows similarly to the proof of Lemma \ref{lemma:convergence_tilde} and \citet[Theorem 3.2.5]{vaart1996weak}. Choose $\eta_2$ satisfying \ref{asspt:curvature} \ref{asspt:curvature_estimator} and such that \ref{assptT:J} is satisfied for any $\delta<\eta_2$. Set $\tilde{\eta}:=\left(\eta_2 / 2\right)^{\beta_2 / 2}$. For any integer $M$,
\begin{equation}
\begin{aligned} \label{eq:prob_rate_lhat}
& P\left(t_n d\left(\tilde{l}_{+,\oplus}, \hat{l}_{+,\oplus}\right)^{\beta_2 / 2}>2^M\right) \leq P\left(B_{+,R}^c\right)+P\left(2 d\left(\tilde{l}_{+,\oplus}, \hat{l}_{+,\oplus}\right)>\eta_2\right) \\
& \quad+\sum_{\substack{j \geq M \\
2^j \leq t_n \tilde{\eta}}} P\left(\left\{\sup _{\omega \in S_{+,j n}}\left|T_{+,n}(\omega)-T_{+,n}\left(\tilde{l}_{+,\oplus}\right)\right| \geq C \frac{2^{2(j-1)}}{t_n^2}\right\} \cap B_{+,R}\right)
\end{aligned}
\end{equation}
for some $C>0$ where the last term goes to zero for any $\eta_2>0$ by Lemma \ref{lemma:l_hat-l_tilde}. Since
$$
d\left(\omega, \tilde{l}_{+,\oplus}\right)<\left(2^j / t_n\right)^{2 / \beta_2}
$$
on $S_{+,j n}$, this implies that the sum on the right-hand side of \eqref{eq:prob_rate_lhat} is bounded by
$$
4 b C^{-1} \sum_{\substack{j \geq M \\ 2^j \leq t_n \tilde{\eta}}} \frac{2^{2 j\left(1-\beta_2\right) / \beta_2}}{t_n^{2\left(1-\beta_2\right) / \beta_2} \sqrt{n h}} \leq 4 b C^{-1} \sum_{j \geq M}\left(\frac{1}{4^{\left(\beta_2-1\right) / \beta_2}}\right)^j.
$$
for some $b > 0$, using Markov's inequality and \eqref{eq:rate_delta_bound}. This converges for $M = M_n \to \infty$ since $\beta_2>1$. Hence,
$$
d\left(\hat{l}_{+,\oplus}, \tilde{l}_{+,\oplus}\right)=O_p\left(t_n^{2 / \beta_2}\right)=O_p\left[(n h)^{-\frac{1}{2\left(\beta_2-1\right)}}\right] .
$$ The result for $\hat{l}_{-,\oplus}$ follows analogously. 
\end{proof}
\end{lemma}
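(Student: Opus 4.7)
The plan is to turn this into a standard $M$-estimator rate calculation by combining a modulus-of-continuity bound on the centered process $T_n(\omega) \coloneqq \hat{L}_{+,n}(\omega) - \tilde{L}_{+,n}(\omega)$ with the local curvature condition in Assumption \ref{asspt:curvature}\ref{asspt:curvature_estimator}, and then closing via the peeling argument of \citet[Theorem 3.2.5]{vaart1996weak}. Since $\hat{l}_{+,\oplus}$ and $\tilde{l}_{+,\oplus}$ are the respective minimizers, the basic comparison $\tilde{L}_{+,n}(\hat{l}_{+,\oplus}) - \tilde{L}_{+,n}(\tilde{l}_{+,\oplus}) \le T_n(\tilde{l}_{+,\oplus}) - T_n(\hat{l}_{+,\oplus})$ reduces the rate problem to uniformly controlling $|T_n(\omega) - T_n(\tilde{l}_{+,\oplus})|$ on shrinking balls around $\tilde{l}_{+,\oplus}$; Lemma \ref{lemma:l_hat-l_tilde} already confines $\hat{l}_{+,\oplus}$ to such a ball for large $n$.

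The next step is to decompose $T_n(\omega) - T_n(\tilde{l}_{+,\oplus})$ into (i) an error-in-weights piece, obtained by replacing the sample kernel moments $\hat{\mu}_{+,j}, \hat{\sigma}_{+,0}^2$ in $s_{+,in}(h)$ by their population counterparts, and (ii) a centered empirical-process piece built from the resulting pseudo-true weights times $D_i(\omega) \coloneqq d^2(Y_i, \omega) - d^2(Y_i, \tilde{l}_{+,\oplus})$. For (i), Lemma \ref{lemma:fan_gijbels} delivers $(nh)^{-1/2}$-type rates on the moment-ratio plug-ins; combined with the Lipschitz bound $|D_i(\omega)| \le 2\operatorname{diam}(\Omega)\, d(\omega, \tilde{l}_{+,\oplus})$ and the $O(h^j)$ order of the kernel-weighted averages, this piece is $O_p((nh)^{-1/2} d(\omega, \tilde{l}_{+,\oplus}))$ uniformly in $\omega$. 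For (ii), I would invoke a maximal inequality such as \citet[Theorem 2.14.2]{vaart1996weak} on the function class indexed by a $\delta$-ball around $\tilde{l}_{+,\oplus}$, using an envelope whose $L^2(P)$-norm is of order $\delta h^{-1/2}$ (the extra $h^{-1}$ comes from squaring the kernel weight). Together with the local entropy integral in \ref{assptT:J}, this yields expected supremum of order $\delta(nh)^{-1/2}$. The two pieces combine to give $\sup_{d(\omega, \tilde{l}_{+,\oplus}) < \delta} |T_n(\omega) - T_n(\tilde{l}_{+,\oplus})| = O_p(\delta / \sqrt{nh})$ on a high-probability event.

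With a modulus of order $\delta/\sqrt{nh}$ and a curvature lower bound of order $d(\omega, \tilde{l}_{+,\oplus})^{\beta_2}$, the canonical rate calculation picks $t_n = (nh)^{\beta_2/(4(\beta_2-1))}$ so that the stochastic and deterministic scales balance shell by shell. Defining $S_{jn} = \{\omega : 2^{j-1} < t_n d(\omega, \tilde{l}_{+,\oplus})^{\beta_2/2} \le 2^j\}$ and applying Markov's inequality on each shell, the probabilities that $\hat{l}_{+,\oplus}$ lands in $S_{jn}$ telescope into a geometric series in $4^{-(\beta_2-1)/\beta_2}$, which is summable precisely because $\beta_2 > 1$. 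Letting the lower index $M \to \infty$ yields $d(\hat{l}_{+,\oplus}, \tilde{l}_{+,\oplus}) = O_p(t_n^{-2/\beta_2}) = O_p((nh)^{-1/(2(\beta_2-1))})$, and the left-sided case follows by a symmetric argument. The main obstacle is careful bookkeeping of the two distinct rates at which the kernel-moment plug-ins enter the weight error: the $j=1$ moment plug-in is only $O_p((nh^3)^{-1/2})$, a factor $h^{-1}$ slower than the $j=0$ and $j=2$ plug-ins, but it always appears multiplied by $(X_i - c)$, whose kernel-weighted moment supplies a compensating $O(h)$ that restores the clean $(nh)^{-1/2}$ order needed for the modulus bound above.
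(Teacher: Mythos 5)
Your proposal is correct and follows essentially the same approach as the paper's proof: the same decomposition of $T_n(\omega)-T_n(\tilde{l}_{+,\oplus})$ into an error-in-weights piece controlled via Lemma \ref{lemma:fan_gijbels} and a centered empirical-process piece controlled via the maximal inequality of \citet[Theorem 2.14.2]{vaart1996weak} and the local entropy condition, followed by the same $t_n=(nh)^{\beta_2/(4(\beta_2-1))}$ peeling argument of \citet[Theorem 3.2.5]{vaart1996weak}. Your closing remark about the $O(h)$ compensation for the $W_{+,1n}$ plug-in is exactly the role played by \eqref{eq:E_k} in the paper's argument, just stated more explicitly.
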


\begin{proposition} \label{prop:pooled_consistency}
Under Assumptions \ref{asspt:kernel}, 
\ref{asspt:bandwidth},
\ref{asspt:sampling}, 
\ref{asspt:densities},
\ref{asspt:uniqueness}-p,  \ref{asspt:curvature}-p, and \ref{assptT:J}, letting the optimal sequence $h=n^{-\gamma^*}$ with $\gamma^*=\left(\beta_1-1\right) /\left(4 \beta_2 + \beta_1-5\right)$
\[
d(\hat{l}_{p,\oplus}, m_{p,\oplus}) = O_p(n^{-2 /\left(\beta_1+4 \beta_2-5\right)}).
\]
\end{proposition}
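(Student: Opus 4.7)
The plan is to decompose the error by inserting the pseudo-true minimizer $\tilde l_{p,\oplus}$, bounding each piece with a direct adaptation of Lemma~\ref{lemma:convergence_tilde} (deterministic/bias part) and Lemma~\ref{lemma:aux_l_hat-l_tilde} (stochastic part), and then choosing $h$ to balance the two rates. The crucial structural fact that makes the adaptation essentially routine is that $s_{p,in}(c,h)=\tfrac12 s_{+,in}(c,h)+\tfrac12 s_{-,in}(c,h)$, so that $\hat L_{p,n}(\omega)=\tfrac12\hat L_{+,n}(\omega)+\tfrac12\hat L_{-,n}(\omega)$ and analogously for the population/pseudo-true objectives. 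Linearity then lets one transfer every approximation, envelope bound, and entropy estimate from the one-sided setting to the pooled one, provided Assumptions \ref{asspt:uniqueness}-p and \ref{asspt:curvature}-p supply the pooled analogues of uniqueness, well-separation, and curvature.

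The first step is to mirror Lemma~\ref{lemma:convergence_tilde} for the pooled objective. Writing $\tilde L_{p,n}(\omega)=\tfrac12\tilde L_{+,n}(\omega)+\tfrac12\tilde L_{-,n}(\omega)$ and applying the expansion established in the proof of Lemma~\ref{lemma:convergence_tilde} to each side separately yields
\[
\tilde L_{p,n}(\omega)=\tfrac12 M_{+,\oplus}(\omega)+\tfrac12 M_{-,\oplus}(\omega)+O(h^{2})=M_{p,\oplus}(\omega)+O(h^{2}),
\]
uniformly over $\omega\in\Omega$ thanks to boundedness of $d^{2}$ and the uniform control in Assumption~\ref{asspt:densities}. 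Plugging this uniform approximation into the peeling argument of Lemma~\ref{lemma:convergence_tilde}, with the pooled curvature exponent $\beta_1$ from Assumption~\ref{asspt:curvature}-p and the local entropy condition at $\omega_0=m_{p,\oplus}$ from Assumption~\ref{assptT:J}, gives
\[
d\!\bigl(\tilde l_{p,\oplus},m_{p,\oplus}\bigr)=O\!\bigl(h^{2/(\beta_1-1)}\bigr).
\]

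The second step is to mirror Lemma~\ref{lemma:aux_l_hat-l_tilde}. Define $T_{p,n}(\omega)=\hat L_{p,n}(\omega)-\tilde L_{p,n}(\omega)$. Since this is $\tfrac12(T_{+,n}+T_{-,n})$, the decomposition \eqref{eq:L_hat-L_tilde} applies termwise, with the same envelope $G_{p,n\delta}=\tfrac12 G_{+,n\delta}+\tfrac12 G_{-,n\delta}$, still satisfying $E[G_{p,n\delta}^{2}(X)]=O(\delta^{2}h^{-1})$. The preliminary consistency $d(\hat l_{p,\oplus},\tilde l_{p,\oplus})=o_p(1)$ follows from Lemma~\ref{lemma:l_hat-l_tilde} applied to the pooled objective (its proof only uses uniqueness, boundedness, and $nh\to\infty$, all of which are available). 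Running the Alexander/Van der Vaart peeling argument of Lemma~\ref{lemma:aux_l_hat-l_tilde} with the pooled curvature exponent $\beta_2$ from Assumption~\ref{asspt:curvature}-p then yields
\[
d\!\bigl(\hat l_{p,\oplus},\tilde l_{p,\oplus}\bigr)=O_p\!\bigl((nh)^{-1/(2(\beta_2-1))}\bigr).
\]

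Combining the two displays by the triangle inequality,
\[
d\!\bigl(\hat l_{p,\oplus},m_{p,\oplus}\bigr)\;=\;O\!\bigl(h^{2/(\beta_1-1)}\bigr)+O_p\!\bigl((nh)^{-1/(2(\beta_2-1))}\bigr),
\]
and balancing the two terms by setting $h^{2/(\beta_1-1)}\asymp(nh)^{-1/(2(\beta_2-1))}$ gives, after taking logs with $h=n^{-\gamma}$, the condition $4\gamma(\beta_2-1)=(1-\gamma)(\beta_1-1)$, i.e.\ $\gamma^{*}=(\beta_1-1)/(4\beta_2+\beta_1-5)$. Substituting produces the advertised rate $n^{-2/(\beta_1+4\beta_2-5)}$. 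The main obstacle is conceptual rather than computational: verifying that every uniform-in-$\omega$ control (the $O(h^{2})$ bias, the envelope bound, and the entropy integral) passes through the convex combination defining the pooled weights. Once this is established, the two peeling arguments go through verbatim with $(\beta_1,\beta_2)$ replaced by their pooled counterparts, and the rate optimization is elementary.
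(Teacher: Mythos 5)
Your proposal is correct and follows essentially the same route as the paper's proof: exploit the linearity $s_{p,in}=\tfrac12 s_{+,in}+\tfrac12 s_{-,in}$ to transfer the bias argument of Lemma~\ref{lemma:convergence_tilde} (yielding $d(\tilde l_{p,\oplus},m_{p,\oplus})=O(h^{2/(\beta_1-1)})$ via Assumption~\ref{asspt:curvature}-p) and the stochastic peeling argument of Lemma~\ref{lemma:aux_l_hat-l_tilde} (yielding $d(\hat l_{p,\oplus},\tilde l_{p,\oplus})=O_p((nh)^{-1/(2(\beta_2-1))})$ via the pooled envelope) to the pooled objective, then combine by the triangle inequality and balance the two exponents in $h=n^{-\gamma}$. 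Your explicit derivation of $\gamma^*$ is a welcome detail the paper omits, and you also correctly use the exponent $-1/(2(\beta_2-1))$ where the paper's proof text contains a typographical slip ($-1/(2\beta_2-1)$) inconsistent with its own Lemma~\ref{lemma:aux_l_hat-l_tilde} and with the stated final rate.
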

\begin{proof}
First, I argue that $d\left(\hat{l}_{ p, \oplus}, \tilde{l}_{ p, \oplus}\right)=O_p\left[(n h)^{-1 /\left(2\left(\beta_2-1\right)\right)}\right]$, where 
$$\tilde{l}_{ p, \oplus} = \argmin_{\omega \in \Omega} E[s_{p,i}(h) d^2(Y, \omega)]$$ with 
$
s_{i,p}(h) \coloneqq \frac12 s_{+,i}(h) + \frac12 s_{-,i}(h).
$
Then, we essentially apply the proof of Lemma  \ref{lemma:aux_l_hat-l_tilde} twice for $s_{+,in}$ and $s_{-,in}$ and combine results. In particular, borrowing notation from that proof throughout, it follows immediately that $s_{p,in}(h)-s_{p,i}(h)=O_p\left((n h)^{-1 / 2}\right)$ and thus we can define,
\[
B_{p, R}:=\left\{\sup _{d\left(\omega, \tilde{l}_{p, \oplus}\right)<\delta}\left|\frac{1}{n} \sum_{i=1}^n\left[s_{p, i n}(h)-s_{p, i}(h)\right] D_{p, i}(\omega)\right| \leq R \delta(n h)^{-1 / 2}\right\}
\]
from some $R>0$ with $D_{p, i}(\omega) \coloneqq d^2(Y_i, \omega) - d^2(Y_i, \tilde{l}_{p,\oplus})$ so that $P(B^c_{p,R}) \to 0$. Similarly, we can define the pooled functions,
$
g_{p, \omega}(z, y) \coloneqq \frac12 g_{+, \omega}(z, y) + \frac12 g_{-, \omega}(z, y),
$
for which we can similarly construct an envelope $G_{p, n \delta}(z) = \frac12 G_{+, n \delta}(z) + \frac12 G_{-, n \delta}(z)$. Then we can use Theorems 2.7.11 and 2.14.2 of \citet{vaart1996weak} and \ref{assptT:J} to obtain, for small $\delta>0$, 
\[
E\left(\sup _{d\left(\omega, \tilde{l}_{p, \oplus}\right)<\delta}\left|\frac{1}{n} s_{p, i}(h) D_{p,i}(\omega)-E\left[s_{p,i}(h) D_{p,i}(\omega)\right]\right|\right)=O\left(\delta(n h)^{-1 / 2}\right).
\]
Then the rest of the proof for the stochastic term follows identically by using the modified Assumption \ref{asspt:curvature}-p instead of \ref{asspt:curvature}.
In a similar vein, one can easily show that $d\left(\tilde{l}_{p, \oplus}, m_{p, \oplus}\right)=O_p\left(h^{2 /\left(\beta_1-1\right)}\right)$ by applying the proof strategy of \ref{lemma:convergence_tilde} twice and combining to show that $\tilde{L}_{p,\oplus}(\omega) - M_{p,\oplus}(\omega) = O(h^2)$. The rest of the proof proceeds identically, but using the modified \ref{asspt:curvature}-p. Then the final result, $d(\hat{l}_{p,\oplus}, m_{p,\oplus}) = O_p((nh)^{-1/(2\beta_2 - 1)} + h^{2/(\beta_1-1)})$ follows by the triangle inequality. 
\end{proof}

Finally, the following auxiliary lemma is used in several proofs below,
\begin{lemma}\label{lem:s-avg-op} \label{lemma:l_bar-l_hat_op}
Under Assumptions \ref{asspt:kernel}, \ref{asspt:bandwidth},
\ref{asspt:sampling} and \ref{asspt:densities},
\[
  \frac1n\sum_{i=1}^n |s_{+,in}(h)| = O_p(1).
\]
\end{lemma}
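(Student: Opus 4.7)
The plan is to bound $|s_{+,in}(h)|$ pointwise by the triangle inequality and then reduce the statement to the control of a handful of sample kernel moments, each of which is already handled by Lemma \ref{lemma:fan_gijbels}. Concretely, I first write
\[
|s_{+,in}(h)|\;\le\;\frac{\mathbb{1}(X_i\ge c)\,K_h(X_i-c)}{\hat{\sigma}_{+,0}^{2}}\Bigl[\,|\hat{\mu}_{+,2}|+|\hat{\mu}_{+,1}|\,|X_i-c|\,\Bigr],
\]
and average over $i$ to reach the bound
\[
\frac1n\sum_{i=1}^{n}|s_{+,in}(h)|\;\le\;\frac{|\hat{\mu}_{+,2}|\,\hat{\mu}_{+,0}\;+\;|\hat{\mu}_{+,1}|\,\hat{\nu}_{+,1}}{\hat{\sigma}_{+,0}^{\,2}},\qquad \hat{\nu}_{+,1}\coloneqq\frac1n\sum_{i=1}^{n}\mathbb{1}(X_i\ge c)K_h(X_i-c)\,|X_i-c|.
\]

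Next, I would invoke Lemma \ref{lemma:fan_gijbels} to obtain $\hat{\mu}_{+,j}=\mu_{+,j}+o_p(1)$ with $\mu_{+,j}=O(h^{j})$ for $j=0,1,2$. A verbatim repetition of that lemma's proof, with $(X_k-c)^{j}$ replaced by $|X_k-c|$ in the change of variables, gives the same expansion for $\hat{\nu}_{+,1}$ and hence $\hat{\nu}_{+,1}=O_p(h)$. Together these yield
\[
|\hat{\mu}_{+,2}|\,\hat{\mu}_{+,0}=O_p(h^{2}),\qquad |\hat{\mu}_{+,1}|\,\hat{\nu}_{+,1}=O_p(h^{2}),
\]
so the numerator is $O_p(h^{2})$.

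For the denominator I use the identity $\hat{\sigma}_{+,0}^{2}=\hat{\mu}_{+,0}\hat{\mu}_{+,2}-\hat{\mu}_{+,1}^{2}$, which by Lemma \ref{lemma:fan_gijbels} and the bandwidth condition $nh\to\infty$ from Assumption \ref{asspt:bandwidth}(a) equals $h^{2}f_X(c)^{2}(K_{+,10}K_{+,12}-K_{+,11}^{2})\{1+o_p(1)\}$. The key subtlety---the only step that is not pure bookkeeping---is checking that the Gram-type determinant $K_{+,10}K_{+,12}-K_{+,11}^{2}$ is strictly positive; this follows from the strict Cauchy--Schwarz inequality applied to the functions $1$ and $u$ on the support of $K$ restricted to $[0,\infty)$, which are linearly independent because $K$ is continuous with compact support (Assumption \ref{asspt:kernel}) and hence not concentrated on a single point of the half-line. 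Combined with $f_X(c)>0$ from Assumption \ref{asspt:densities}(a), this gives $\hat{\sigma}_{+,0}^{-2}=h^{-2}\,O_p(1)$.

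Putting the pieces together, the numerator $O_p(h^{2})$ divided by the denominator of order $h^{2}$ delivers the claimed $O_p(1)$ bound, and the argument for $s_{-,in}$ is identical after reversing the role of the half-line. The only part requiring a non-mechanical step is the strict positivity of the kernel Gram determinant, and even this is a one-line consequence of standard properties of the kernel.
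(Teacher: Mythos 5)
Your argument is correct, and it takes a genuinely different and arguably more elementary route than the paper's. The paper first replaces the feasible weights $s_{+,in}(h)$ by the infeasible weights $s_{+,i}(h)$ (citing the $O_p((nh)^{-1/2})$ gap from Lemma \ref{lemma:aux_l_hat-l_tilde}), and then applies a weak law of large numbers to $\frac1n\sum_i|s_{+,i}(h)|$, bounding the variance via $E[|s_{+,i}(h)|]=O(1)$ and $E[s_{+,i}^2(h)]=O(h^{-1})$. You instead bound the feasible average directly: pull the random (but $i$-independent) quantities $\hat{\mu}_{+,1},\hat{\mu}_{+,2},\hat{\sigma}_{+,0}^2$ out of the sum, reduce to the kernel sample moments $\hat{\mu}_{+,0},\hat{\mu}_{+,1},\hat{\mu}_{+,2}$, and appeal to Lemma \ref{lemma:fan_gijbels} plus $nh\to\infty$ to get $O_p(h^2)/h^2=O_p(1)$. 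Your route avoids the substitution step entirely (which in the paper is stated a bit tersely, since the passage from $s_{+,in}-s_{+,i}=O_p((nh)^{-1/2})$ to $\frac1n\sum_i|s_{+,in}|-\frac1n\sum_i|s_{+,i}|=O_p((nh)^{-1/2})$ itself needs a short argument), and isolates the only non-mechanical point — strict positivity of the one-sided kernel Gram determinant $K_{+,10}K_{+,12}-K_{+,11}^2$ — which you justify correctly by the strict Cauchy--Schwarz inequality and the fact that the symmetric continuous kernel has mass on a nondegenerate subset of $[0,\infty)$. One small simplification you could make: because the indicator $\mathbb{1}(X_i\ge c)$ forces $X_i-c\ge0$, your $\hat{\nu}_{+,1}$ is identically $\hat{\mu}_{+,1}$ (and on the minus side $\hat{\nu}_{-,1}=-\hat{\mu}_{-,1}=|\hat{\mu}_{-,1}|$), so there is no need to re-derive the analogue of Lemma \ref{lemma:fan_gijbels} for $|X_k-c|$; the bound $|\hat{\mu}_{+,1}|\hat{\nu}_{+,1}=\hat{\mu}_{+,1}^2=O_p(h^2)$ is immediate.
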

\begin{proof}
From Lemma \ref{lemma:aux_l_hat-l_tilde} we have $\frac1n \sum_{i=1}^n |s_{+,in}(h)| = \frac1n \sum_{i=1}^n |s_{+,i}(h)| + O_p((nh)^{-1/2}$. Moreover, $E\left[|s_{+,i}(h)|\right] = O(1)$ and $E[s_{+,i}^2(h)]=O(h^{-1})$ by Lemma \ref{lemma:fan_gijbels} and \eqref{eq:E_k}, so the weak law of large numbers holds for $s_i$ since \begin{align*} 
& \var\left( \frac1n \sumn |s_{+,i}| \right) = \frac{1}{n^2} \sumn \var(|s_{+,i}|) = \frac{1}{n^2} \sumn \left(  E[|s_{+,i}^2|] - E[|s_{+,i}|]^2 \right) \\ 
& = \frac{1}{n^2} \sumn \left( O(h^{-1}) - O(1) \right) = O_p((nh)^{-1}).
\end{align*} As a result, $\frac1n \sum_{i=1}^n |s_{+,in}(h)| = O_p(1)$.
\end{proof}

\subsection{Proofs for Jump Test}
\label{app:anova_proofs}

% Denote the conditional Fr\'echet variance $V_{T,\oplus} = E[d^2(m_{T,\oplus}, Y_i) \mid X], \: T=0,1$.
%==========================

%==========================
\begin{lemma} \label{lemma:clt_first_term}
 Under Assumptions  \ref{asspt:sampling}-\ref{assptT:J},
 \[ 
 \frac1n \sum_{i=1}^n \left( s_{\pm, i n}(h) \left( d^2(\hat{l}_{\pm, \oplus}, Y_i) - d^2(m_{\pm, \oplus}, Y_i) \right) \right) = o_p\left((nh)^{-1/2}\right).
 \]
\begin{proof}
Write $\Delta=d(\hat l_{+,\oplus}, m_{+,\oplus})$ and 
$D_i(\omega)=d^{2}(Y_i,\omega)-d^{2}(Y_i,m_{+,\oplus})$. The term to be bounded is $\frac{1}{n} \sum_{i=1}^n s_{+,in}(h) D_i(\hat{l}_{+,\oplus})$.
By the reverse triangle inequality and the boundedness of the metric space $\Omega$, we have that $|D_i(\hat l_{+,\oplus})| \le 2\operatorname{diam}(\Omega)\,\Delta$. This implies,
\begin{equation}
\Bigl|\tfrac1n\sum_{i=1}^n s_{+,in}(h)\,D_i(\hat l_{+,\oplus})\Bigr|
      \le 2\operatorname{diam}(\Omega)\,\Delta\,
           \left( \frac1n \sum_{i=1}^n |s_{+,in}(h)| \right).
 \label{eq:plugin_bound}
\end{equation}
From Lemma~\ref{lem:s-avg-op}, we have, $\tfrac1n\sum_i |s_{+,in}(h)|=O_p(1)$. 
Therefore, the entire expression in \eqref{eq:plugin_bound} is of order $O_p(\Delta)$. The proof thus reduces to showing that the first-stage estimation error $\Delta$ converges to zero faster than the target rate of the second-stage estimator,  $\Delta = o_p((nh)^{-1/2})$.

By the triangle inequality and the convergence rates established in Lemmas~\ref{lemma:convergence_tilde} and \ref{lemma:aux_l_hat-l_tilde}, we have,
\[
\Delta \le \underbrace{d(\tilde{l}_{+,\oplus}, m_{+,\oplus})}_{\text{Bias Term}} + \underbrace{d(\hat{l}_{+,\oplus}, \tilde{l}_{+,\oplus})}_{\text{Stochastic Term}} = O_p(h_m^{2/(\beta_1-1)}) + O_p((nh_m)^{-1/(2(\beta_2-1))}).
\]
Letting $h_m = n^{-\theta}$ and $h = n^{-\gamma}$, the target rate is $(nh)^{-1/2} = n^{-(1-\gamma)/2}$. I analyze each component to show it is of a smaller order than this target rate.

First, for the bias component of $\Delta$, I need $h_m^{2/(\beta_1-1)} = o_p(n^{-(1-\gamma)/2})$, which is equivalent to $1 - \gamma - \frac{4\theta}{\beta_1-1} < 0$. This is a relative rate condition on the bandwidths, enforced by requiring $nh h_m^{4/(\beta_1-1)} \to 0$ in Assumption~\ref{asspt:bandwidth}(a).

Second, for the stochastic component of $\Delta$, I require $(nh_m)^{-1/(2(\beta_2-1))} = o_p((nh)^{-1/2})$, which is equivalent to $\frac{1-\theta}{2(\beta_2-1)} > \frac{1-\gamma}{2}$ in Assumption~\ref{asspt:bandwidth}(d).

Since both the bias and stochastic components of $\Delta$ are of a strictly smaller order than $(nh)^{-1/2}$ under these conditions, it follows that $\Delta = o_p((nh)^{-1/2})$. Substituting this result back into \eqref{eq:plugin_bound} shows that the entire plug-in term is $o_p((nh)^{-1/2})$, which completes the proof.
\end{proof}
\end{lemma}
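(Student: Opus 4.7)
The plan is to reduce the random-weighted difference to a deterministic product: a uniform Lipschitz bound on the integrand times the estimation error $d(\hat{l}_{+,\oplus}, m_{+,\oplus})$, multiplied by a control on the average magnitude of the weights. Because $\Omega$ is bounded, the map $\omega \mapsto d^2(Y_i, \omega)$ is Lipschitz with constant $2\operatorname{diam}(\Omega)$ by the identity $a^2-b^2=(a+b)(a-b)$ combined with the reverse triangle inequality, so
\[
|d^2(\hat{l}_{+,\oplus}, Y_i) - d^2(m_{+,\oplus}, Y_i)| \;\le\; 2\operatorname{diam}(\Omega)\, d(\hat{l}_{+,\oplus}, m_{+,\oplus})
\]
uniformly in $i$. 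This bound is independent of $Y_i$, so it can be pulled out of the weighted average to give
\[
\Bigl|\tfrac{1}{n}\sum_{i=1}^n s_{+,in}(h)\bigl(d^2(\hat{l}_{+,\oplus},Y_i)-d^2(m_{+,\oplus},Y_i)\bigr)\Bigr|
\;\le\; 2\operatorname{diam}(\Omega)\, d(\hat{l}_{+,\oplus},m_{+,\oplus}) \cdot \tfrac{1}{n}\sum_{i=1}^n |s_{+,in}(h)|.
\]

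Next, I would invoke Lemma \ref{lem:s-avg-op} to conclude that $\tfrac{1}{n}\sum_i |s_{+,in}(h)| = O_p(1)$, so the problem reduces to showing $d(\hat{l}_{+,\oplus}, m_{+,\oplus}) = o_p((nh)^{-1/2})$. Here I would split the error into a deterministic bias piece and a stochastic piece via the triangle inequality, $d(\hat{l}_{+,\oplus}, m_{+,\oplus}) \le d(\tilde{l}_{+,\oplus}, m_{+,\oplus}) + d(\hat{l}_{+,\oplus}, \tilde{l}_{+,\oplus})$, and then apply Lemma \ref{lemma:convergence_tilde} to the first term, yielding $O(h_m^{2/(\beta_1-1)})$, and Lemma \ref{lemma:aux_l_hat-l_tilde} to the second, yielding $O_p((nh_m)^{-1/(2(\beta_2-1))})$.

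The main obstacle will be verifying that each of these two rates is strictly faster than the target $(nh)^{-1/2}$, which is a pure bandwidth-bookkeeping exercise but the reason the seemingly baroque conditions in Assumption \ref{asspt:bandwidth} are needed. For the bias contribution, multiplying by $(nh)^{1/2}$ and squaring reveals that the requirement $h_m^{2/(\beta_1-1)} = o((nh)^{-1/2})$ is exactly $nh\, h_m^{4/(\beta_1-1)} \to 0$, i.e.\ Assumption \ref{asspt:bandwidth}(b). For the stochastic contribution, the requirement $(nh_m)^{-1/(2(\beta_2-1))} = o((nh)^{-1/2})$ is exactly $(nh)^{1/2}(nh_m)^{-1/(2(\beta_2-1))} \to 0$, which is Assumption \ref{asspt:bandwidth}(d). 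Substituting these back through the chain of displays above yields the claim. The one subtlety worth flagging is that in the Euclidean local polynomial literature no such explicit rate relationship is needed because the first-stage error cancels algebraically; in the metric-space setting that cancellation is unavailable, so the proof strategy relies on the Lipschitz crutch and the explicit bandwidth separation to suppress the first-stage error directly.
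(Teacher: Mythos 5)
Your proof is correct and follows essentially the same route as the paper's: reverse triangle inequality plus boundedness to get a uniform Lipschitz bound of $2\operatorname{diam}(\Omega)\,d(\hat l_{+,\oplus},m_{+,\oplus})$, pull it out of the weighted sum, invoke Lemma~\ref{lem:s-avg-op} for $\tfrac1n\sum_i |s_{+,in}(h)|=O_p(1)$, split the Fr\'echet-mean error by the triangle inequality into bias (Lemma~\ref{lemma:convergence_tilde}) and stochastic (Lemma~\ref{lemma:aux_l_hat-l_tilde}) pieces, and check the two bandwidth conditions. One small remark: you correctly attribute the bias-rate condition $nh\,h_m^{4/(\beta_1-1)}\to 0$ to Assumption~\ref{asspt:bandwidth}(b), whereas the paper's own proof cites part (a) for that same display — your reference is the accurate one.
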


The next few results prepare the grounds for proving the central limit theorem for conditional Fr\'echet variances. For full generality, I first prove it without the undersmoothing condition, so that an asymptotic bias term of order $h^2$ shows up, which cancels out when imposing the undersmoothing. For that purpose, I need to define the following bias and variance components. Let $V_{\pm,\oplus}''(c)$ denote the second derivative of the Fr\'echet variance from the left/right, evaluated at $x=c$.  For $K_{+,kj} = \int_0^\infty u^j K^k(u) \dd u$, $K_{-,kj} = \int_{-\infty}^0 u^j K^k(u) \dd u$, remember that,
\begin{equation} \label{eq:S_plus_main}
S_\pm \coloneqq \frac{\int_{0}^\infty (K_{\pm,12} - u K_{\pm,11})^2 K^2(u) \dd u}{ (K_{\pm,12} K_{\pm,10} - K^2_{\pm,11})^2 },
\end{equation}
and define,
\begin{equation} \label{eq:B_plus_main}
B_\pm \coloneqq \frac{(K_{\pm,12}^2 - K_{+,11} K_{\pm,13})}{K_{\pm,12} K_{\pm,10} - K_{\pm,11}^2}.
\end{equation}
$S_\pm$ and $B_\pm$ are defined analogously with $K_{\pm,kj} = \int_{\pm\infty}^0 u^j K^k(u) \dd u$. Note that $B_+ = B_-$ and $S_+ = S_-$ here due to the assumption of a symmetric kernel, but I use this notation to align with standard notation in local polynomial regression \citep{fan1992variable}.

The following result restates the well-known expression of the bias of local polynomial regression estimators from \citet[Theorem 4]{fan1992variable} (see also the general derivations in \citet[Lemma A.1(B)]{calonico2014robust}).
\begin{lemma} \label{lemma:clt_bias} 
Let $m(x)$ be a function with bounded second derivative. Under Assumptions \ref{asspt:kernel}, \ref{asspt:bandwidth}, \ref{asspt:sampling}, \ref{asspt:densities},
\begin{align*}
& \frac1n \sum s_{\pm, in}(h) \left( m(X_i) - m(c) \right) = \frac{h^2}{2} m''(c) B_{\pm} (1 + o_p(1)) 
\end{align*}
where
$$
B_{\pm}:=\frac{\left(K_{\pm, 12}^2-K_{\pm, 11} K_{\pm, 13}\right)}{K_{\pm, 12} K_{\pm, 10}-K_{\pm, 11}^2} .
$$
\begin{proof}
The equality follows from a well-known expression for the bias term in local polynomial regression and follows from the proof of Theorem 4 in \citet{fan1992variable}, with $s_{\pm, in}$ corresponding to $w_j / \sum_{j=1}^n w_j$ in their notation, since $m(\cdot)$ is bounded and differentiable with respect to its argument $x$. 
\end{proof}
\end{lemma}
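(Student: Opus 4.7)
The plan is to adapt the classical Fan--Gijbels local linear bias argument to the one-sided weights $s_{\pm,in}(h)$. The argument rests on three ingredients: two exact annihilation identities for the weights, a second-order Taylor expansion of $m$, and one quadratic-moment computation via Lemma \ref{lemma:fan_gijbels}.

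First, I would record the ``reproducing'' identities $\frac{1}{n}\sum_i s_{+,in}(h) = 1$ and $\frac{1}{n}\sum_i s_{+,in}(h)(X_i - c) = 0$. Both follow by direct algebra from the definition of $s_{+,in}(h)$ in terms of the sample moments $\hat{\mu}_{+,j}$ and $\hat{\sigma}_{+,0}^2$: the first-sum numerator collapses to $\hat{\mu}_{+,0}\hat{\mu}_{+,2} - \hat{\mu}_{+,1}^2 = \hat{\sigma}_{+,0}^2$, and the second-sum numerator to $\hat{\mu}_{+,1}\hat{\mu}_{+,2} - \hat{\mu}_{+,1}\hat{\mu}_{+,2} = 0$. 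These identities hold for every $n$ and are precisely what makes the local linear bias $O(h^2)$ rather than $O(h)$.

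Second, I would Taylor expand $m(X_i) - m(c) = m'(c)(X_i - c) + \tfrac{1}{2} m''(c)(X_i - c)^2 + r_i$, with Lagrange remainder $r_i = \tfrac{1}{2}[m''(\xi_i) - m''(c)](X_i - c)^2$ for some $\xi_i$ between $c$ and $X_i$. The constant and linear terms drop out under $\frac{1}{n}\sum_i s_{+,in}(h)(\cdot)$ thanks to step one. For the quadratic piece, the analogous algebraic computation gives $\frac{1}{n}\sum_i s_{+,in}(h)(X_i - c)^2 = (\hat{\mu}_{+,2}^2 - \hat{\mu}_{+,1}\hat{\mu}_{+,3})/\hat{\sigma}_{+,0}^2$. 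Substituting the leading-order expansions $\hat{\mu}_{+,j} = h^j f_X(c) K_{+,1j}(1 + o_p(1))$ from Lemma \ref{lemma:fan_gijbels} and cancelling the common $h^2 f_X(c)^2$ factor, the ratio reduces to $h^2 (K_{+,12}^2 - K_{+,11}K_{+,13})/(K_{+,10}K_{+,12} - K_{+,11}^2)(1+o_p(1)) = h^2 B_+(1 + o_p(1))$.

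Finally, I would control the remainder $\frac{1}{n}\sum_i s_{+,in}(h) r_i$ via the compact support of $K$ (Assumption \ref{asspt:kernel}): the weights vanish for $|X_i - c| > Mh$, so $|\xi_i - c| \le Mh$ on the effective support, whence $\max_i |m''(\xi_i) - m''(c)| = o(1)$ by continuity of $m''$ near $c$ (a standard companion to the bounded-second-derivative hypothesis). Combined with the $O_p(h^2)$ bound on $\frac{1}{n}\sum_i |s_{+,in}(h)|(X_i-c)^2$ obtained from the same moment calculation, the remainder is $o_p(h^2)$ and is absorbed into the trailing $(1+o_p(1))$. The argument for $s_{-,in}$ is symmetric, with all kernel integrals restricted to $(-\infty,0]$. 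The only slightly delicate step is this uniform remainder control, but it is routine once the compact kernel support pins the relevant $\xi_i$ to a shrinking neighborhood of $c$.
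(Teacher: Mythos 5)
Your proof is correct and fills in the details that the paper leaves implicit by citing \citet{fan1992variable} directly; the underlying mathematics (the two exact reproducing identities, the second-order Taylor expansion, the moment ratio computed from Lemma~\ref{lemma:fan_gijbels}, and the compact-support remainder bound) is exactly the Fan--Gijbels argument the paper is invoking. The one point worth flagging, which you already note, is that controlling the Lagrange remainder term $\tfrac12\sum_i s_{\pm,in}(h)[m''(\xi_i)-m''(c)](X_i-c)^2 = o_p(h^2)$ genuinely requires $m''$ to be continuous on a neighborhood of $c$, not merely bounded as the lemma's hypothesis states; this is harmless here because the lemma is only ever applied to $g_\pm(x)=\lim E[d^k(m_{\pm,\oplus},Y)\mid X=x]$, which Assumption~\ref{asspt:densities} guarantees to be twice \emph{continuously} differentiable near $c$, but a clean statement of the lemma would strengthen its hypothesis accordingly.
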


%================
 \begin{lemma} \label{lemma:lyapunov}
 Define
$$
Z^{(k)}_{\pm, in} \coloneqq  s_{\pm, in}(h) d^k(m_{\pm,\oplus}, Y_i) - E\left[ s_{\pm, in}(h) d^k(m_{\pm,\oplus}, Y_i) \mid \mathbf{X} \right],
$$
for $k > 0$ where $\mathbf{X} \coloneqq \{X_i\}_{i=1}^n$ and $Y_i$ is in a bounded metric space $(\Omega, d)$. Assume \ref{asspt:kernel}, \ref{asspt:bandwidth}, \ref{asspt:sampling}, \ref{asspt:densities}, \ref{asspt:uniqueness}. Then,
$$
\lim_{n \to \infty} P\left( \frac{\sumn Z^{(k)}_{\pm, in}}{S^{(k)}_{\pm,n}}  \leq t \mid \mathbf{X} \right) = \Phi(0,1),
$$
for all $t \in \mathbb{R}$, with $\Phi(0,1)$ the normal CDF and 
\[ 
(S^{(k)}_{\pm, n})^2 \coloneqq \sumn \operatorname{var}(Z^{(k)}_{\pm, in} \mid \mathbf{X}) =  \left(  \frac{S_\pm }{f_X(c) nh } \var(d^k(m_{\pm, \oplus}, Y) \mid X) \right) (1 + o_p(1)),
\]
where $S_\pm$ were defined above.
\begin{proof}
I will use Lyapunov's central limit theorem \citep[Theorem 27.3]{billingsley2017probability} to prove the result, as, given $\mathbf{X}$, $s_{+, in}(h) d^k(m_{+,\oplus}, Y_i)$ are independent but not identically distributed. To that end, I need to verify Lyapunov's condition,
\begin{equation} \label{eq:lyapunov_condition}
\lim _{n \rightarrow \infty} \frac{1}{(S^{(k)}_{+,n})^{2+\delta}} \sum_{i=1}^n E\left[\left|Z^{(k)}_{+,in}\right|^{2+\delta} \mid \mathbf{X}\right]=0,
\end{equation}
for some $\delta >0$. First, note that the boundedness of $\Omega$ and the existence of $s_{+,in}(h)$ for finite $n$ imply that all conditional moments $
E[|Z^{(k)}_{+,in}|^\ell \mid \mathbf{X} ]$ exist for finite $n$ and $\ell>0$.
Then, by the results in Lemma \ref{lemma:fan_gijbels}, $\mu_{+,j}=O(h^j), \sigma_{+,0}^2 = O(h^2)$ and thus
\begin{align*}
| s_{+,i} |  & = | \mathrm{1}(X_i \geq c) \sigma_{+,0}^{-2} K_h(X_i-c) \left( \mu_2 - \mu_1 (X_i-c) \right) | \\ 
& = |  \mathrm{1}(X_i \geq c) \sigma_{+,0}^{-2} K\left(\frac{X_i-c}{h}\right) \left( \frac{\mu_2}{h} - \mu_1 \frac{(X_i-c)}{h} \right) | \\ 
& \leq |  \sigma_{+,0}^{-2} \bar{K}  \frac{\mu_2}{h} | +  | \sigma_{+,0}^{-2} \bar{K} \mu_1 M | \\
& =  O(h^{-2}) O(h) + O(h^{-2}) O(h) \\
& = O(h^{-1})
\end{align*}
where I denote $\bar{K}$ the kernel's upper bound and $M$ the upper bound of its support. The inequality follows from the triangle inequality, the boundedness of the kernel, and the fact that the kernel is only non-zero when $\frac{(X_i-c)}{h} \leq M$ since it has compact support. As a result, $|s_{+,in}|^k = |s_{+,i} + O_p((nh)^{-1/2})|^k = |O(h^{-1}) + O_p((nh)^{-1/2})|^k = O_p(h^{-k})$ using previous results from \ref{lemma:l_hat-l_tilde}. 
%by the arguments in the previous lemmas, $s_{+, in} - s_{+,i} = O_p((nh)^{-1/2})$ and $E[|s_{+,i}|] = O(1)$ so $s_{+, in} = O_p(1)$ % 
 Then, observe that the numerator in \eqref{eq:lyapunov_condition} equals,
 \begin{align*}
 & \sum_{i=1}^n\left|s_{+, i n}(h)\right|^{2+\delta} E\left[ \left|d^k\left(Y_i, m_{+, \oplus}\right)-E\left[d^k\left(Y_i, m_{+, \oplus}\right) \mid \mathbf{X}\right]\right|^{2+\delta} \mid \mathbf{X} \right] \\ 
 & \leq n O_p(h^{-(2+\delta)}) 2 \mathrm{diam}(\Omega)^{k(2+\delta)} = O_p(nh^{-(2+\delta)}).
 \end{align*}
 where $s_{+,in}(h)$ is a constant conditional on $\mathbf{X}$. Similarly, 
 \begin{align*}
\var(Z^{(k)}_{+,in} \mid \mathbf{X}) \leq s_{+,in}(h)^2 4 \mathrm{diam}(\Omega)^{2k} = O_p(h^{-2}).
 \end{align*}
 As a result, the denominator equals,
 \begin{align*}
 \left(\sumn \operatorname{var}(Z^{(k)}_{\pm, in} \mid \mathbf{X})\right)^{1+\delta/2} & = O_p(n^{1+\delta/2} h^{2+\delta}).
 \end{align*}
 Thus, 
 \begin{align*}
\eqref{eq:lyapunov_condition} = \frac{O_p(nh^{-(2+\delta))}}{O_p(n^{1+\delta/2} h^{2+\delta}}) = O_p(n^{-\delta})
 \end{align*}
 which goes to zero, proving the result. The result for $Z^{(k)}_{-,in}$ obtains analogously. 

 Finally, the expression for $(S^{(k)}_{\pm,n})^2$ is the standard expression for the variance term in \citet[Theorem 4]{fan1992variable}, but replacing $\var(Y \mid X=c)$ with $\var(d^k(m_{+,\oplus}, Y) \mid X=c)$. To see this, note that \[\sum_{i=1}^n \operatorname{var}\left(Z_{+, \text {in }}^{(k)} \mid \mathbf{X}\right) = \sumn s_{+,in}(h)^2 \operatorname{var}\left(Z_{+, \text {in }}^{(k)} \mid X\right)\] by the fact that $X$ is i.i.d, which is exactly the form of the variance expression $\frac{\Sigma_1^n w_j^2 \sigma^2\left(X_j\right)}{\left(\Sigma_1^n w_j\right)^2}$ in the notation of \citet{fan1992variable}. The final expression then follows from a Taylor expansion around $c$ and the properties of the kernel. Derivations for $Z^{(k)}_{-,in}$ are analogous. 
\end{proof}
 \end{lemma}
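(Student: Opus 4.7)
Plan: apply Lyapunov's CLT for a triangular array, conditional on $\mathbf{X} = \{X_i\}_{i=1}^n$. Since $s_{+,in}(h)$ depends only on $\mathbf{X}$, and the $Y_i$ are conditionally independent given their own $X_i$ by i.i.d.\ sampling, the variables $Z^{(k)}_{+,in}$ form a conditionally independent, mean-zero triangular array. It then suffices to verify the Lyapunov $(2+\delta)$-moment condition
\[
\frac{1}{(S^{(k)}_{+,n})^{2+\delta}}\sum_{i=1}^n E\bigl[|Z^{(k)}_{+,in}|^{2+\delta} \mid \mathbf{X}\bigr] \to 0
\]
in probability, for some $\delta>0$; the standard conditional Lyapunov theorem then yields the stated convergence.

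The key step is moment bookkeeping, and two ingredients control everything. First, Lemma \ref{lemma:fan_gijbels} gives $\mu_{+,j}=O(h^j)$ and $\sigma_{+,0}^2=O(h^2)$, and combined with the compact support of $K$ this yields $|s_{+,in}(h)|=O(h^{-1})$ on its $O(h)$-wide support. Second, boundedness of $(\Omega,d)$ makes $d^k(m_{+,\oplus},Y_i)\le\operatorname{diam}(\Omega)^k$ a.s., so all required conditional moments of $Z^{(k)}_{+,in}$ exist and are dominated by a constant multiple of $|s_{+,in}(h)|^{2+\delta}$. Summing and invoking a weak-LLN control on $\sum_i |s_{+,in}|^{\ell}$ in the spirit of Lemma \ref{lem:s-avg-op} gives numerator $O_p(n h^{-(1+\delta)})$, while the same argument applied to second moments gives $(S^{(k)}_{+,n})^2 = O_p(n/h)$. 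Hence the Lyapunov ratio is $O_p((nh)^{-\delta/2})$, which vanishes because $nh\to\infty$ by Assumption \ref{asspt:bandwidth}.

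For the explicit asymptotic expression for $(S^{(k)}_{+,n})^2$, I would substitute the leading-order approximations $\mu_{+,j}=h^j f_X(c) K_{+,1j}(1+o(1))$ into $\sum_i s_{+,in}^2\,\var(d^k(m_{+,\oplus},Y_i)\mid X_i)$, then Taylor-expand $\var(d^k(m_{+,\oplus},Y)\mid X=x)$ around $x=c$ and evaluate the resulting one-sided kernel integral. This mirrors Theorem~4 of \citet{fan1992variable}, with the scalar variance $\var(Y\mid X=c)$ replaced by $\var(d^k(m_{+,\oplus},Y)\mid X=c)$; the substitution is valid precisely because $d^k(m_{+,\oplus},Y_i)$ is a bounded scalar random variable, so the object-space structure of $\Omega$ plays no further role.

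The main obstacle is tracking the one-sided truncation carefully, so that the truncated kernel moments $K_{+,ij}$ (rather than their full-line counterparts) appear in the leading constant $S_+$, and ensuring that the LLN-type approximations for $\sum_i|s_{+,in}|^{\ell}$ hold with enough uniformity to make the Lyapunov ratio $o_p(1)$ along almost every realization of $\mathbf{X}$. Both boil down to well-understood one-sided local polynomial regression calculations once the boundedness of $\Omega$ has been used to reduce the problem to a conditional CLT for a scalar array.
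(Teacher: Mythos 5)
Your proposal follows essentially the same route as the paper: apply Lyapunov's CLT conditionally on $\mathbf{X}$, use the kernel-moment bounds from Lemma~\ref{lemma:fan_gijbels} to get $|s_{+,in}|=O_p(h^{-1})$ on an $O(h)$-width support, and use boundedness of $\Omega$ to control all moments, with the final variance constant coming from the Fan--Gijbels--style kernel-integral calculation. Your bookkeeping is in fact slightly tighter than the paper's: the paper bounds the numerator by summing $n$ copies of the pointwise max, yielding $O_p(nh^{-(2+\delta)})$, whereas you correctly exploit the fact that only $O(nh)$ weights are nonzero (an LLN control on $\sum_i |s_{+,in}|^{2+\delta}$ analogous to Lemma~\ref{lem:s-avg-op}) to obtain the sharper $O_p(nh^{-(1+\delta)})$, and you pair it with the exact order $(S^{(k)}_{+,n})^2 = O_p(n/h)$ rather than a one-sided bound. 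Both routes reach the same conclusion, but your version gives the cleaner Lyapunov ratio $O_p((nh)^{-\delta/2})$, valid for any fixed $\delta>0$, without needing to choose $\delta$ large, and avoids the pitfall of plugging an upper bound into a quantity that appears in the denominator of the ratio. The approach is correct.
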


 %====================

\begin{theorem} \label{thm:clt_variance} Under Assumptions \ref{asspt:sampling}--\ref{assptT:J},
\[
\sqrt{nh} \left( \hat{V}_{\pm, \oplus} - V_{\pm, \oplus}  \right)  \xrightarrow[]{d} N\left(0, \frac{S_\pm}{f_X(c)} \sigma^2_{+, V}\right).
\]
\begin{proof}
Decomposing the estimation error, we have
\begin{align*}
\sqrt{nh} \left( \hat{V}_{+, \oplus} - V_{+, \oplus} \right) & = \sqrt{nh} \left(\frac1n \sum_{i=1}^n s_{+,in}(h) d^2(\hat{l}_{+, \oplus}, Y_i) - \lim_{x \to c^+} E\left[ d^2(m_{+,\oplus}, Y) \mid X=x \right] \right) \\ 
& = \sqrt{nh} \left( A_{+,1} + A_{+,2}  + A_{+,3} \right),
\end{align*}
where remember that $V_{+, \oplus} = \lim_{x \to c^+} E\left[d^2(m_{\oplus}(x), Y) \mid X=x \right]$ and the three terms are,
\begin{align*}
A_{+,1} & \coloneqq \frac1n \sum_{i=1}^n s_{+,in}(h) \left( d^2(\hat{l}_{+, \oplus}, Y_i) - d^2(m_{+,\oplus}, Y_i)\right) \\ 
A_{+,2} & \coloneqq \frac1n \sum_{i=1}^n s_{+,in}(h) \left( d^2(m_{+,\oplus}, Y_i)    - E\left[ d^2(m_{+,\oplus}, Y_i) \mid X_i \right]  \right ) \\ 
A_{+,3} & \coloneqq \frac1n \sumn s_{+,in}(h) \left( E\left[ d^2(m_{+,\oplus}, Y) \mid X_i \right] - \lim_{x \to c^+} E\left[d^2(m_{+,\oplus}, Y) \mid X=x \right] \right).
\end{align*}
I analyze each term in turn.
\begin{itemize}
\item \textit{$A_{+,1}$ (Plug-in Error):} This term accounts for the error from using the estimated Fr\'echet mean $\hat{l}_{+, \oplus}$. Under the bandwidth conditions in Assumption \ref{asspt:bandwidth}, Lemma \ref{lemma:clt_first_term} shows this term is asymptotically negligible, with $\sqrt{nh} A_{+,1} = o_p(1)$.

\item \textit{$A_{+,2}$ (Stochastic Term):} This is the main stochastic component. Conditional on $\mathbf{X} = \{X_i\}_{i=1}^n$, the terms in the sum are independent. By the conditional central limit theorem in Lemma \ref{lemma:lyapunov},
$$
\sqrt{nh} A_{+,2} \xrightarrow{d} N\left(0, \frac{S_+}{f_X(c)} \sigma^2_{+,V}\right).
$$
\item \textit{$A_{+,3}$ (Bias Term):} This term represents the asymptotic bias. Let $g_+(x) \coloneqq E\left[ d^2(m_{+,\oplus}, Y) \mid X=x \right]$ for $x \ge c$. Assumption \ref{asspt:densities} ensures $g_+(x)$ is twice continuously differentiable in a right-sided neighborhood of $c$. The expression for $A_{+,3}$ is the standard form for the bias of a local linear estimator at a boundary. By Lemma \ref{lemma:clt_bias} and the undersmoothing condition $nh^5 \to 0$ from Assumption \ref{asspt:bandwidth}(c), the scaled bias vanishes: $\sqrt{nh} A_{+,3} \to 0$.
\end{itemize}
Combining the results for the three terms via Slutsky's lemma yields the result. 
\end{proof}
\end{theorem}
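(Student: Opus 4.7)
The plan is to decompose $\sqrt{nh}(\hat V_{+,\oplus}-V_{+,\oplus})$ into three pieces — a plug-in error from replacing $m_{+,\oplus}$ by $\hat l_{+,\oplus}$, a centered stochastic term, and a smoothing bias — and argue that only the stochastic term survives the $\sqrt{nh}$ scaling. Since the local linear weights are self-normalizing, $n^{-1}\sum_i s_{+,in}(h)=1$, I can center inside the sum and write $\hat V_{+,\oplus}-V_{+,\oplus}=A_{+,1}+A_{+,2}+A_{+,3}$, where
\begin{align*}
A_{+,1}&=\tfrac1n\sum_i s_{+,in}(h)\bigl[d^2(Y_i,\hat l_{+,\oplus})-d^2(Y_i,m_{+,\oplus})\bigr],\\
A_{+,2}&=\tfrac1n\sum_i s_{+,in}(h)\bigl\{d^2(Y_i,m_{+,\oplus})-g_+(X_i)\bigr\},\\
A_{+,3}&=\tfrac1n\sum_i s_{+,in}(h)\bigl\{g_+(X_i)-V_{+,\oplus}\bigr\},
\end{align*}
with $g_+(x)\coloneqq E[d^2(Y,m_{+,\oplus})\mid X=x]$, noting that $V_{+,\oplus}=\lim_{x\downarrow c}g_+(x)$.

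For the plug-in term I would invoke Lemma \ref{lemma:clt_first_term} directly: the reverse triangle inequality bounds each summand by $2\operatorname{diam}(\Omega)\,d(\hat l_{+,\oplus},m_{+,\oplus})$, the control on the absolute weights from Lemma \ref{lem:s-avg-op} absorbs the sum into an $O_p(1)$ factor, and the first-stage rates $O(h_m^{2/(\beta_1-1)})$ and $O_p((nh_m)^{-1/(2(\beta_2-1))})$ from Lemmas \ref{lemma:convergence_tilde}--\ref{lemma:aux_l_hat-l_tilde} — combined with the joint bandwidth restrictions in Assumption \ref{asspt:bandwidth}(b) and (d) — push the whole term below $(nh)^{-1/2}$. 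For the bias term, Assumption \ref{asspt:densities} makes $g_+$ twice continuously differentiable on the right of $c$, so Lemma \ref{lemma:clt_bias} gives $A_{+,3}=\tfrac{h^2}{2}g_+''(c)B_+(1+o_p(1))$, and the undersmoothing $\gamma>1/5$ in Assumption \ref{asspt:bandwidth}(c) yields $\sqrt{nh}A_{+,3}\to 0$.

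The stochastic term $A_{+,2}$ carries the asymptotic distribution. Conditioning on $\mathbf X=\{X_i\}_{i=1}^n$, the summands are independent and uniformly bounded because $\Omega$ is bounded and each $s_{+,in}(h)$ is deterministically of order $h^{-1}$ given $\mathbf X$. A Lyapunov CLT for triangular arrays — precisely Lemma \ref{lemma:lyapunov} with $k=2$ — then delivers $\sqrt{nh}A_{+,2}\xrightarrow{d}N\bigl(0,S_+\sigma_{+,V}^2/f_X(c)\bigr)$ conditionally on $\mathbf X$; since the limiting law is nonrandom, the same convergence holds unconditionally, and Slutsky's lemma combines the three pieces into the stated Gaussian limit. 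The left-sided statement follows by the symmetric argument with $s_{-,in}$.

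The main obstacle is the plug-in term $A_{+,1}$. In Euclidean local polynomial regression, one can expand the residual $Y-\hat m$ algebraically and exploit orthogonality of the weights, so the first-stage error enters only at a higher order; in a general metric space no analogous cancellation is available because the squared distance $d^2$ is the only functional one can evaluate. This forces $d(\hat l_{+,\oplus},m_{+,\oplus})$ to be beaten outright by $(nh)^{-1/2}$, and it is exactly this requirement that motivates keeping the mean and variance bandwidths $h_m$ and $h$ separate and imposing the delicate joint conditions on their rates in Assumption \ref{asspt:bandwidth}.
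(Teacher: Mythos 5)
Your proposal is correct and follows essentially the same route as the paper: the identical three-term decomposition (plug-in error, centered stochastic term, bias), handled by Lemmas \ref{lemma:clt_first_term}, \ref{lemma:lyapunov}, and \ref{lemma:clt_bias} respectively, then combined via Slutsky. Your additional touches — making the self-normalization $n^{-1}\sum_i s_{+,in}(h)=1$ explicit, and spelling out why the conditional CLT yields unconditional convergence because the limit law is nonrandom — are accurate clarifications of steps the paper leaves implicit, not departures from its argument.
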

%====================

\begin{proposition} \label{prop:sigma_v_convergence}
Assume \ref{asspt:sampling}--\ref{assptT:J}. Then
 \[
 \hat{\sigma}_{\pm,V} -  \sigma_{\pm,V} = O_p((nh)^{-1/2}).
 \]
\begin{proof}
The proof proceeds by establishing a multivariate central limit theorem for the estimators of the second and fourth moments, and then applying the Delta Method.

First, we handle the plug-in error from using $\hat{l}_{+, \oplus}$ instead of $m_{+, \oplus}$. By the triangle inequality,
\begin{align*}
\left| \frac{1}{n} \sum_{i=1}^n s_{+,in}(h) d^4\left(\hat{l}_{+,\oplus}, Y_i\right) - \frac{1}{n} \sum_{i=1}^n s_{+,in}(h) d^4\left(m_{+,\oplus}, Y_i\right) \right| \leq  \\
2 \operatorname{diam}^2(\Omega) \left| \frac{1}{n} \sum_{i=1}^n s_{+,in}(h) \left(d^2\left(\hat{l}_{+,\oplus}, Y_i\right) - d^2\left(m_{+,\oplus}, Y_i\right)\right) \right|
\end{align*}
where the right-hand side is $o_p((nh)^{-1/2})$ by the same logic as in Lemma \ref{lemma:clt_first_term}. Thus, the error from the first-stage mean estimation is asymptotically negligible, and we can analyze the vector of moment estimators as if they were constructed using the true mean $m_{+,\oplus}$.

The target is the vector of true moments at the limit:
$$
\mathbf{M} \coloneqq \binom{\lim_{x \to c^+} E\left[d^4\left(m_{+,\oplus}, Y\right) \mid X=x\right]}{\lim_{x \to c^+} E\left[d^2\left(m_{+,\oplus}, Y\right) \mid X=x\right]}.
$$
The error vector of the estimator can be decomposed into a stochastic term and a bias term:
\begin{align*}
& \sqrt{nh} \left[\binom{\frac{1}{n} \sum_{i=1}^n s_{+,in}(h) d^4\left(m_{+,\oplus}, Y_i\right)}{\frac{1}{n} \sum_{i=1}^n s_{+,in}(h) d^2\left(m_{+,\oplus}, Y_i\right)} - \mathbf{M}\right] \\
& = \sqrt{nh} \left[\binom{\frac{1}{n} \sum s_{+,in}(h) \left( d^4(m_{+,\oplus}, Y_i) - E[d^4(m_{+,\oplus}, Y_i)|X_i]\right)}{\frac{1}{n} \sum s_{+,in}(h) \left( d^2(m_{+,\oplus}, Y_i) - E[d^2(m_{+,\oplus}, Y_i)|X_i]\right)}\right] \quad \text{(Stochastic Term)} \\
& + \sqrt{nh} \left[\binom{\frac{1}{n} \sum s_{+,in}(h) E[d^4(m_{+,\oplus}, Y_i)|X_i] - \lim_{x \to c^+} E[d^4(m_{+,\oplus}, Y)|X=x]}{\frac{1}{n} \sum s_{+,in}(h) E[d^2(m_{+,\oplus}, Y_i)|X_i] - \lim_{x \to c^+} E[d^2(m_{+,\oplus}, Y)|X=x]}\right]. \quad \text{(Bias Term)}
\end{align*}
The bias term vanishes as $n \to \infty$ by an application of Lemma \ref{lemma:clt_bias} to each component, combined with the undersmoothing condition in Assumption \ref{asspt:bandwidth}.

The stochastic term converges in distribution to a multivariate normal. This is established via the Cram\'er-Wold device \citep[Theorem 29.4]{billingsley2017probability}. Let $Z^{\mathrm{lin}}_{+,in} \coloneqq t_4 Z^{(4)}_{+,in} + t_2 Z^{(2)}_{+,in}$ for a fixed vector $\mathbf{t} = (t_4, t_2)^\top \in \mathbb{R}^2$. I verify that the conditional Lyapunov condition holds in probability.

The numerator of the Lyapunov ratio is bounded as follows,
\begin{align*}
\sum_{i=1}^n E\left[|Z^{\mathrm{lin}}_{+,in}|^{2+\delta} \mid \mathbf{X}\right] &= \sum_{i=1}^n |s_{+,in}(h)|^{2+\delta} E\left[\left| \sum_{k \in \{2,4\}} t_k\left(d^k(\dots) - E[d^k|X_i]\right) \right|^{2+\delta} \mid X_i \right] \\
&\leq \sum_{i=1}^n |s_{+,in}(h)|^{2+\delta} \cdot C_\delta \sum_{k \in \{2,4\}} |t_k|^{2+\delta} E\left[\left| d^k(\dots) - E[d^k|X_i] \right|^{2+\delta} \mid X_i \right] \\
&\leq \sum_{i=1}^n |s_{+,in}(h)|^{2+\delta} \cdot C_\delta \sum_{k \in \{2,4\}} |t_k|^{2+\delta} \left(2\operatorname{diam}(\Omega)^k\right)^{2+\delta} \\
&= \sum_{i=1}^n O_p(h^{-(2+\delta)}) \cdot C_1(\mathbf{t}) = O_p(n h^{-(2+\delta)}),
\end{align*}
where the second line uses the $c_r$-inequality and the third line uses the boundedness of the metric space. $C_1(\mathbf{t})$ is a constant that depends on $\mathbf{t}$ and $\operatorname{diam}(\Omega)$.

The denominator is the sum of variances, which can be bounded by,
\begin{equation*}
\left(S_{+, n}^{\text {lin }}\right)^2=\sum_{i=1}^n \operatorname{var}\left(Z^{\text {lin }}_{+,in} \mid \mathbf{X}\right) \leq \sum_{i=1}^n s_{+, in }(h)^2 \sum_{k=2,4} 4 t_k \operatorname{diam}(\Omega)^{2 k}=O_p\left(n h^{-2}\right)
\end{equation*}
The full denominator of the Lyapunov condition is this sum raised to the power $1+\delta/2$, which is of order $O_p\left( (nh^{-2})^{1+\delta/2} \right) = O_p\left( n^{1+\delta/2} h^{-(2+\delta)} \right)$. The ratio of the orders of the numerator and denominator is $n^{-\delta/2}$, which converges to 0. Thus, the Lyapunov condition holds in probability.  

Since the Lyapunov condition holds for any arbitrary, fixed linear combination $(t_4, t_2)$, the sum of the linearly combined variables converges to a normal distribution. Further, \[
\left(S_{+, n}^{\operatorname{lin}}\right)^2=\left(\frac{S_{+}}{f_x(c) n h} \operatorname{var}\left(\sum_{k=2,4} t_k d^k\left(m_{+, \oplus}, Y\right) \mid X\right)\right)\left(1+o_p(1)\right)\] again using the derivations in Theorem 4 of \citet{fan1992variable} and the linearity of conditional expectations. Thus, we can use the Cram\'er-Wold device, which implies that the original vector of stochastic terms converges to a multivariate normal distribution,
$$
\sqrt{nh} \left[\binom{\frac{1}{n} \sum_{i=1}^n s_{+,in}(h) d^4\left(m_{+,\oplus}, Y_i\right)}{\frac{1}{n} \sum_{i=1}^n s_{+,in}(h) d^2\left(m_{+,\oplus}, Y_i\right)} - \mathbf{M}\right] \xrightarrow{d} N(0, \Sigma),
$$
for some asymptotic covariance matrix $\Sigma$.

Finally, observe that $\hat{\sigma}_{+,V}^2$ is a continuous function of the estimated moments:
$$
\hat{\sigma}_{+,V}^2=g\left(\frac{1}{n} \sum s_{+,in} d^4\left(\hat{l}_{+,\oplus}, Y_i\right), \frac{1}{n} \sum s_{+,in} d^2\left(\hat{l}_{+,\oplus}, Y_i\right)\right),
$$
where $g(x_1, x_2) = x_1 - x_2^2$ is continuously differentiable with gradient $\nabla g(x_1, x_2)=(1, -2x_2)$.
An application of the multivariate Delta Method to the CLT result yields,
$$
\sqrt{nh} \left( \hat{\sigma}_{+,V}^2 - \sigma_{+,V}^2 \right) \xrightarrow{d} N\left(0, \nabla g(\mathbf{M})^\top \Sigma \nabla g(\mathbf{M})\right).
$$
This implies the desired result,
\[
\hat{\sigma}_{+,V} - \sigma_{+,V} = O_p((nh)^{-1/2}).
\]
The proof for $\hat{\sigma}_{-,V}$ is analogous.
\end{proof}
\end{proposition}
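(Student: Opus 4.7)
The plan is to express $\hat{\sigma}_{\pm,V}^2$ as a smooth function of two joint moment estimators and then invoke a multivariate delta method on top of a conditional Lyapunov CLT, in parallel to the architecture already developed for $\hat{V}_{\pm,\oplus}$ in Theorem \ref{thm:clt_variance}. Concretely, I would write $\hat{\sigma}_{+,V}^2 = g(\hat{M}_4,\hat{M}_2)$ with $g(x_1,x_2)=x_1-x_2^2$ and $\hat{M}_k \coloneqq n^{-1}\sum_i s_{+,in}(h)\,d^k(Y_i,\hat{l}_{+,\oplus})$, with target $\mathbf{M}=\bigl(\lim_{x\to c^+}E[d^4(m_{+,\oplus},Y)\mid X=x],\,\lim_{x\to c^+}E[d^2(m_{+,\oplus},Y)\mid X=x]\bigr)$, so that $\sigma_{+,V}^2 = g(\mathbf{M})$.

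The first step would be to absorb the plug-in error, so that each $\hat{M}_k$ can be analyzed as if the true $m_{+,\oplus}$ were used. Factoring $d^4(\hat l,Y_i)-d^4(m,Y_i)=\bigl(d^2(\hat l,Y_i)+d^2(m,Y_i)\bigr)\bigl(d^2(\hat l,Y_i)-d^2(m,Y_i)\bigr)$ and using boundedness of $\Omega$, the $d^4$ plug-in gap is bounded by $2\operatorname{diam}(\Omega)^2$ times the $d^2$ gap already controlled in Lemma \ref{lemma:clt_first_term}; combined with Lemma \ref{lem:s-avg-op}, this delivers $o_p((nh)^{-1/2})$ error for each coordinate.

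Next, I would decompose the oracle sums $n^{-1}\sum_i s_{+,in}(h)\,d^k(m_{+,\oplus},Y_i)$, $k\in\{2,4\}$, into a centered stochastic piece $n^{-1}\sum_i Z^{(k)}_{+,in}$ and a smoothing bias piece $n^{-1}\sum_i s_{+,in}(h)\bigl\{E[d^k\mid X_i]-\lim_{x\to c^+}E[d^k\mid X=x]\bigr\}$. The bias piece is $O_p(h^2)=o_p((nh)^{-1/2})$ by Lemma \ref{lemma:clt_bias} together with the undersmoothing in Assumption \ref{asspt:bandwidth}(c). For the stochastic piece, joint asymptotic normality of $\sqrt{nh}\bigl((\hat M_4,\hat M_2)^{\top}-\mathbf{M}\bigr)$ would follow from the Cram\'er-Wold device: for any fixed $(t_4,t_2)\in\mathbb{R}^2$, the triangular array $t_4 Z^{(4)}_{+,in}+t_2 Z^{(2)}_{+,in}$ is conditionally independent given $\mathbf{X}$ and satisfies Lyapunov's condition by a direct replay of Lemma \ref{lemma:lyapunov}; boundedness of $(\Omega,d)$ controls all conditional moments while $|s_{+,in}(h)|=O_p(h^{-1})$ supplies the kernel rate, yielding a conditional variance of order $(nh)^{-1}$ and a $(2+\delta)$-moment ratio of order $n^{-\delta/2}$.

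The final step is the delta method. Since $g(x_1,x_2)=x_1-x_2^2$ has gradient $(1,-2x_2)$, continuous and nonzero at $\mathbf{M}$, the multivariate delta method delivers $\sqrt{nh}\bigl(\hat{\sigma}_{+,V}^2-\sigma_{+,V}^2\bigr)\xrightarrow{d} N\bigl(0,\nabla g(\mathbf{M})^{\top}\Sigma\nabla g(\mathbf{M})\bigr)$, and hence $\hat{\sigma}_{+,V}^2-\sigma_{+,V}^2=O_p((nh)^{-1/2})$; one last delta-method application with $\sqrt{\cdot}$ (valid since $\sigma_{+,V}^2>0$) upgrades this to $\hat{\sigma}_{+,V}-\sigma_{+,V}=O_p((nh)^{-1/2})$. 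The argument for $\hat{\sigma}_{-,V}$ is identical with mirrored weights. The main obstacle I anticipate is the bookkeeping for the fourth-moment plug-in error, since the quartic distance is not directly covered by Lemma \ref{lemma:clt_first_term}; the factorization above reduces it to the quadratic case, and everything else reuses machinery already assembled for $\hat V_{\pm,\oplus}$.
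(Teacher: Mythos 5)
Your proposal follows essentially the same route as the paper: the identical plug-in bound via the factorization $d^4-d^4=(d^2+d^2)(d^2-d^2)$, the same bias/stochastic decomposition of the oracle moment vector, the same Cram\'er--Wold and conditional Lyapunov argument, and the same delta method with $g(x_1,x_2)=x_1-x_2^2$. The one small addition is that you make the final square-root step explicit (noting it needs $\sigma_{+,V}^2>0$), whereas the paper states the conclusion directly.
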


\paragraph{Proof of Theorem \ref{thm:CLT}.} 

\begin{proof}
 From Theorem \ref{thm:clt_variance} and Proposition \ref{prop:sigma_v_convergence}, the result follows by applying Slutsky's lemma and the continuous mapping theorem. Results for $\hat{V}_{-, \oplus}$ obtain analogously. 
\end{proof}

\paragraph{Proof of Proposition \ref{prop:F_n}.}
\begin{proof}
Under the null hypothesis the conditional Fr\'echet means ``to the left'' and ``to the right'' are equal to the pooled and standard conditional Fr\'echet mean,
$$
m_{-,\oplus}=m_{+, \oplus}=m_{p,\oplus}=m_{\oplus}(c).
$$
As a result, under Assumptions \ref{asspt:uniqueness}, \ref{assptT:J},
$$
\begin{aligned}
& \sqrt{n h} F_n  \\
= & \sqrt{nh} \left( \hat{V}_{p, \oplus} - \left( \frac12 \hat{V}_{+,\oplus} + \frac12 \hat{V}_{-,\oplus} \right) \right) \\
 = &  \sqrt{nh} \left(\frac1n\sum s_{p, in}(h) d^2(Y_i, \hat{l}_{p, \oplus}) -  \frac{1}{2n} \sum_{j=+,-} \sumn s_{j,in}(h) d^2(Y_i, \hat{l}_{j,\oplus}) \right)  \\ 
= &  \sqrt{nh} \frac1n \sum_{i=1}^n s_{p, in}(h) \left\{ d^2(\hat{l}_{\oplus}, Y_i)-  d^2\left(m_{p,\oplus}, Y_i\right)\right\} \\ 
& - \sqrt{nh} \frac{1}{2n} \sumn s_{+,in}(h) \left\{d^2\left(\hat{l}_{+, \oplus}, Y_i\right)-d^2\left(m_{+,\oplus}, Y_i\right)\right\}  \\
& -  \sqrt{nh}  \frac{1}{2n} \sumn s_{+,in}(h) \left\{d^2\left(\hat{l}_{-, \oplus}, Y_i\right)-d^2\left(m_{-,\oplus}, Y_i\right)\right\}    \\
& =   o_p(1) 
\end{aligned}
$$
where the third equality follows from the null hypothesis and the last equality follows by applying Lemma \ref{lemma:clt_first_term} to the three terms, noting that the proof of that Lemma holds equally for $\hat{l}_{p,\oplus}$ by substituting the bounds derived in Proposition \ref{prop:pooled_consistency}. 
\end{proof}

 \paragraph{Proof of Proposition \ref{prop:U_n}.}
\begin{proof}
Under the null hypothesis of equal Fr\'echet variances, we have $V_{+,\oplus} = V_{-,\oplus} = V_\oplus$. Define the deviations
\[
\tilde{V}_+ = \hat{V}_{+, \oplus} - V_\oplus, \quad \tilde{V}_- = \hat{V}_{-, \oplus} - V_\oplus.
\]
Then, note that $S_+ = S_-$ by our Assumption that the kernel is symmetric in \ref{asspt:kernel}. 
Then, Theorem \ref{thm:CLT}, Proposition \ref{prop:sigma_v_convergence}, the assumption that $\hat{f}_X(c)$ is $\sqrt{nh}$-consistent and Slutsky's lemma imply, under the null hypothesis,
\[
\frac{\sqrt{nh} \left( \tilde{V}_+ - \tilde{V}_- \right) }{ \sqrt{(\hat{S}_+ \hat{\sigma}_{+, V}^2 + \hat{S}_- \hat{\sigma}_{-, V}^2) / \hat{f}_X(c)}  }\xrightarrow{d} N\left( 0, 1 \right).
\]
Further, note that under the null hypothesis,
\[
nh U_n = nh \frac{ \left( \hat{V}_{+, \oplus} - \hat{V}_{-, \oplus} \right)^2 }{ (S_+ \hat{\sigma}_{+, V}^2 + S_- \hat{\sigma}_{-, V}^2) / \hat{f}_X(c) } = \frac{ \left( \sqrt{nh} \left( \tilde{V}_+ - \tilde{V}_- \right) \right)^2 }{( S_+ \hat{\sigma}_{+, V}^2  + S_- \hat{\sigma}_{-, V}^2) / \hat{f}_X(c) }.
\]
Applying the continuous mapping theorem to the previous results, it follows that,
\[
nh U_n \xrightarrow{d} \chi_1^2.
\]
\end{proof}

The following lemma is explicitly proved for convenience but is standard and follows from previous results.
\begin{lemma} \label{lemma:s_in_to_E}
Under Assumptions \ref{asspt:kernel} and \ref{asspt:densities}, for any $\omega \in \Omega$, 
\[
\frac1n \sumn s_{\pm,in} d^2(Y_i, \omega)  - \lim_{x \to c^\pm} E[ d^2(Y_i, \omega) | X=x] = o_p(1)
\]
\begin{proof}
The absolute value of the left-hand side is less than or equal to,
\begin{align*}
 &\left| \frac1n \sumn (s_{+,in}(h)- s_{+,i}(h)) d^2(Y_i, \omega) \right| + \\
& \left| \frac1n \sumn s_{+,i}(h) d^2(Y_i, \omega) -  E[s_{+,i}(h) d^2(Y_i, \omega)] \right| + \\
& \left| \frac1n \sumn E[s_{+,i}(h) d^2(Y_i, \omega)] - \lim_{x \to c^\pm} E[ d^2(Y_i, \omega) | X=x] \right| = o_p(1)
\end{align*}
where the first term is $o_p(1)$ as $s_{+,in}(h) - s_{+,i}(h)=O_p((nh)^{-1/2})$ as shown in the proof of Lemma \ref{lemma:aux_l_hat-l_tilde}; the second term is $O_p((nh)^{-1/2})$ by the weak law of large numbers since $\var(s_{+,i} d^2(Y_i,\omega)) \to 0$ by the same argument as in the proof Lemma \ref{lemma:l_bar-l_hat_op} and the boundedness of $\Omega$; and the last term is $O(h^2)$ by the results in the proof of Lemma \ref{lemma:convergence_tilde}.
\end{proof}
\end{lemma}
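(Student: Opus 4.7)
The plan is to add and subtract two intermediate quantities and control each resulting piece using results already established earlier in the appendix. Specifically, I would bound the target by
\[
\Bigl|\tfrac1n\sum_i s_{\pm,in}(h)d^2(Y_i,\omega) - \lim_{x\to c^{\pm}} E[d^2(Y,\omega)\mid X=x]\Bigr| \le T_1 + T_2 + T_3,
\]
where $T_1 = \bigl|\tfrac1n\sum_i (s_{\pm,in}(h)-s_{\pm,i}(h))\,d^2(Y_i,\omega)\bigr|$ isolates the error from using sample rather than population kernel moments, $T_2 = \bigl|\tfrac1n\sum_i s_{\pm,i}(h) d^2(Y_i,\omega) - E[s_{\pm,i}(h)d^2(Y,\omega)]\bigr|$ is a pure law-of-large-numbers term with deterministic weights, and $T_3 = \bigl|E[s_{\pm,i}(h)d^2(Y,\omega)] - \lim_{x\to c^{\pm}}E[d^2(Y,\omega)\mid X=x]\bigr|$ is the standard one-sided local-linear bias at the boundary.

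First I would handle $T_1$. From the derivation inside the proof of Lemma \ref{lemma:aux_l_hat-l_tilde}, $s_{\pm,in}(h) - s_{\pm,i}(h) = W_{\pm,0n}\,\mathbf{1}(X_i \gtrless c)K_h(X_i-c) + W_{\pm,1n}\,\mathbf{1}(X_i\gtrless c)K_h(X_i-c)(X_i-c)$ with $W_{\pm,0n}=O_p((nh)^{-1/2})$ and $W_{\pm,1n}=O_p((nh^3)^{-1/2})$. Using the boundedness of $d^2(Y_i,\omega) \le \operatorname{diam}(\Omega)^2$ and the moment bounds \eqref{eq:E_k} to show that $\tfrac1n\sum_i \mathbf{1}(X_i\gtrless c)K_h(X_i-c)(X_i-c)^j = O_p(h^j)$ for $j=0,1$, I obtain $T_1 = O_p((nh)^{-1/2}) = o_p(1)$.

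Second, I would control $T_2$ by the weak law of large numbers applied to the i.i.d. terms $s_{\pm,i}(h)d^2(Y_i,\omega)$, which are deterministic functions of $(X_i,Y_i)$ once $h$ is fixed at sample size $n$. The mean is finite and the second moment satisfies $E[s_{\pm,i}(h)^2 d^4(Y_i,\omega)] \le \operatorname{diam}(\Omega)^4 E[s_{\pm,i}(h)^2] = O(h^{-1})$ by Lemma \ref{lemma:fan_gijbels} and \eqref{eq:E_k}, so Chebyshev's inequality yields $T_2 = O_p((nh)^{-1/2}) = o_p(1)$ under $nh\to\infty$.

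Third, I would handle $T_3$ by recognizing that $E[s_{\pm,i}(h)d^2(Y,\omega)]$ is exactly the population local-linear projection at $c$ of the regression function $x\mapsto E[d^2(Y,\omega)\mid X=x]$, which is twice continuously differentiable in a one-sided neighborhood of $c$ by Assumption \ref{asspt:densities}. Applying the standard boundary bias expansion (available, for instance, from the calculation already carried out inside the proof of Lemma \ref{lemma:convergence_tilde}, or directly from Lemma \ref{lemma:clt_bias} in expectation), $T_3 = O(h^2) = o(1)$. Combining the three pieces via the triangle inequality gives the claim; the only mildly delicate point is bookkeeping the variance of $s_{\pm,i}(h)d^2(Y_i,\omega)$ to justify the LLN, but boundedness of $\Omega$ makes this routine.
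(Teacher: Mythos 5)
Your proposal is correct and follows essentially the same route as the paper: the identical three-term triangle-inequality decomposition into a weight-replacement error, a law-of-large-numbers term, and a local-linear boundary-bias term, each bounded exactly as the paper does. Your treatment of $T_1$ is in fact slightly more explicit than the paper's (spelling out the $W_{\pm,0n}, W_{\pm,1n}$ decomposition and the averaging against $K_h$), but it is the same argument lifted from Lemma \ref{lemma:aux_l_hat-l_tilde}, not a genuinely different method.
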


\paragraph{Proof of Proposition \ref{prop:F_n_F_consistency}}
\begin{proof}
The consistency of $\hat{V}_{\pm, \oplus}$ is implied by Theorem \ref{thm:CLT}. The consistency of $\hat{V}_{p,\oplus}$ follows by noting that, 
\begin{align*}
\left|\hat{V}_{p,\oplus}-\frac{1}{n} \sum_{i=1}^n s_{p,in}(h)d^2\left(m_{p,\oplus}, Y_i\right)\right| \leq 2 \operatorname{diam}(\Omega) d\left(\hat{l}_{p,\oplus}, m_{p,\oplus}\right) \frac1n \sumn |s_{p,in}(h)|=o_p(1),
\end{align*}
which follows from the consistency of the pooled means in Proposition \ref{prop:pooled_consistency} and the fact that $\frac1n \sumn |s_{\pm,in}|=O(1)$ as shown in the proof of Lemma \ref{lemma:l_bar-l_hat_op}. Together with Lemma \ref{lemma:s_in_to_E}, applied twice, that implies,
\begin{align*}
 \left|\hat{V}_{p,\oplus} - V_{p,\oplus} \right| 
 & \leq \left| \frac1n \sumn s_{p,in}(h)d^2(Y_i, \hat{l}_{p,\oplus}) - \frac1n \sumn s_{p,in}(h)d^2(Y_i, m_{p,\oplus}) \right| \\ 
 &  + \left| \frac1n \sumn s_{p,in}(h)d^2(Y_i, m_{p,\oplus})  - V_{p,\oplus} \right| = o_p(1).
\end{align*}
Then,
\begin{align*}
 V_{p,\oplus} - \frac12 \left( V_{+,\oplus} +  V_{-,\oplus}\right)  &  = \frac12 \lim_{x \to c^+} E\left[ d^2(Y, m_{p,\oplus}) - d^2(Y, m_{+,\oplus}) \mid X=x \right] \\
 & \quad + \frac12 \lim_{x \to c^-} E\left[ d^2(Y, m_{p,\oplus}) - d^2(Y, m_{-,\oplus}) \mid X=x \right].
\end{align*}
By the definition of the one-sided Fr\'echet means $m_{+,\oplus}$ and $m_{-,\oplus}$ as the unique minimizers of their respective expected squared distance functions (see Eq. \eqref{eq:frechet_cond_mean_pm}), both terms on the right-hand side are non-negative. Therefore, $F_{pop} \ge 0$. Equality to zero occurs if and only if $m_{p,\oplus}$ is also the minimizer for each one-sided expectation, which by the uniqueness assumption (\ref{asspt:uniqueness}) requires $m_{p,\oplus}=m_{+,\oplus}$ and $m_{p,\oplus}=m_{-,\oplus}$. As a result, $F_{pop}=0$ if and only if the population means are all equal. The properties of $U_{pop}$ follow directly from its definition.
\end{proof}

\paragraph{Proof of Theorem \ref{thm:consistency}}
\begin{proof}
The proof is similar to the proof of Theorem 3 in \citet{dubey2019frechet}. Suppressing the subscripts $\oplus$ to ease notation, we first need the following lemma, which is proved further below.
\begin{lemma} \label{lemma:consistency_sup_p}
 Under the assumptions of Theorem \ref{thm:consistency}, it holds for all $\epsilon>0$:
\begin{enumerate}[(a)]
\item $\lim _{n \rightarrow \infty}\left\{\sup _{P \in \mathcal{P}} P(|\hat{V}_{+}-V_{+}|>\epsilon)\right\}=0$;
\item $\lim _{n \rightarrow \infty}\left\{\sup _{P \in \mathcal{P}} P\left(\left|\hat{\sigma}_{+,V}^2-\sigma_{+,V}^2\right|>\epsilon\right)\right\}=0$;
\item $\lim _{n \rightarrow \infty}\left\{\sup _{P \in \mathcal{P}} P\left(\left|\hat{V}_{p}-V_{p}\right|>\epsilon\right)\right\}=0$,
\end{enumerate}
where (a) and (b) hold analogously for the `-' estimators. The supremum is taken with respect to the underlying true probability measure $P$ of $\{Y_i, X_i \}_{i=1}^n$, over the class $\mathcal{P}$ of possible probability measures which generate random observations from $(\Omega, \mathbb{R})$.
\end{lemma}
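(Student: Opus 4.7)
The plan is to upgrade the pointwise (in $P$) consistency arguments that already underlie Theorem~\ref{thm:CLT}, Proposition~\ref{prop:sigma_v_convergence}, and the consistency of $\hat{V}_{p,\oplus}$ in the proof of Proposition~\ref{prop:F_n_F_consistency}, so that each bound holds uniformly over $P \in \mathcal{P}$. The single new ingredient that makes this possible is the global entropy condition~\ref{asspT:entropy}, which upgrades the local empirical-process bounds used in Lemmas~\ref{lemma:convergence_tilde} and~\ref{lemma:aux_l_hat-l_tilde} to a uniform Glivenko--Cantelli statement for the class $\{d^{2}(\cdot,\omega):\omega\in\Omega\}$. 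All other constants appearing in the earlier proofs---$\operatorname{diam}(\Omega)$, kernel moments, the lower bound on $f_X$ near $c$ from~\ref{asspt:densities}, and the curvature/uniqueness constants from~\ref{asspt:uniqueness}--\ref{asspt:curvature}---depend only on the class $\mathcal{P}$ and not on individual $P$, so the same decompositions transport.

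For part (a), I decompose
\[
\hat{V}_{+} - V_{+} \;=\; \underbrace{\bigl[\hat{V}_{+}-\tilde{V}_{+}\bigr]}_{\text{plug-in}} \;+\; \underbrace{\bigl[\tilde{V}_{+}-E\tilde{V}_{+}\bigr]}_{\text{stochastic}} \;+\; \underbrace{\bigl[E\tilde{V}_{+}-V_{+}\bigr]}_{\text{bias}},
\]
where $\tilde{V}_{+} \coloneqq n^{-1}\sum_{i=1}^{n} s_{+,in}(h)\,d^{2}(Y_i,m_{+,\oplus})$. The plug-in term is controlled by $2\operatorname{diam}(\Omega)\,d(\hat{l}_{+,\oplus},m_{+,\oplus})\cdot n^{-1}\sum_i |s_{+,in}(h)|$, and the second factor is $O_p(1)$ uniformly by Lemma~\ref{lem:s-avg-op} (whose proof depends only on $\mathcal{P}$-uniform kernel and density constants). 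The uniform convergence $\sup_{P\in\mathcal{P}} P(d(\hat{l}_{+,\oplus},m_{+,\oplus})>\epsilon)\to 0$ is then obtained by re-running the argmax-style argument of Lemma~\ref{lemma:aux_l_hat-l_tilde} with~\ref{asspT:entropy} in place of~\ref{assptT:J}: the entropy integral of the full space $\Omega$ bounds the empirical process uniformly over $\omega$ and over $P$, and~\ref{asspt:curvature}-(i),(ii) separate the argmin from competing $\omega$. The stochastic term is controlled uniformly by the variance computation already used in Lemma~\ref{lem:s-avg-op} together with the boundedness of $d^{2}$. The bias term is $O(h^{2})$ uniformly, with the constant depending only on the bound on the second derivative in~\ref{asspt:densities} and on $\operatorname{diam}(\Omega)$, via Lemma~\ref{lemma:clt_bias}.

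Part (b) follows the same template applied to the two sample averages that appear inside the variance shortcut formula for $\hat{\sigma}_{+,V}^{2}$: the $d^{4}$-weighted and $d^{2}$-weighted local averages. Each admits the three-term decomposition used in part (a), and each component is uniformly $o_p(1)$ by exactly the same reasoning, noting that the plug-in error now picks up an additional factor of $2\operatorname{diam}(\Omega)^{2}$ (for the $d^{4}$ term) which is uniform in $P$. The continuous mapping $(x_1,x_2)\mapsto x_1 - x_2^{2}$, applied on the bounded set where both averages live, then transports the uniform convergence to $\hat{\sigma}_{+,V}^{2}$. Part (c) is identical with $s_{p,in}$ replacing $s_{+,in}$ and $m_{p,\oplus}$ replacing $m_{+,\oplus}$; the uniform version of Proposition~\ref{prop:pooled_consistency} follows by the same argmax argument, using Assumptions~\ref{asspt:uniqueness}-p,~\ref{asspt:curvature}-p and~\ref{asspT:entropy}.

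The main obstacle is the uniform consistency of the first-stage Fr\'echet mean estimators $\hat{l}_{\pm,\oplus}$ and $\hat{l}_{p,\oplus}$. The pointwise arguments in Lemmas~\ref{lemma:convergence_tilde}--\ref{lemma:aux_l_hat-l_tilde} only need the local entropy bound~\ref{assptT:J} because they localize around a fixed population minimizer, but under uniform-in-$P$ asymptotics the minimizer itself moves with $P$, so the empirical process must be controlled globally on $\Omega$. This is precisely the content of~\ref{asspT:entropy} and is the one place where the stronger global entropy condition is indispensable; once it is available, all remaining steps consist of tracking constants through the earlier proofs and verifying that they do not depend on the particular $P \in \mathcal{P}$.
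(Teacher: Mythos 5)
Your proposal recognizes the right key ingredient---the global entropy condition~\ref{asspT:entropy}---but it takes a genuinely different route from the paper, and the route you chose has a subtle additional burden that the paper's argument deliberately avoids.

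You propose to control the plug-in error via the bound
$\bigl|\hat{V}_{+}-\tilde{V}_{+}\bigr|\le 2\operatorname{diam}(\Omega)\,d(\hat{l}_{+,\oplus},m_{+,\oplus})\cdot\tfrac1n\sum_i|s_{+,in}(h)|$, and then to upgrade the first-stage consistency $d(\hat{l}_{+,\oplus},m_{+,\oplus})\to 0$ to a uniform-in-$P$ statement by ``re-running the argmax-style argument of Lemma~\ref{lemma:aux_l_hat-l_tilde}.'' This is where the gap sits. The rate lemmas~\ref{lemma:convergence_tilde} and~\ref{lemma:aux_l_hat-l_tilde} lean on the curvature constants $C_1,C_2$ and exponents $\beta_1,\beta_2$ of Assumption~\ref{asspt:curvature}, as well as the separation constants in Assumption~\ref{asspt:uniqueness}. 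These assumptions are stated for a fixed $P$; nothing in the paper says they hold with constants uniform over $\mathcal{P}$, and your blanket claim that ``all other constants depend only on $\mathcal{P}$ and not on individual $P$'' is exactly the point that would need to be proved, not assumed. Without uniform curvature, the argmax argument gives you per-$P$ rates with a $P$-dependent implicit constant, which is not enough to conclude $\sup_{P}P(d(\hat{l}_{+,\oplus},m_{+,\oplus})>\epsilon)\to 0$.

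The paper avoids this problem entirely. Instead of passing through first-stage consistency, it exploits the fact that $\hat{l}_{+,\oplus}$ is the exact minimizer of $\hat{L}_{+,n}$ and $m_{+,\oplus}$ is the exact minimizer of $M_{+,\oplus}$, so
\[
\tfrac1n\textstyle\sum_i s_{+,in}(h)\bigl(d^2(\hat{l}_{+,\oplus},Y_i)-d^2(m_{+,\oplus},Y_i)\bigr)
=\inf_{\omega}\bigl[\hat{L}_{+,n}(\omega)-\hat{L}_{+,n}(m_{+,\oplus})\bigr],
\]
and because $\inf_{\omega}\bigl[M_{+,\oplus}(\omega)-M_{+,\oplus}(m_{+,\oplus})\bigr]=0$, the elementary inequality $|\inf f-\inf g|\le\sup|f-g|$ bounds the plug-in error directly by $\sup_{\omega\in\Omega}|M_n(\omega)|$, a supremum of a centered empirical process indexed by the \emph{whole} class $\{g_\omega-g_{m_{+,\oplus}}:\omega\in\Omega\}$. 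That supremum is exactly what the global entropy condition controls, with constants ($\operatorname{diam}(\Omega)$, kernel moments, envelope bounds) that are manifestly $P$-free. The rate of the first-stage estimator never enters. The stochastic plus bias piece $D_n$ is handled the same way, by taking $\sup_\omega|H_n(\omega)|$ precisely because $m_{+,\oplus}$ can move with $P$.

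So: your decomposition is correct in spirit and your diagnosis of \emph{why}~\ref{asspT:entropy} is needed is close, but the mechanism is different. You would need to add a separate (and nontrivial) uniform-curvature assumption to make your plug-in bound go through, whereas the paper's minimizer trick sidesteps first-stage rates and hence the curvature constants altogether. Parts (b) and (c) of your proposal then inherit the same issue; in the paper they too are handled by bounding the relevant $d^4$- and pooled empirical processes by suprema over $\omega\in\Omega$ using the global entropy condition, never by appealing to uniform first-stage consistency.
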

Then, define as before,
$$
\tilde{V}_{+}=\hat{V}_{+}-V_{+}
$$
and similarly for $\hat{V}_-$. The statistic $U_n$ can then be written as 
$$
U_n=  \frac{\left(\hat{V}_{+}-\hat{V}_{-}\right)^2}{ \left( \hat{\sigma}^2_{+,U} + \hat{\sigma}^2_{-,U} \right)} = \tilde{U}_n+\Delta_n,
$$
where $\hat{\sigma}^2_{\pm,U} \coloneqq S_{\pm} \hat{\sigma}_{\pm, V}^2 / \hat{f}_X(c)$ and
$$
\tilde{U}_n=\frac{\left(\tilde{V}_+-\tilde{V}_-\right)^2}{\hat{\sigma}^2_{+,U} + \hat{\sigma}^2_{-,U}}
$$
and
$$
\Delta_n=\frac{  \left(V_+-V_-\right)^2+2 \left(V_+-V_-\right)(\tilde{V}_+-\tilde{V}_-) }{\hat{\sigma}^2_{+,U} + \hat{\sigma}^2_{-,U}} 
$$
Repeating the proof of Proposition \ref{prop:U_n}, we obtain
$$
nh \tilde{U}_n \rightarrow \chi_{1}^2 \quad \text { in distribution. }
$$
Further, by Lemma \ref{lemma:consistency_sup_p} and continuity, 
$$
\Delta_n +\frac{F_n^2}{\hat{\sigma}^2_{+,U} + \hat{\sigma}^2_{-,U}}
$$
is a uniformly consistent estimator of
$$
U+\frac{F^2}{\sigma^2_{+,U} + \sigma^2_{-,U}},
$$
with $\sigma^2_{\pm, U} \coloneqq S_{ \pm} \sigma_{ \pm, V}^2 / \hat{f}_X(c)$. 

That implies, for sets $\left\{A_n\right\}$, defined as, 
$$
A_n=\left\{\Delta_n +\frac{F_n^2}{\hat{\sigma}^2_{+,U} + \hat{\sigma}^2_{-,U}} < \frac12 \left( U+\frac{F^2}{\sigma^2_{+,U} + \sigma^2_{-,U}}\right) \right\},
$$
that as $n\to \infty$, by Lemma \ref{lemma:consistency_sup_p}
$$
\begin{aligned}
& \sup _{P \in \mathcal{P}} P\left(A_n\right) \\
& \leq \sup _{P \in \mathcal{P}} P\left\{\left|  \Delta_n +\frac{F_n^2}{\hat{\sigma}^2_{+,U} + \hat{\sigma}^2_{-,U}} - U - \frac{F^2}{\sigma^2_{+,U} + \sigma^2_{-,U}}  \right|
> 
\frac{1}{2}\left(U+\frac{F^2}{\sigma^2_{+,U} + \sigma^2_{-,U}}\right)\right\} \\
& \quad \rightarrow 0.
\end{aligned}
$$
Denote by $c_\alpha$ the $(1-\alpha)$-th quantile of the $\chi_{1}^2$ distribution, then the limiting power function can be written as,
$$
\begin{aligned}
P\left(R_{n, \alpha}\right) & =P\left(\tilde{U}_n + \Delta_n +\frac{F_n^2}{\hat{\sigma}^2_{+,U} + \hat{\sigma}^2_{-,U}}>\frac{c_\alpha}{nh}\right) \\
& \geq P\left\{\tilde{U}_n +\frac12 \left( U+\frac{F^2}{\sigma^2_{+,U} + \sigma^2_{-,U}}\right)>\frac{c_\alpha}{nh}, A_n^C\right\} \\
& \geq P\left\{nh \tilde{U}_n > c_\alpha-\frac{nh}{2}\left(U+\frac{F^2}{\sigma^2_{+,U} + \sigma^2_{-,U}}\right)\right\}-P\left(A_n\right) \\
& \geq P\left\{nh \tilde{U}_n > c_\alpha-\frac{nh}{2}\left(U+\frac{F^2}{\sigma^2_{+,U} + \sigma^2_{-,U}}\right)\right\}-\sup_{P \in \mathcal{P}} P\left(A_n\right).
\end{aligned}
$$
This implies that, for the sequence of hypotheses $\left\{H^{(n)}_A\right\}$ defined in Section \ref{sec:consistency},
$$
\begin{aligned}
\lim _{n \rightarrow \infty} \beta_{H_A^{(n)}} & =\lim _{n \rightarrow \infty}\left\{\inf _{H_A^{(n)}} P\left(R_{n, \alpha}\right)\right\} \\
& \geq \lim _{n \rightarrow \infty} P\left\{nh \tilde{U}_n>c_\alpha-\frac{nh}{2}\left(b_n+\frac{a_n^2}{\sigma^2_{+,U} + \sigma^2_{-,U}}\right)\right\}-\lim _{n \rightarrow \infty}\left\{\sup _{P \in \mathcal{P}} P\left(A_n\right)\right\} \\
& =\lim _{n \rightarrow \infty} P\left\{ nh \tilde{U}_n>c_\alpha-\frac{nh}{2}\left(b_n+\frac{a_n^2}{\sigma^2_{+,U} + \sigma^2_{-,U}}\right)\right\} .
\end{aligned}
$$
Then, since $nh \tilde{U}_n \to \chi_{1}^2$ in distribution, one obtains that if $a_n$ is such that $(nh)^{1 / 2} a_n \rightarrow \infty$ or if $b_n$ is such that $nh b_n \rightarrow \infty$, then $\lim _{n \rightarrow \infty} \beta_{H_n}=1$.
\end{proof}

\paragraph{Proof of Lemma \ref{lemma:consistency_sup_p}.} 
\begin{proof}
I continue to suppress the $\oplus$ subscripts throughout the proof to ease notation. 
\begin{list}{\textit{Proof of Lemma \ref{lemma:consistency_sup_p}(\alph{enumi}).}}{\usecounter{enumi} \setlength{\leftmargin}{0pt} \setlength{\itemindent}{\labelwidth} \addtolength{\itemindent}{\labelsep}}
\item  We have,
$$
\begin{aligned}
& P(|\hat{V}_+ - V_+| > \epsilon) \\
&= P\bigg[\bigg| \frac{1}{n} \sum_{i=1}^n s_{+,in}(h) \big( d^2(\hat{l}_+, Y_i) - d^2(m_+, Y_i) \big) \\
&\quad + \frac{1}{n} \sum_{i=1}^n s_{+,in}(h) d^2(m_+, Y_i) - V_+ \bigg| > \epsilon \bigg] \\
&\leq C_n + D_n,
\end{aligned}
$$
where
\begin{align*}
C_n & =P\left\{\left|\frac{1}{n} \sum_{i=1}^n s_{+,in}(h) \big( d^2(\hat{l}_+, Y_i) - d^2(m_+, Y_i) \big)\right|>\epsilon / 2\right\}, \\
D_n & =P\left[\left|\frac1n \sum_{i=1}^n s_{+,in}(h) d^2(m_+, Y_i) - V_+\right|>\epsilon / 2\right] .
\end{align*}
Since $\inf_{\omega \in \Omega} \left( \lim_{x \to c^+} E\left[d^2(\omega, Y) - d^2(m_{+}, Y) \mid X=x\right] \right) =0$, we have
$$
\begin{aligned}
& \left|\frac{1}{n} \sum_{i=1}^n s_{+,in}(h) \left( d^2\left(\hat{l}_+, Y_i\right)- d^2\left(m_+, Y_i\right) \right) \right| \\
& =\left|\inf _{\omega \in \Omega}\left\{\frac{1}{n} \sum_{i=1}^n s_{+,in}(h) \left( d^2\left(\omega, Y_i\right)-d^2\left(m_+, Y_i\right)\right)\right\}\right| \\
 & \leq \sup _{\omega \in \Omega}\left|\frac{1}{n} \sum_{i=1}^n s_{+,in}(h) \left( d^2\left(\omega, Y_i\right)-d^2\left(m_+, Y_i\right)\right) - \left(\lim_{x \to c^+}E\left[d^2(\omega, Y)\mid X=x\right] - V_+\right) \right| \\
& \coloneqq \sup _{\omega \in \Omega}\left|M_n(\omega)\right|,
\end{aligned}
$$
analogous to the proof of Lemma \ref{lemma:clt_first_term}.

Further mimicking the steps in that proof, define, 
\[
B_{m_+, R} \coloneqq \left\{\sup_{\omega \in \Omega}\left|\frac{1}{n} \sum_{i=1}^n\left[s_{+, i n}(h)-s_{+, i}(h)\right] D_{m_\oplus, i}(\omega)\right| \leq R   \mathrm{diam}^2(\Omega) (n h)^{-1 / 2}\right\}
\]
noting that I replaced $\delta$ with $\mathrm{diam}(\Omega)$ using the boundedness of the metric space to bound the supremum over $\omega$, as opposed to using $d(\omega, m_{+,\oplus}) < \delta$ in Lemma \ref{lemma:clt_first_term}. Similarly, adapting the function class in Lemma \ref{lemma:l_hat-l_tilde} to, 
\[
\mathcal{M}_{m_+, n \delta}=\left\{g_\omega-g_{m_{+, \oplus}}: \omega \in \Omega\right\},
\]
an envelope function for it is,
\[
G_{m_+, n \delta}(z)=\frac{4 \operatorname{diam}^2(\Omega) }{\sigma_{+, 0}^2} 1(z \geq c) K_h(z-c)\left|\mu_{+, 2}-\mu_{+, 1}(z-c)\right|,
\]
where $E[G_{m_+, n \delta}(X)] = O(h^{-1})$. Again using Theorems 2.7.11 and 2.14.2 of \citet{vaart1996weak} together with the finite entropy integral in \ref{asspT:entropy}, 
\[
E\left(\sup_{\omega \in \Omega}\left|\frac{1}{n} \sumn s_{+, i}(h) D_{m_{\oplus},i}(\omega, x)-E\left[s_{+, i}(h) D_{m_\oplus,i}(\omega, x)\right]\right|\right)=O\left((n h)^{-1 / 2}\right).
\]
Combining, we obtain $E\left( \mathrm{1}_{B_{m_{\oplus},R}} \sup _{\omega \in \Omega}\left|M_n(\omega)\right|\right) \leq a (nh)^{-1 / 2}$, for some $a>0$. Further, using the definition of $B_n(\omega)$ in Lemma \ref{lemma:clt_first_term}, we also have by the results in Lemma \ref{lemma:fan_gijbels},
\[
\sup_{\omega \in \Omega} |B_n(\omega)| \leq b 2\mathrm{diam}^2(\Omega) h^2
\]
for some $b>0$. 
Combining these three bounds, we get for some $b>0$, by Markov's inequality, 
\[
C_n \leq 2 a \epsilon^{-1} (nh)^{-1/2} + 2 b \epsilon h^2 + P(B_{m_{\oplus}, R}).
\]
To bound $D_n$, note that,
\begin{align*}
& \left|\frac{1}{n} \sum_{i=1}^n s_{+,in}(h) d^2\left(m_+, Y_i\right)-V_+\right| \\
& \leq \sup _{\omega \in \Omega}\left|\frac{1}{n} \sum_{i=1}^n s_{+,in}(h) d^2\left(\omega, Y_i\right)-\lim_{x \to c^+}E\left[d^2(\omega, Y) \mid X=x\right]\right| \coloneqq \sup _{\omega \in \Omega}\left|H_n(\omega)\right|.
\end{align*}
By similar arguments, but using $d^2(Y_i, \omega)$ instead of $D_{m_{\oplus},i}(\omega)$ to analogously define $B_{\omega,R}$, one obtains that
$$
E\left(\sup _{\omega \in \Omega}\left|H_n(\omega)\right|\right) \leq d (nh)^{-1 / 2} + e h^{-2}  + P(B_{\omega, R}),
$$
for some $d,e>0$ which again depends on the finite entropy integral $J$ and an analogous envelope function that bounds a function of $d^2(Y_i, \omega)$. By Markov's inequality, 
\[
D_n \leq \epsilon^{-1}  (d (nh)^{-1 / 2} + e h^2) + P(B_{\omega, R}).
\]
Combining, we obtain that,
\[
\sup_{P \in \mathcal{P}} P(|\hat{V}_+ - V_+| > \epsilon) \leq \epsilon^{-1} \left( (a+c) (nh)^{-1/2} + (b+d)h^2  \right) +   P(B_{m_{\oplus}, R}) +  P(B_{\omega, R}) \to 0.
\]

\item Proving the uniform consistency of $\hat{\sigma}^2$ only requires proving it for $\frac1n \sumn s_{+,in} d^2(\hat{l}_+, Y_i)$ and its target $\lim_{x \to c^+}E\left[ d^4(m_+, Y) \mid X=x \right]$ since the rest follows from (a) by continuity. As in the proof of (a), we have,
\begin{align*}
& P\left[\left|\frac{1}{n} \sum_{i=1}^n s_{+,in}(h) d^4\left(\hat{l}_+, Y_i\right)-\lim_{x \to c^+}E\left[d^4(m_+, Y) \mid X=x\right]\right|>\epsilon\right] \\
= & P\left[ \left|\frac{1}{n} \sum_{i=1}^n  s_{+,in}(h) \left( d^4\left(\hat{l}_+, Y_i\right)- d^4\left(m_+, Y_i\right)\right) \right. \right. \\
& \left. \left. +\frac{1}{n} \sum_{i=1}^n s_{+,in}(h) d^4\left(m_+, Y_i\right)-\lim_{x \to c^+}E\left[d^4(m_+, Y) \mid X=x\right]\right|>\epsilon\right] \\
\leq & A_n+B_n
\end{align*}
where
\begin{align*}
& A_n=P\left\{\left|\frac{1}{n} \sum_{i=1}^n s_{+,in}(h) \left( d^4\left(\hat{l}_+, Y_i\right)- d^4\left(m_+, Y_i\right) \right)\right|>\epsilon / 2\right\}, \\
& B_n=P\left[\left| \frac{1}{n} \sum_{i=1}^n s_{+,in}(h) d^4\left(m_+, Y_i\right)-\lim_{x \to c^+}E\left[d^4(m_+, Y)\mid X=x\right]\right|>\epsilon / 2\right].
\end{align*}
Since we have,
\begin{align*}
& \left|\frac{1}{n} \sum_{i=1}^n s_{+,in}(h) \left( d^4\left(\hat{l}_+, Y_i\right)- d^4\left(m_+, Y_i\right) \right) \right|  \\ 
& \leq 2 \operatorname{diam}^2(\Omega) \left|\frac{1}{n} \sum_{i=1}^n s_{+,in}(h) \left( d^2\left(\hat{l}_+, Y_i\right)-d^2\left(m_+, Y_i\right) \right) \right|,
\end{align*}
the proof of (a) implies,
\[
A_n \leq  O_p((nh)^{-1/2} + h^2) + o(1).
\]
Furthermore, 
\begin{align*}
& \left|\frac{1}{n} \sum_{i=1}^n s_{+,in}(h) d^4\left(m_+, Y_i\right)-\lim_{x \to c^+}E\left[d^4(m_+, Y) \mid X=x\right]\right| \\ 
& \leq \sup _{\omega \in \Omega}\left|\frac{1}{n} \sum_{i=1}^n s_{+,in}(h) d^4\left(\omega, Y_i\right)-\lim_{x \to c^+}E\left[d^4(\omega, Y) \mid X =x\right]\right|\coloneqq \sup _{\omega \in \Omega}\left|K_n(\omega)\right|.
\end{align*}
Similar arguments as in (a), but using $d^4(Y_i, \omega)$ instead of $D_{m_{\oplus},i}(\omega)$, imply for $c>0$,
\[ 
E\left[ \sup_{\omega \in \Omega} |K_n(\omega) | \right] \leq  O_p((nh)^{-1/2} + h^2) + o(1).
\]
Combining, we obtain,
\[
\sup_{P \in \mathcal{P}} P\left[\left|\frac{1}{n} \sum_{i=1}^n s_{+,in}(h) d^4\left(\hat{l}_+, Y_i\right)-\lim_{x \to c^+}E\left[d^4(m_+, Y) \mid X=x\right]\right|>\epsilon\right] \to 0. \quad \qedsymbol
\]

\item Observe that 
$$
\begin{aligned}
P(|\hat{V}_p - V_p| > \epsilon) 
&= P\bigg[\bigg| \frac{1}{n} \sum_{i=1}^n s_{p,in}(h) \big( d^2(\hat{l}_p, Y_i) - d^2(m_p, Y_i) \big) \\
&\quad + \frac{1}{n} \sum_{i=1}^n s_{p,in}(h) d^2(m_p, Y_i) - V_p \bigg| > \epsilon \bigg] \\
&\leq C_{p,n} + D_{p,n},
\end{aligned}
$$
where
\begin{align*}
C_{p,n} & =P\left\{\left|\frac{1}{n} \sum_{i=1}^n s_{p,in}(h) \big( d^2(\hat{l}_p, Y_i) - d^2(m_p, Y_i) \big)\right|>\epsilon / 2\right\}, \\
D_{p,n} & =P\left[\left|\frac1n \sum_{i=1}^n s_{p,in}(h) d^2(m_p, Y_i) - V_p\right|>\epsilon / 2\right].
\end{align*}
Since $\inf_{\omega \in \Omega} \left(M_p(\omega) - V_p\right)=0$, we have
$$
\begin{aligned}
& \left|\frac{1}{n} \sum_{i=1}^n s_{p,in}(h) \left( d^2\left(\hat{l}_p, Y_i\right)- d^2\left(m_p, Y_i\right) \right) \right| \\
& =\left|\inf _{\omega \in \Omega}\left\{\frac{1}{n} \sum_{i=1}^n s_{p,in}(h) \left( d^2\left(\omega, Y_i\right)-d^2\left(m_p, Y_i\right)\right)\right\}\right| \\
 & \leq \sup _{\omega \in \Omega}\left|\frac{1}{n} \sum_{i=1}^n s_{p,in}(h) \left( d^2\left(\omega, Y_i\right)-d^2\left(m_p, Y_i\right)\right)-\left(M_p(\omega) - V_p\right) \right| \\
& \coloneqq \sup _{\omega \in \Omega}\left|M_{p,n}(\omega)\right|.
\end{aligned}
$$
Then the rest of the proof follows from the same proof strategy as in (a), but using the bounds established in Proposition \ref{prop:pooled_consistency}. \qedhere
\end{list}
\end{proof}

\section{Additional Results} \label{app:add_results}

\subsection{Tables}

\begin{table}[htbp!]
\scriptsize
\begin{tabular}{lc}
\toprule
Covariate & Coefficient \\
 & (Std. Error) \\
\midrule
Age & 1.87 \\
 & (6.46) \\
[0.5ex]
Female & 0.06 \\
 & (0.243) \\
[0.5ex]
Education Level & -0.082 \\
 & (0.95) \\
[0.5ex]
Work in Different State & -0.046 \\
 & (0.099) \\
[0.5ex]
WFH Any & -0.129 \\
 & (0.229) \\
[0.5ex]
WFH Full-Time & 0.088 \\
 & (0.214) \\
[0.5ex]
WFH 0 days/week & 0.101 \\
 & (0.224) \\
[0.5ex]
WFH 1 day/week & -0.036 \\
 & (0.042) \\
[0.5ex]
WFH 2 days/week & -0.034 \\
 & (0.042) \\
[0.5ex]
WFH 3 days/week & 0 \\
 & (0) \\
[0.5ex]
WFH 4 days/week & -0.165 \\
 & (0.114) \\
[0.5ex]
WFH 5 days/week & 0.112 \\
 & (0.202) \\
\midrule
Observations & 569 \\
\bottomrule
\end{tabular}

\caption{Non-Compete Application: Scalar-Valued RD Estimates}
\label{tab:rdd}
\floatfoot{\scriptsize \textit{Note:} Each row reports the RD estimate for the discontinuity in the specified covariate at the annual
  income threshold above which non-compete agreements become enforceable, using the local linear estimator of \citet{calonico2014robust}
  with MSE-optimal bandwidth selection. Robust standard errors in parentheses. \\
  Significance levels: * \(p<0.10\), ** \(p<0.05\), *** \(p<0.01\).}
\end{table}

\begin{table}[ht]
  \centering
  \footnotesize
  \caption{Aggregation of the 26 EORA Sectors into Eight Categories}
  \label{tab:eora_agg_map_short}
  \begin{tabular}{@{}lp{10cm}@{}}
    \toprule
    \textbf{Aggregate Sector} & \textbf{Constituent EORA Sectors} \\
    \midrule
    Agri.\ \& Fish.       & Agriculture; Fishing \\
    Mining                & Mining and Quarrying \\
    Manufacturing         & Food \& Beverages; Textiles and Wearing Apparel; Wood and Paper; Petroleum, Chemical and Non-Metallic Mineral Products; Metal Products; Electrical and Machinery; Transport Equipment; Other Manufacturing; Recycling \\
    Utilities             & Electricity, Gas and Water \\
    Construction          & Construction; Maintenance and Repair \\
    Trade \& Hosp.        & Wholesale Trade; Retail Trade; Hotels and Restaurants; Transport; Post and Telecommunications; Re-export \& Re-import \\
    Finance \& Business   & Financial Intermediation and Business Activities \\
    Public \& Other       & Public Administration; Education, Health and Other Services; Private Households; Others \\
    \bottomrule
  \end{tabular}
\end{table}

\begin{table}
\footnotesize
\caption{I-O Network Application: Scalar-Valued RD Estimates}
\begin{tabular}[t]{lr}
\toprule
Outcome & Coefficient \\
\midrule
Centrality & -0.0331 \\
  & (0.1075) \\
Economic Complexity & 0.2702 \\
  & (0.4452) \\
Manufacturing Intensity & -0.0181 \\
  & (0.1168) \\
Multiplier Effect & -0.4592 \\
  & (1.4789) \\
Services Intensity & -0.0091 \\
  & (0.0187) \\
\midrule
\multicolumn{1}{l}{Observations} & \multicolumn{1}{r}{140} \\
\bottomrule
\end{tabular}

\label{tab:io_scalar_rd}
\floatfoot{\scriptsize \textit{Note}: each row reports the regression discontinuity estimate for the jump in the specified scalar outcome at the GNI per capita cutoff which triggers GSP graduation, using the local linear estimator of \citet{calonico2014robust} with MSE-optimal bandwidth. Robust bias‐corrected standard errors are shown in parentheses.

 ``Centrality'' is the average column‐sum of the technical‐coefficient matrix $A$, capturing how much intermediate input the economy supplies per unit of output; ``Economic Complexity'' is the average number of inputs per sector whose share in $A$ exceeds 1\%, a proxy for the diversity of domestic production; ``Manufacturing Intensity'' and ``Services Intensity'' are the mean diagonal entries of $A$ for the manufacturing and services aggregates, respectively, measuring each sector’s reliance on its own output; and ``Multiplier Effect'' is the average column‐sum of the Leontief inverse ($(I - A)^{-1}$), which measures total (direct + indirect) output generated by a one‐unit increase in final demand.}
\end{table}

\begin{table}
\small
\centering
\caption{Preferential Tariff Loss: Industry Pairs with Largest Changes}
\begin{subtable}{0.45\textwidth}
\centering
 \begin{center}
\caption{Top 5 Increases}
\centering
\begin{tabular}[t]{rllr}
\toprule
Rank & Sector A & Sector B & Change\\
\midrule
1 & Construction & Public \& Other & 0.108\\
2 & Mining & Construction & 0.062\\
3 & Mining & Utilities & 0.039\\
4 & Utilities & Construction & 0.034\\
5 & Utilities & Public \& Other & 0.034\\
\bottomrule
\end{tabular}
\end{center} 
\end{subtable}\hfill
\begin{subtable}{0.45\textwidth}
  \centering
 \begin{center}

\caption{Top 5 Decreases}
\centering
\begin{tabular}[t]{rllr}
\toprule
Rank & Sector A & Sector B & Change\\
\midrule
1 & Mining & Manufacturing & -0.468\\
2 & Mining & Trade \& Hosp. & -0.361\\
3 & Manufacturing & Construction & -0.257\\
4 & Manufacturing & Trade \& Hosp. & -0.209\\
5 & Construction & Trade \& Hosp. & -0.198\\
\bottomrule
\end{tabular}
\end{center} 
\end{subtable}

\label{tab:linkage_ranks}
\end{table}

\end{supplement}
\end{document}